\title{Efficient Algorithms for Ortho-Radial Graph Drawing}
\titlerunning{Efficient Algorithms for Ortho-Radial Graph Drawing}%
\author{Benjamin Niedermann}{University of Bonn}{niedermann@uni-bonn.de}{https://orcid.org/0000-0001-6638-7250}{}%
\author{Ignaz Rutter}{University of 
Passau}{rutter@fim.uni-passau.de}{https://orcid.org/0000-0002-3794-4406}{}
\author{Matthias Wolf}{Karlsruhe Institute of 
Technology}{matthias.wolf@kit.edu}{https://orcid.org/0000-0003-1411-6330}{}
\authorrunning{B. Niedermann, I. Rutter, and M. Wolf}%
\keywords{Graph Drawing, Ortho-Radial Graph Drawing, Ortho-Radial 
Representation, Topology-Shape-Metrics, Efficient Algorithms}%
\theoremstyle{definition}
\declaretheorem[name=Theorem,numberlike=theorem,style=plain]{restatable-theorem}
\declaretheorem[name=Lemma,numberlike=theorem,style=plain]{restatable-lemma}
\newcommand{\widebar}[1]{\mkern 
	1.5mu\overline{\mkern-1.5mu#1\mkern-1.5mu}\mkern 1.5mu}
\newcommand{\subpath}[2]{\ensuremath{#1[#2]}}
\newcommand{\reverse}[1]{\ensuremath{\widebar{#1}}}
\renewcommand{\O}{\ensuremath{O}}
\newcommand{\flip}[1]{\reverse{#1}}
\newcommand{\mirror}[1]{\hat{#1}}
\DeclareMathOperator{\rot}{rot}
\newcommand{\T}{\mathrm{T}}
\newcommand{\B}{\mathrm{B}}
\begin{document}

\maketitle

\begin{abstract}
  Orthogonal drawings, i.e., embeddings of graphs into grids, are a
  classic topic in Graph Drawing.  Often the goal is to find a drawing
  that minimizes the number of bends on the edges.  A key ingredient
  for bend minimization algorithms is the existence of an
  \emph{orthogonal representation} that allows to describe such
  drawings purely combinatorially by only listing the angles between
  the edges around each vertex and the directions of bends on the
  edges, but neglecting any kind of geometric information such as
  vertex coordinates or edge lengths.

  Barth et al.~\cite{bnrw-ttsmford-17} have established the existence
  of an analogous \emph{ortho-radial representation} for
  \emph{ortho-radial drawings}, which are embeddings into an
  ortho-radial grid, whose gridlines are concentric circles around the
  origin and straight-line spokes emanating from the origin but
  excluding the origin itself. While any orthogonal representation
  admits an orthogonal drawing, it is the circularity of the
  ortho-radial grid that makes the problem of characterizing valid
  ortho-radial representations all the more complex and interesting.
  Barth et al.~prove such a characterization.  However, the proof is
  existential and does not provide an efficient algorithm for testing
  whether a given ortho-radial representation is valid, let alone
  actually obtaining a drawing from an ortho-radial representation.

  In this paper we give quadratic-time algorithms for both of these
  tasks.  They are based on a suitably constrained left-first DFS in
  planar graphs and several new insights on ortho-radial
  representations.  Our validity check requires quadratic time, and a
  naive application of it would yield a quartic algorithm for
  constructing a drawing from a valid ortho-radial
  representation. Using further structural insights we speed up the
  drawing algorithm to quadratic running time.
\end{abstract}

\section{Introduction}
\label{sec:introduction}

Grid drawings of graphs embed graphs into grids such that vertices map
to grid points and edges map to internally disjoint curves on the grid
lines that connect their endpoints.  Orthogonal grids, whose grid
lines are horizontal and vertical lines, are popular and widely used
in graph drawing. Among others, orthogonal graph drawings are applied
in VLSI design (e.g.,~\cite{Valiant1981,Bhatt1984}), diagrams
(e.g.,~\cite{Batini1986,Gutwenger2003,Eiglsperger2004,Wybrow2010}), and
network layouts (e.g.,~\cite{Ruegg2014,Kieffer2016}).  They have been extensively studied
with respect to their construction and properties
(e.g.,~\cite{Tamassia1991,Biedl1996,Biedl1998,Papakostas1998,Alam2017}).  
Moreover, they
have been generalized to arbitrary planar graphs with degree higher
than four (e.g.,~\cite{Tamassia1988,Fossmeier1996,Biedl1997}).

\begin{figure}[t]
\begin{minipage}[b]{0.52\textwidth}
  \centering
  \begin{subfigure}[b]{0.48\textwidth}
    \centering
    \includegraphics{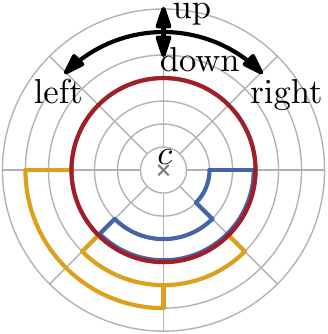}
    \caption{Ortho-radial grid.}
    \label{fig:pre:drawing-grid}
  \end{subfigure}
  \hfill
  \begin{subfigure}[b]{0.44\textwidth}
    \centering
    \includegraphics{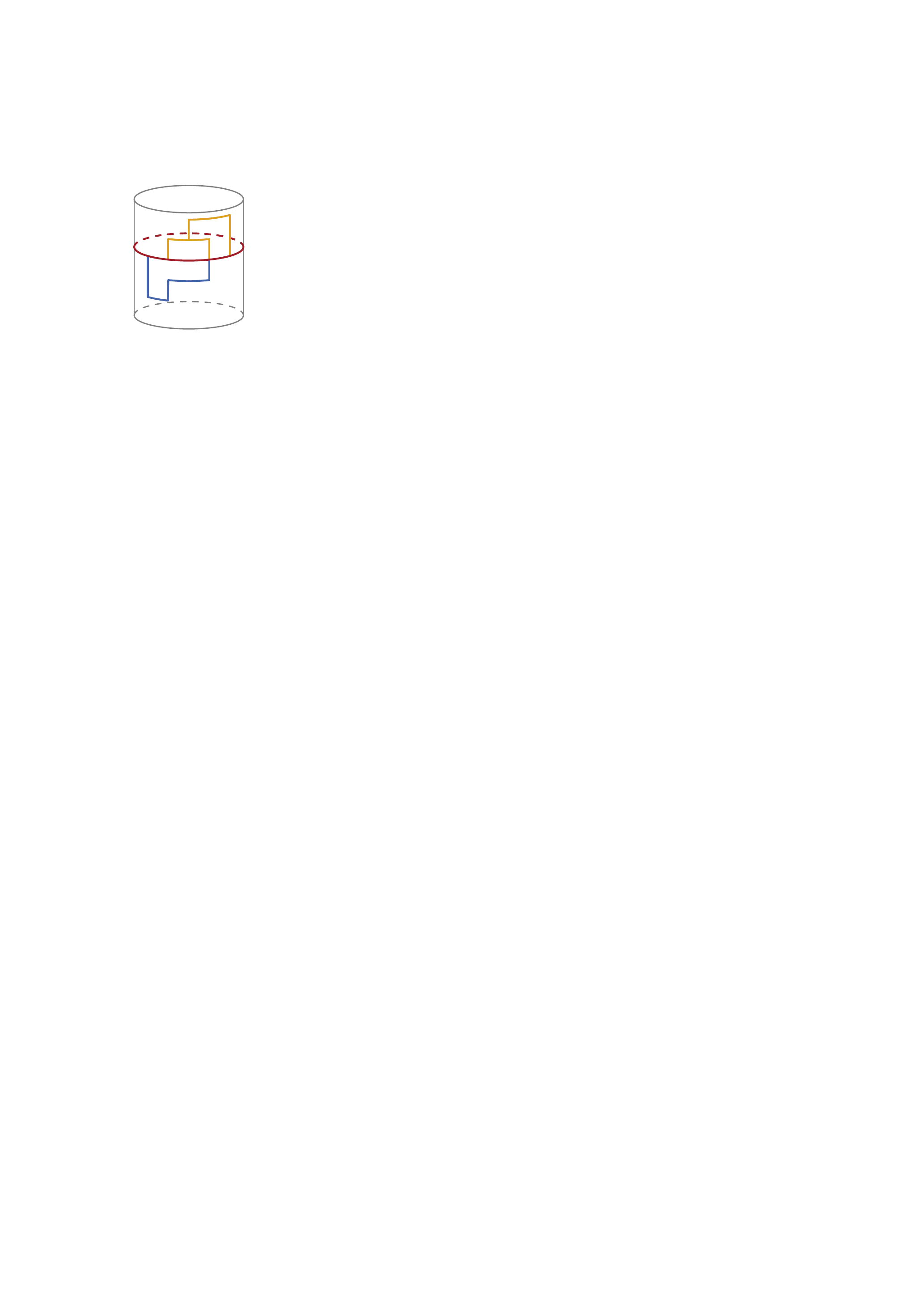}
    \caption{Cylinder drawing.}
    \label{fig:pre:drawing-cylinder}
  \end{subfigure}
  \hfill
  \caption{An ortho-radial drawing of a graph on a grid
    \protect(\subref{fig:pre:drawing-grid}) and its equivalent interpretation
    as an orthogonal drawing on a cylinder
    \protect(\subref{fig:pre:drawing-cylinder}).}
  \label{fig:pre:drawing}
\end{minipage}
\hfill
 \begin{minipage}[b]{0.46\textwidth}
   \centering
   \includegraphics[width=.9\textwidth]{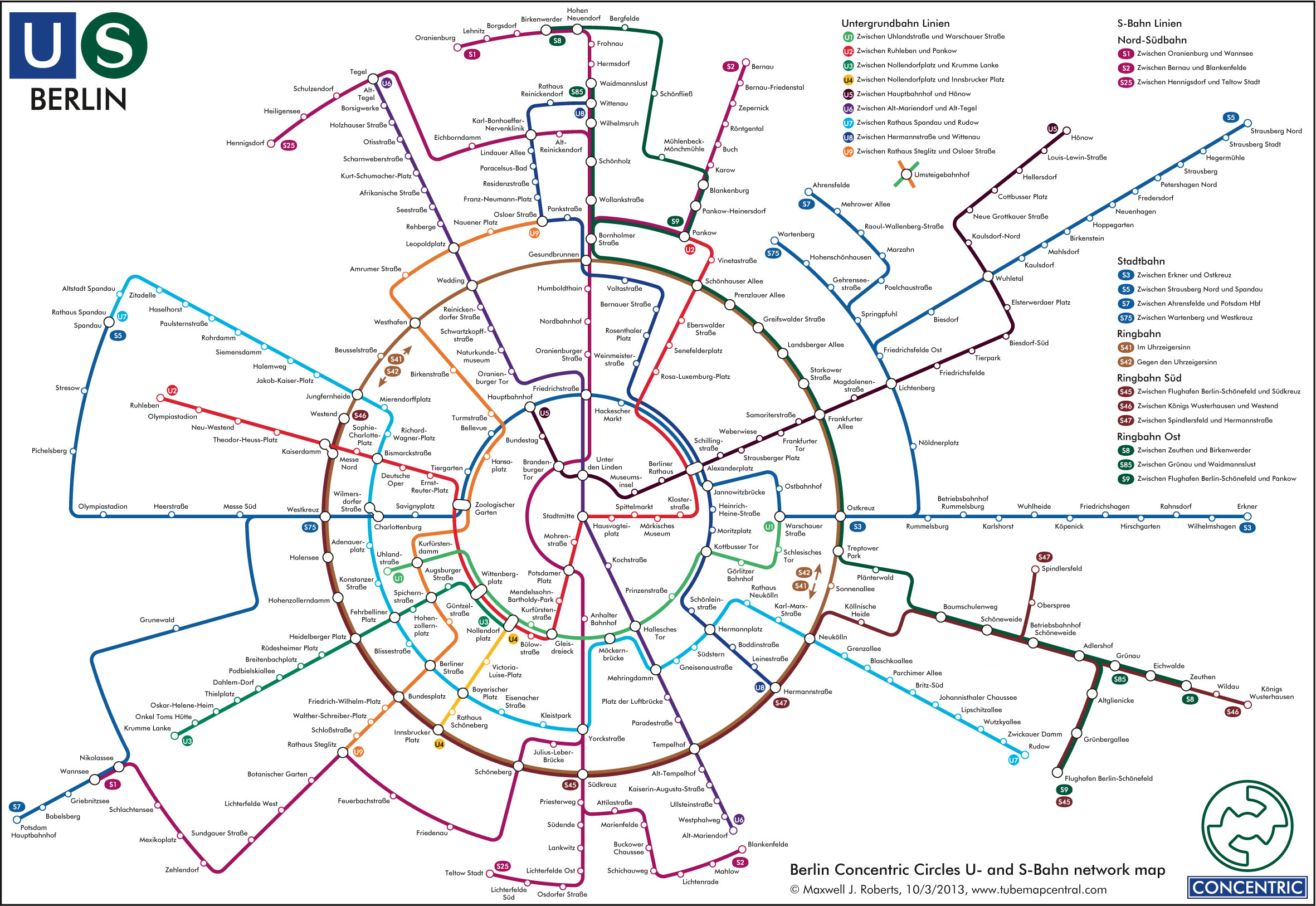}
   
   \caption{Metro map of Berlin using an ortho-radial 
   layout\protect\footnotemark{}.
   Image
     copyright by Maxwell J.~Roberts. Reproduced with permission.}
   \label{fig:intro:berlin}
 \end{minipage} 
\end{figure}
\footnotetext{Note that ortho-radial
  drawings exclude the center of the grid, which is slightly different
  to the concentric circles maps by Maxwell J.\ Roberts.}

Ortho-radial drawings are a generalization of orthogonal drawings to
grids that are formed by concentric circles and straight-line spokes
from the center but excluding the center.  Equivalently, they can be viewed as graphs drawn in an orthogonal fashion on the
surface of a standing cylinder, see Figure~\ref{fig:pre:drawing}, or a
sphere without poles. Hence, they naturally bring orthogonal graph
drawings to the third dimension.  

Among other applications, ortho-radial drawings are used to
  visualize network maps; see Figure~\ref{fig:intro:berlin}.
  Especially, for metro systems of metropolitan areas they are highly
  suitable. Their inherent structure emphasizes the city center, the
  metro lines that run in circles as well as the metro lines that lead to
  suburban areas. While the automatic creation of metro maps has been
  extensively studied for other layout
  styles~(e.g.,~\cite{Hong2006,Noellenburg2011,Wang2011,Fink2013}),
  this is a new and wide research field for ortho-radial drawings.

Adapting existing techniques and objectives from orthogonal
  graph drawings is a promising step to open up that
  field.  One main objective in orthogonal
graph drawing is to minimize the number of bends on the edges.  The
core of a large fraction of the algorithmic work on this problem is
the \emph{orthogonal representation}, introduced by
Tamassia~\cite{t-emn-87}, which describes orthogonal drawings listing
\begin{inparaenum}[(i)]
\item the angles formed by consecutive edges around each vertex and
\item the directions of bends along the edges.
\end{inparaenum}
Such a representation is \emph{valid} if
\begin{inparaenum}[(I)]
\item the angles around each vertex sum to $360\degree$, and
\item the sum of the angles around each face with $k$ vertices is $(k-2)\cdot 
180\degree$ for internal faces and $(k+2)\cdot 180\degree$ for the outer face.
\end{inparaenum}
The necessity of the first condition is obvious and the necessity of
the latter follows from the sum of inner/outer angles of any polygon
with $k$ corners.  It is thus clear that any orthogonal drawing yields
a valid orthogonal representation, and Tamassia~\cite{t-emn-87} showed
that the converse holds true as well; for a valid orthogonal
representation there exists a corresponding orthogonal drawing that
realizes this representation.  Moreover, the proof is constructive and
allows the efficient construction of such a drawing, a process that is
referred to as \emph{compaction}.

Altogether this enables a three-step approach for computing orthogonal
drawings, the so-called \emph{Topology-Shape-Metrics Framework}, which
works as follows.  First, fix a \emph{topology}, i.e., combinatorial
embedding of the graph in the plane (possibly planarizing it if it is
non-planar); second, determine the \emph{shape} of the drawing by
constructing a valid orthogonal representation with few bends; and
finally, compactify the orthogonal representation by assigning
suitable vertex coordinates and edge lengths (\emph{metrics}).  As
mentioned before, this reduces the problem of computing an orthogonal
drawing of a planar graph with a fixed embedding to the purely
combinatorial problem of finding a valid orthogonal representation,
preferably with few bends.  The task of actually creating a
corresponding drawing in polynomial time is then taken over by the
framework.  It is this approach that is at the heart of a large body
of literature on bend minimization algorithms for orthogonal
drawings (e.g., \cite{Bertolazzi2000,Eiglsperger2003,Cornelsen2012,Felsner2014,%
Blasius2016,Blasius2016b,Chang2017}).

Very recently Barth et al.~\cite{bnrw-ttsmford-17} proposed a
generalization of orthogonal representations to ortho-radial drawings,
called \emph{ortho-radial} representations, with the goal of
establishing an ortho-radial analogue of the TSM framework for
ortho-radial drawings. They show that a natural generalization of the
validity conditions (I) and (II) above is not sufficient, and
introduce a third, less local condition that excludes so-called
\emph{monotone cycles}, which do not admit an ortho-radial drawing.
They show that these three conditions together fully characterize
ortho-radial drawings.
Before that, characterizations for bend-free ortho-radial drawings
were only known for paths, cycles and theta
graphs~\cite{hht-orthoradial-09}. Further, for the special case that
each internal face is a rectangle, a characterization for cubic graphs
was known~\cite{hhmt-rrdcp-10}.

With the result by Barth et al.~finding an ortho-radial
  drawing for a planar graph with fixed-embedding reduces to the
  purely combinatorial problem of finding a valid ortho-radial
  representation. In particular, since bends can be seen as
  additionally introduced vertices subdividing edges, finding an
  ortho-radial drawing with minimum number of bends reduces to  finding a valid ortho-radial
  representation with minimum number of such additionally introduced vertices. In this sense, the work by Barth et
  al.~constitutes a major step towards computing ortho-radial drawings
  with minimum number of bends.

Yet, it is here where their work still contains a major gap. While
the work of Barth et al.\ shows that valid ortho-radial
representations fully characterize ortho-radial drawings, it is
unclear if it can be checked efficiently whether a given ortho-radial
representation is valid.  Moreover, while their existential proof of a
corresponding drawing is constructive, it needs to repeatedly test
whether certain ortho-radial representations are valid.

\subparagraph*{Contribution and Outline.} We develop such a
test running in quadratic time, thus implementing the compaction step
of the TSM framework with polynomial running time.  While this does
not yet directly allow us to compute ortho-radial drawings with few bends, our
result paves the way for a purely combinatorial treatment of bend
minimization in ortho-radial drawings, thus enabling the same type of
tools that have proven highly successful in minimizing bends in
orthogonal drawings. 

At the core of our validity testing algorithm are several new insights
into the structure of ortho-radial representations.  The algorithm
itself is a left-first DFS that uses suitable constraints to determine
candidates for monotone cycles in such a way that if a given
ortho-radial representation contains a monotone cycle, then one of the
candidates is monotone. While it may be obvious to use a DFS for
  finding cycles in general, it is far from clear how such a search
  works for monotone cycles in ortho-radial representations.
Plugging this test as a black box into the drawing algorithm of Barth
et al.\ yields an $\O(n^{4})$-time algorithm for computing a drawing
from a valid ortho-radial representation, where $n$ is the number of
vertices.  Using further structural insights on the augmentation
process we improve the running time of this algorithm to
$\O(n^{2})$. Hence, our result is not only of theoretical
  interest, but the algorithm can be actually deployed. We believe
  that the algorithm is a useful intermediate step for providing
  initial network layouts to map designers and layout algorithms such
  as force directed algorithms; see also
  Section~\ref{sec:conclusion}.

In Section~\ref{sec:preliminaries} we present preliminaries that are
used throughout the paper.  First we formally define ortho-radial
representations and recall the most important results
from~\cite{bnrw-ttsmford-17}.  Afterwards, in Section~\ref{sec:symmetries-normalization-appendix}, we show that for the purpose
of validity checking and determining the existence of a monotone
cycle, we can restrict ourselves to so-called \emph{normalized
  instances}. In Section~\ref{sec:finding_monotone_cycles} we give a
validity test for ortho-radial representations that runs in
$\O(n^{2})$ time.  Afterwards, in Section~\ref{sec:rectangulation}, we
revisit the rectangulation procedure from~\cite{bnrw-ttsmford-17} and
show that using the techniques from
Section~\ref{sec:finding_monotone_cycles} it can be implemented to run
in $\O(n^{2})$ time, improving over a naive application which would
yield running time $\O(n^{4})$.  Together with~\cite{bnrw-ttsmford-17}
this enables a purely combinatorial treatment of ortho-radial
drawings.  We conclude with a summary and some open questions in
Section~\ref{sec:conclusion}.

\section{Preliminaries}
\label{sec:preliminaries}

We first formally introduce ortho-radial drawings and ortho-radial
representations.  Afterwards we present two transformations that we
use to simplify the discussion of symmetric cases.

\subsection{Ortho-Radial Drawings and Representations}
We use the same definitions and conventions on ortho-radial drawings
as presented by Barth et al.~\cite{bnrw-ttsmford-17}; for the
convenience of the reader we briefly repeat them here.  In particular,
we only consider drawings and representations without bends on the
edges.  As argued in~\cite{bnrw-ttsmford-17}, this is not a
restriction, since it is always possible to transform a
drawing/representation with bends into one without bends by
subdividing edges so that a vertex is placed at each bend.

We are given a planar 4-graph $G=(V,E)$ with $n$ vertices and fixed
embedding, where a graph is a 4-graph if it has only
vertices with degree at most four.  We define that a path $P$ in $G$
is always simple, while a cycle $C$ may contain vertices multiple
times but may not cross itself.
 All cycles are oriented clockwise, so that 
their interiors
are locally to the right.  A cycle is part of its interior and
exterior.  We denote the subpath of $P$ from $u$ to $v$ by $P[u,v]$
assuming that $u$ and $v$ are included.  For any path
$P=v_1,\dots,v_k$ its \emph{reverse} is %
$\reverse{P}=v_k,\dots,v_1$. The concatenation of two paths $P_1$ and
$P_2$ is written as $P_1+P_2$. For a cycle~$C$ in $G$ that contains
any edge at most once, the subpath $C[e,e']$ between two edges $e$ and
$e'$ on $C$ is the unique path on $C$ that starts with $e$ and ends
with $e'$. If the start vertex~$u$ of $e$ is contained in $C$ only
once, we also write $C[u,e']$, because then $e$ is uniquely
defined by $u$. Similarly, if the end vertex $v$ of $e'$ is contained
in $C$ only once, we also write $C[e,v]$.
We also use this notation to refer to subpaths of simple paths.

In an ortho-radial drawing~$\Delta$ of $G$ each edge is directed and drawn
either clockwise, counter-clockwise, towards the center or away from
the center.  Hence, using the metaphor of a cylinder, the edges
point \emph{right}, \emph{left}, \emph{down} or \emph{up},
respectively. Moreover, \emph{horizontal edges} point left or right, while
\emph{vertical edges} point up or down; see Figure~\ref{fig:pre:drawing}.

We distinguish two types of simple cycles. If the center of the grid
lies in the interior of a simple cycle, the cycle is \emph{essential}
and otherwise \emph{non-essential}. Further, there is an unbounded
face in $\Delta$ and a face that contains the center of the grid; we
call the former the \emph{outer face} and the latter the \emph{central
  face}; in our drawings we mark the central face using a small ``x''. All 
  other faces are \emph{regular}.
  
For two edges $uv$ and $vw$ incident to the same vertex $v$, we define
the \emph{rotation} $\rot(uvw)$ as $1$ if there is
a right turn at $v$, $0$ if $uvw$ is straight and $-1$ if there is a
left turn at $v$. In the special case that $u=w$, we have
$\rot(uvw)=-2$.

The rotation of a path $P=v_1,\dots,v_k$ is the sum of the rotations
at its internal vertices, i.e.,
$\sum_{i=2}^{k-1}\rot(v_{i-1}v_iv_{i+1})$.  Similarly, for a cycle
$C = v_1,\ldots,v_k,v_1$, its rotation is the sum of the rotations at
all its vertices (where we define $v_0 = v_k$ and $v_{k+1}=v_1$), i.e.,
$\rot(C) = \sum_{i=1}^{k} \rot(v_{i-1}v_{i}v_{i+1})$.  
We observe that $\rot(P)=\rot(P[s,e])+\rot(P[e,t])$ for any path~$P$
from $s$ to $t$ and any edge $e$ on $P$. Further, we have
$\rot(\reverse{P})=-\rot(P)$.  For a face $f$ we use $\rot(f)$ to denote the 
rotation of the facial cycle that bounds $f$ (oriented such that $f$ lies on 
the right side of the cycle).

As introduced by Barth et al.~\cite{bnrw-ttsmford-17}, an \emph{ortho-radial 
  representation} $\Gamma$ of a $4$-planar graph
$G$ fixes the central and outer face of $G$ as well as a
reference edge $e^\star$ on the outer face such that the outer face is
locally to the left of $e^\star$.  Following the convention established by 
Barth et al.~\cite{bnrw-ttsmford-17} the reference edge always points
right. Further, $\Gamma$ specifies for each face $f$ of $G$ a list $H(f)$
that contains for each edge~$e$ of $f$ the pair $(e,a)$, where
$a\in \{90\degree,180\degree,270\degree,360\degree\}$. The interpretation of 
$(e,a)$ is that the
edge~$e$ is directed such that the interior of $f$ locally lies to the
right of $e$ and $a$ specifies the angle inside $f$ from $e$ to
the following edge.
The notion of rotations can be extended to these descriptions since we can 
compute the angle at a vertex $v$ enclosed by edges $uv$ and $vw$ by summing 
the corresponding angles in the faces given by the $a$-values.
For such a description to be an ortho-radial 
representation, two local conditions need to be satisfied:

\begin{compactenum}
  \item\label{cond:repr:sum_of_angles} The angle sum of all edges around each 
  vertex given by the $a$-fields is 360\degree.
  \item\label{cond:repr:rotation_faces} For each face $f$, we have
  \[
  \rot(f)=
  \begin{cases}
  4, & \text{$f$ is a regular face} \\
  0, & \text{$f$ is the outer or the central face but not both} \\
  -4,& \text{$f$ is both the outer and the central face.} \\
  \end{cases}
  \]
\end{compactenum}

These conditions ensure that angles are assigned correctly around vertices and 
inside faces, which implies that all properties of rotations mentioned above 
hold. 
An ortho-radial representation $\Gamma$ of a graph $G$ is \emph{drawable} if 
there is a drawing $\Delta$ of $G$ embedded as specified by $\Gamma$ such that 
the corresponding angles in $\Delta$ and $\Gamma$ are equal and the reference 
edge $e^\star$ points to the right.
Unlike for orthogonal representations the two conditions do not guarantee 
that 
the ortho-radial representation is drawable. Therefore, Barth et 
al.~\cite{bnrw-ttsmford-17} introduced a third condition, which is formulated 
in terms of labelings of essential cycles.

For a simple, essential cycle $C$ in $G$
and a path $P$ from the target vertex~$s$ of the reference edge $e^\star$ to 
a 
vertex $v$ on $C$ the \emph{labeling} $\ell^P_C$ assigns to each edge $e$ on 
$C$ the label $\ell^P_C(e)=\rot(e^\star+P+C[v,e])$. In this paper we 
always assume that $P$ is \emph{elementary}, i.e., $P$ intersects $C$ only
at its endpoints. For these paths the labeling is independent of the actual 
choice of $P$, which was shown by Barth et al.~\cite{bnrw-ttsmford-17}. We 
therefore drop the superscript $P$ and write $\ell_C(e)$ for the labeling of an
edge $e$ on an essential cycle~$C$.
We call an essential cycle
\emph{monotone} if either all its labels are non-negative or all its
labels are non-positive. A monotone cycle is a \emph{decreasing} cycle
if it has at least one strictly positive label, and it is an \emph{increasing} 
cycle if $C$ has at least one strictly
negative label.  An ortho-radial representation is \emph{valid} if it
contains neither decreasing nor increasing cycles.  The validity of an
ortho-radial representation ensures that on each essential cycle with
at least one non-zero label there is at least one edge pointing up and
one pointing down.
The main theorem of Barth et al.~\cite{bnrw-ttsmford-17} can be stated as 
follows.\footnote{In the following we refer to the full  
version~\cite{bnrw-ttsmford-17-arxiv} of \cite{bnrw-ttsmford-17}, when citing 
lemmas and theorems.}

\begin{proposition}[Reformulation of Theorem~5 
  in~\cite{bnrw-ttsmford-17-arxiv}]\label{prop:characterization}
 An ortho-radial representation is drawable if and only if it is valid.
\end{proposition}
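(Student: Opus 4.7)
The plan is to prove the two implications separately. For the forward direction, that drawability implies validity, I would start from a concrete drawing $\Delta$ realizing the given representation. The angle-sum condition around each vertex and the per-face rotation condition follow directly from the geometry: angles around each vertex must total $360\degree$, and the rotation of each facial cycle is determined by whether the face is regular, the outer face, the central face, or both. The non-trivial part is ruling out monotone essential cycles. Here I would exploit the fact that the direction of every edge in $\Delta$ is encoded by $\ell_C(e) \bmod 4$ (right, down, left, up for residues $0,1,2,3$ respectively, since the reference edge points right), so the label sequence along a cycle $C$ completely determines its sequence of edge directions. The geometric requirement that $C$ close up vertically on the cylinder forces the number of up-edges and down-edges on $C$ to agree, and a case analysis based on the label pattern of a hypothetical monotone cycle would derive a contradiction with this balance whenever at least one label is strictly nonzero.

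For the reverse direction I would follow a constructive rectangulation strategy. The first observation is that any valid representation in which every regular face is a rectangle admits a straightforward drawing: angular positions of vertical spokes and radii of concentric circles can be derived via an LP or min-cost flow in the spirit of Tamassia's compaction, with validity ensuring feasibility. The remaining task is to augment an arbitrary valid representation into a rectangulated one while preserving validity. The augmentation proceeds iteratively: in each step I would pick a non-rectangular face, identify a reflex angle inside it, and insert a new edge (possibly subdividing an existing edge) that splits the face and eliminates the reflex. A potential argument — say, the total number of reflex angles summed over all faces — guarantees termination.

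The main obstacle will be showing that the augmentation step can always be carried out without converting a previously non-monotone essential cycle into a monotone one. This calls for a structural lemma asserting that in every valid but not yet rectangulated representation there exists at least one reflex corner together with a candidate edge insertion that simultaneously reduces the number of reflex corners and preserves the absence of monotone cycles. Proving such a lemma requires understanding how $\ell_C$ on essential cycles $C$ changes under the insertion and, in particular, showing that among the several candidate chords in a non-rectangular face one can always be chosen whose insertion does not push any label across the zero boundary. Once this invariant-preserving lemma is in place, the iteration terminates at a rectangulated valid representation, which by the initial observation is drawable, thereby establishing the equivalence.
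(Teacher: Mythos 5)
This proposition is not proved in the present paper at all: it is Theorem~5 of Barth et al.\ \cite{bnrw-ttsmford-17-arxiv}, imported as a black box, and the paper's contribution (Sections~\ref{sec:finding_monotone_cycles} and~\ref{sec:rectangulation}) is to make the test for validity and the rectangulation procedure \emph{efficient}, not to re-establish the characterization. So there is no in-paper proof to compare against.

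Judged on its own terms, your sketch correctly reproduces the high-level architecture that Barth et al.\ actually use: geometric necessity of conditions (I), (II), and absence of monotone cycles for the forward direction, and a rectangulate-then-compact strategy for the converse. However, both directions leave the genuinely hard content unaddressed. For necessity, your argument that ``the number of up-edges and down-edges on $C$ must agree'' is not quite right --- what must vanish is the total signed vertical displacement, not the edge counts --- and, more importantly, a decreasing cycle can perfectly well contain edges of all four directions (a label of $3$, say, points up even though it is non-negative), so the contradiction you want does not follow from a parity or counting argument on directions alone; it needs the subtler analysis Barth et al.\ carry out. For sufficiency, you explicitly defer the entire difficulty to a ``structural lemma'' guaranteeing an invariant-preserving chord; that lemma is precisely Facts~\ref{fact:vertical-candidate}--\ref{fact:consecutive-done} in Section~\ref{sec:rectangulation} (Lemmas~21, 22, 24--26 of \cite{bnrw-ttsmford-17-arxiv}), and it is the crux of the whole theorem. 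Stating that such a lemma ``is called for'' does not close the gap. In short: the decomposition of the problem is the right one, but the proposal is an outline of the proof strategy rather than a proof, and the one concrete claim it does commit to (balanced up/down edge counts) is incorrect as stated.
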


To that end, Barth et al.~\cite{bnrw-ttsmford-17-arxiv} prove the
following results among others. Since we use them throughout this
paper, we restate them for the convenience of the reader.
Both assume ortho-radial representations that are not necessarily valid.

\begin{proposition}[Lemma~12 in 
\cite{bnrw-ttsmford-17-arxiv}]\label{lem:repr:equal_labels_at_intersection}
Let $C_1$ and $C_2$ be two essential cycles and let $H=C_1+C_2$ be the
subgraph of $G$ formed by these two cycles. For any common edge $vw$
of $C_1$ and $C_2$ where $v$ lies on the central face of $H$, the labels of 
$vw$ are equal, i.e., $\ell_{C_1}(vw)=\ell_{C_2}(vw)$.
\end{proposition}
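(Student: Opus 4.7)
My proof exploits the invariance of $\ell_C$ under the choice of elementary path, established by Barth et al., and combines this with a rotation-accounting argument. First, both $C_1$ and $C_2$ are oriented clockwise with their interiors to the right, and since both are essential, the interior of each contains the center of the grid. The central face $f^\star$ of $H$ is the face of $H$ containing the center, so $f^\star$ lies in the interior of both $C_1$ and $C_2$. In particular the side of $vw$ facing $f^\star$ is the right side in both traversals, so $vw$ is directed the same way on $C_1$ and $C_2$; hence $C_1[v, vw]$ and $C_2[v, vw]$ agree as the directed edge from $v$ to $w$.

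Next, using that $v$ lies on the boundary of $f^\star$, I would pick elementary paths $P_1$ with respect to $C_1$ and $P_2$ with respect to $C_2$, both from $s$ to $v$, whose last edges lie in a common face of $G$ inside $f^\star$ and incident to $v$. This aligns the final turning angles at $v$ from $P_i$ to $vw$ in both label expressions, so the equality $\ell_{C_1}(vw) = \ell_{C_2}(vw)$ reduces to $\rot(e^\star + P_1) = \rot(e^\star + P_2)$.

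I would then compare the two prefix rotations by analyzing the closed walk $W = P_1 + \reverse{P_2}$ from $s$ back to $s$. By Condition~\ref{cond:repr:rotation_faces}, each regular face of $G$ contributes $\pm 4$ to $\rot(W)$ depending on the winding of $W$ around it, while the outer and central faces of $G$ contribute $0$. Since $P_1$ and $P_2$ share both endpoints and approach $v$ through the same face inside $f^\star$, the net winding of $W$ around the center of the grid is zero, forcing the contributions of essential cycles to cancel. Combined with the local turns at $s$ and $v$ this gives $\rot(W) = 0$, hence $\rot(P_1) = \rot(P_2)$.

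\emph{Main obstacle.} The delicate step is the rotation accounting in the last paragraph: since $P_i$ is only elementary with respect to $C_i$, each of $P_1, P_2$ may cross the other cycle arbitrarily many times, and $W$ may be a complicated closed walk with many self-intersections and revisited vertices. Tracking the winding contributions carefully and showing that all essential-cycle contributions cancel---they do precisely because $v$ lies on the central face of $H$, so that $W$ has zero net winding around the center---is the combinatorial core of the lemma.
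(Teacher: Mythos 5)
The paper does not prove this statement; it is imported verbatim as a Proposition from Lemma~12 of the full version~\cite{bnrw-ttsmford-17-arxiv} and stated without argument, so there is no in-paper proof to compare against and your proposal has to stand on its own.

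On its own merits there is a concrete obstruction in your second step. An elementary path $P_1$ for $C_1$ starts at $s$, the head of the reference edge on the outer face, and meets $C_1$ only at its terminal vertex $v$. Since $s$ lies in the exterior of the essential cycle $C_1$ and $P_1$ never crosses $C_1$, every edge of $P_1$ other than its endpoint lies in the exterior of $C_1$; in particular the last edge of $P_1$ enters $v$ from the exterior of $C_1$. The central face $f^\star$ of $H$, by contrast, lies in the \emph{interior} of both $C_1$ and $C_2$. So requiring that ``the last edges lie in a common face of $G$ inside $f^\star$ and incident to $v$'' is impossible: those last edges necessarily border faces of $G$ on the side of $v$ opposite to $f^\star$. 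If you instead intended a common face in the \emph{exterior} of both $C_1$ and $C_2$ at $v$, then you would need to (a) show such a face of $G$ incident to $v$ exists (this is where the hypothesis that $v$ lies on the central face of $H$ should earn its keep, but your sketch never uses it for this) and (b) show both elementary paths can be routed to end there, which is not automatic since the exterior of $C_i$ restricted to $G$ need not make every exterior face at $v$ reachable from $s$ without touching $C_i$. Beyond that, the rotation accounting for the closed walk $W = P_1 + \reverse{P_2}$, which you correctly identify as the crux, is stated too loosely to carry the proof: the rotation of a self-intersecting closed walk on the cylinder does not decompose as a sum of $\pm 4$ contributions per regular face of $G$ weighted by winding number, and neither the paper nor the definitions you are working from supply such a formula. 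As written, the decisive step of the argument is missing.
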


\begin{proposition}[Lemma~16 in 
\cite{bnrw-ttsmford-17-arxiv}]\label{lem:rect:two_cycles_horizontal}
  Let $C$ and $C'$ be two essential cycles that have at least one common 
  vertex. If all edges on $C$ are labeled with $0$, $C'$ is neither increasing 
  nor decreasing.
\end{proposition}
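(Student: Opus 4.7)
The plan is to argue by contradiction. Assume that $C'$ is monotone with at least one strictly non-zero label; using the symmetries discussed in Section~\ref{sec:symmetries-normalization-appendix}, I may assume without loss of generality that $C'$ is decreasing (all labels non-negative, at least one strictly positive).

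The first step is to extract structural information about $C$. Because consecutive labels along $C$ differ by the rotation at the intermediate vertex, the hypothesis that every edge of $C$ is labeled $0$ forces the rotation at every vertex of $C$ to be $0$, so $C$ passes straight through each of its vertices. Moreover, since the label of an edge modulo~$4$ equals its direction relative to $e^\star$ (which points right), every edge of $C$ has direction~$0$; that is, $C$ is a ``horizontal'' cycle whose edges all point right.

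The second step is a local analysis at a common vertex $u \in V(C) \cap V(C')$. Let $e_C$ and $e'$ denote the outgoing edges of $C$ and $C'$ at $u$, and pick an elementary path $P$ from the target $s$ of $e^\star$ to $u$ with last edge $e_P$. Unfolding the definition of $\ell$ and using $\ell_C(e_C) = 0$ yields
\[
  \ell_{C'}(e') \;=\; \rot(e_P, u, e') - \rot(e_P, u, e_C).
\]
Because $e_C$ points right and $e_P$ must avoid $C$ (so that $e_P$ enters $u$ vertically), a short case analysis on the direction of $e'$ shows that whenever $e'$ points upward at $u$, the right-hand side evaluates to $-1$, contradicting decreasing. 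A symmetric argument applied to the edge of $C'$ entering $u$ rules out the case where $C'$ arrives at $u$ from below. In the remaining configurations, $C'$ either shares at least one edge with $C$ at $u$---in which case Proposition~\ref{lem:repr:equal_labels_at_intersection} transfers the label $0$ to that edge of $C'$---or $C'$ passes vertically downward through $u$ without sharing any edge with $C$.

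The main obstacle is this last configuration, where no local sign contradiction arises at $u$ and the argument must propagate globally along $C'$. Since $C'$ is essential, it has zero net vertical displacement on the cylinder and therefore must also contain an upward-directed edge elsewhere. I expect to complete the proof either by iterating the local analysis at the next common vertex with $C$ along $C'$ (where the sign contradiction is recovered because $C'$ must eventually turn back downward to close up, and the ``upward-departure'' case then forces a negative label), or by constructing an auxiliary essential cycle by splicing portions of $C$ into $C'$ so that Proposition~\ref{lem:repr:equal_labels_at_intersection} applies to a newly created common edge; the latter is anticipated to be the technically most delicate step, since the splice must be performed without destroying essentialness and must yield an edge whose label is both forced to $0$ by Proposition~\ref{lem:repr:equal_labels_at_intersection} and forced to be strictly non-zero by the assumption that $C'$ is decreasing.
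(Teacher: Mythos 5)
Your reduction to the decreasing case by mirroring, and the deduction that $C$ must be a horizontal cycle (all edges pointing right, zero rotation at every vertex of $C$), are correct and are the right starting observations. However, the rest of the proposal is a plan rather than a proof, and the local step already has problems. The formula $\ell_{C'}(e') = \rot(e_P,u,e') - \rot(e_P,u,e_C)$ tacitly requires the \emph{same} path $P$ to be elementary for both $C$ and $C'$; this is not automatic, and it fails precisely in the interesting configuration where $C'$ passes vertically downward through $u$, since then the north slot at $u$ is occupied both by $e_P$ and by the incoming edge of $C'$, forcing them to coincide. Moreover, the claim that ``a symmetric argument rules out the case where $C'$ arrives at $u$ from below'' is false as stated: if $C'$ enters $u$ from the south and exits along the reverse of $C$'s incoming edge, the relevant labels come out to $3$ and $2$, both non-negative, so there is no local contradiction — and Proposition~\ref{lem:repr:equal_labels_at_intersection} gives nothing here because the shared edge is traversed in opposite directions by $C$ and $C'$.

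The actual content of the lemma is the global step that you explicitly leave unwritten: showing that once $C'$ dips into the interior of $C$ at some common vertex, some later common vertex must produce a strictly negative label. You sketch two ways this might go (iterating the local analysis along $C'$, or splicing pieces of $C$ into $C'$ to manufacture a common directed edge where Proposition~\ref{lem:repr:equal_labels_at_intersection} applies), but neither is carried out, and neither is routine: the iterative version needs a genuine topological argument that $C'$ must exit the interior of $C$ at a common vertex with an \emph{upward} departure, which requires care in the presence of shared and reversed-shared subpaths; the splicing version must preserve essentiality and must actually land in the setting of Proposition~\ref{lem:repr:equal_labels_at_intersection}. For what it is worth, the paper does not prove this statement itself — it is imported verbatim as Lemma~16 of Barth et al.\ — so there is no in-paper proof to match; but as written your proposal is a correct opening with a false sub-claim in the middle and a missing core.
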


Proposition~\ref{lem:repr:equal_labels_at_intersection} is a useful
tool for comparing the labels of two interwoven essential cycles. For
example, if $C_1$ is decreasing, we can conclude for all edges of
$C_2$ that also lie on $C_1$ and that are incident to the central face
of $H$ that they have non-negative labels.
Proposition~\ref{lem:rect:two_cycles_horizontal} is useful in the
scenario where we have an essential cycle $C$ with non-negative
labels, and a decreasing cycle $C'$ that shares a vertex with $C$.  We can then 
conclude that $C$ is also decreasing.
In particular, these two propositions together imply that the central face of 
the graph $H$ formed by two decreasing cycles is bounded by a decreasing cycle.

\section{Symmetries and Normalization}
\label{sec:symmetries-normalization-appendix}

In our arguments we frequently exploit certain symmetries.  For
an ortho-radial representation $\Gamma$ we introduce two new
ortho-radial representations, its \emph{flip} $\flip{\Gamma}$ and its
\emph{mirror} $\mirror{\Gamma}$.  Geometrically, viewed as a drawing on a
cylinder, a flip corresponds to rotating the cylinder by $180\degree$
around a line perpendicular to the axis of the cylinder so that is
upside down, see Figure~\ref{fig:flipping-cylinder}, whereas mirroring
corresponds to mirroring it at a plane that is parallel to the axis of
the cylinder; see Figure~\ref{fig:mirroring-cylinder}.  Intuitively,
the first transformation exchanges left/right and top/bottom,
and thus preserves monotonicity of cycles, while the second
transformation exchanges left/right but not top/bottom, and
thus maps increasing cycles to decreasing ones and vice versa.
This intuition is indeed true with the correct definitions of
$\flip{\Gamma}$ and $\mirror{\Gamma}$, but due to the non-locality of
the validity condition for ortho-radial
representations and the dependence on a reference edge this requires
some care. The following two lemmas formalize flipped and mirrored
orthogonal representations. We denote the reverse of an edge $e$ by
$\reverse{e}$.

  \begin{figure}[t]
    \begin{subfigure}[c]{0.35\textwidth}
      \centering
      \includegraphics[width=\textwidth]{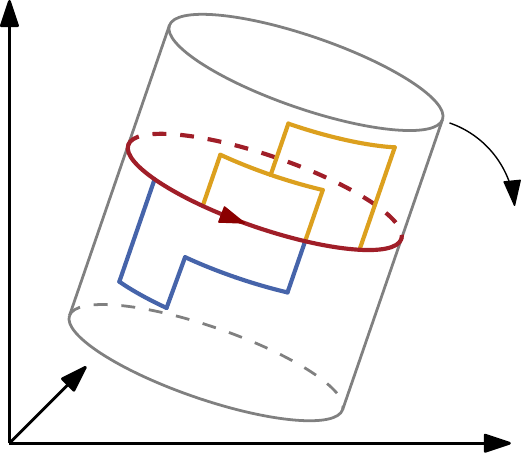}
\subcaption{Flipping the cylinder.}
    \end{subfigure} \hspace{5ex}
    \begin{subfigure}[c]{0.55\textwidth}
      \centering
      \includegraphics[page=2,width=\textwidth]{figures/flipping_cylinder.pdf}
\subcaption{The ortho-radial drawing before and after flipping.}
    \end{subfigure}
    \caption{Illustration of flipping the cylinder.}
   \label{fig:flipping-cylinder}
 \end{figure}
 
 \begin{figure}[t]
  \centering
  \includegraphics[]{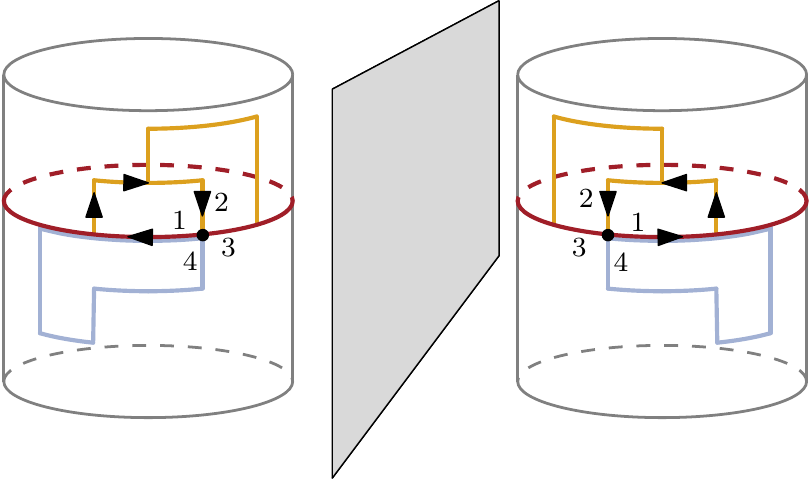}
  \caption{Mirroring the cylinder.}
  \label{fig:mirroring-cylinder}
 \end{figure}

\begin{restatable}[Flipping]{restatable-lemma}{fliplabel}
  \label{lem:flip_label}
  Let $\Gamma$ be an ortho-radial
  representation with outer face $f_o$ and central face~$f_c$. If the cycle 
  bounding the central face is not monotone, there exists an ortho-radial
  representation $\flip{\Gamma}$ such that
  \begin{compactenum}
  \item $\reverse{f}_c$ is the outer face of $\flip{\Gamma}$ and 
  $\reverse{f}_o$ is the
    central face of $\flip{\Gamma}$,
  \item $\reverse{\ell}_{\reverse{C}}(\reverse{e})=\ell_{C}(e)$ for all
  essential cycles $C$ and edges $e$ on $C$, where $\reverse{\ell}$
  is the labeling in $\reverse{\Gamma}$.
  \end{compactenum}
  In particular, increasing and decreasing cycles of $\Gamma$ correspond to
  increasing and decreasing cycles of $\flip{\Gamma}$, respectively. 
\end{restatable}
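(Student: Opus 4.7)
Proof plan.

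I would construct $\flip{\Gamma}$ directly as an ortho-radial representation on the same graph $G$ as $\Gamma$, with the cyclic order of edges reversed at every vertex. For each face $f$ of $\Gamma$ with description $H(f)=[(e_1,a_1),\ldots,(e_k,a_k)]$, I would assign the corresponding face $\reverse{f}$ in $\flip{\Gamma}$ the reversed description in which each edge is replaced by its reverse and the angles are reindexed so that the multisets of angles at every vertex and inside every face are preserved. I would declare $\reverse{f}_c$ the outer face and $\reverse{f}_o$ the central face. The angle-sum condition~(\ref{cond:repr:sum_of_angles}) at each vertex is then immediate. The face-rotation condition~(\ref{cond:repr:rotation_faces}) follows from two sign flips that cancel: reversing the cyclic order at each vertex negates the turn at any non-U-turn vertex, and reversing the walk direction of a facial cycle does the same; U-turns contribute $-2$ in either orientation. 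Hence every face keeps its rotation, so regular faces still have $\rot=4$ and both $\reverse{f}_o$ and $\reverse{f}_c$ still have $\rot=0$.

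The crux is choosing the reference edge $\reverse{e}^\star$ of $\flip{\Gamma}$. It must be a directed edge on the boundary of $\reverse{f}_c$ with $\reverse{f}_c$ on its left, and unraveling conventions this amounts to choosing a directed edge $e_0$ on the facial cycle $C_c$ of $f_c$ in $\Gamma$. I would use the non-monotonicity of $C_c$ to find such an edge with $\ell_{C_c}(e_0)=0$: the labels along $C_c$ form a cyclic integer sequence whose consecutive differences are the vertex rotations (lying in $\{-2,-1,0,1\}$) and whose total change is $\rot(C_c)=0$. Since upward steps are at most $+1$, any transition from a non-positive to a positive label must step from $0$ to $1$, and hence a non-monotone cyclic sequence necessarily attains the value $0$. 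I would then use $e_0$ to determine $\reverse{e}^\star$ in $\flip{\Gamma}$.

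Finally, I would verify $\reverse{\ell}_{\reverse{C}}(\reverse{e})=\ell_C(e)$ for every essential cycle $C$ and edge $e\in C$. The key identity is $\reverse{\rot}(\reverse{W})=\rot(W)$ for any walk $W$, obtained from the same pair of sign flips that preserve face rotations. I would take the defining path $e^\star+P+C[v,e]$ in $\Gamma$, whose rotation is $\ell_C(e)$, and reinterpret its reverse as an elementary label-computation path in $\flip{\Gamma}$ for $\reverse{e}$ on $\reverse{C}$; the rotation in $\flip{\Gamma}$ then equals $\ell_C(e)$ up to a constant offset depending on the reference-edge calibration, which the choice $\ell_{C_c}(e_0)=0$ eliminates. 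The preservation of increasing and decreasing cycles then follows at once, since signs of labels are preserved edge by edge. The main obstacle is the second step—the calibration of $\reverse{e}^\star$—which is precisely where the non-monotonicity of $C_c$ is indispensable, as a monotone $C_c$ need not contain a label-$0$ edge and hence no reference-edge choice would uniformly align the labels across all essential cycles.
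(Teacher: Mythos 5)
Your construction of $\flip{\Gamma}$ is not the flip but the flip composed with the mirror, and this breaks conclusion~2. Geometrically, flipping the cylinder is a $180\degree$ rotation in space, which is an orientation-preserving map of the surface: it exchanges up/down \emph{and} left/right simultaneously, so the clockwise order of edges around every vertex, as seen from outside, is unchanged. The correct construction therefore keeps the rotation system and all angle data of $\Gamma$ untouched and only (i) exchanges the roles of the outer and central faces and (ii) picks a new reference edge on the old central face. Reversing the cyclic order at every vertex is exactly what the mirror operation does (Lemma~\ref{lem:mirroring_label}); combining it with the exchange of face roles yields the composition of flip and mirror, i.e.\ a reflection in a plane perpendicular to the cylinder axis. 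That transformation negates labels rather than preserving them: since the label differences $\ell_C(e)-\ell_C(e')$ along a fixed essential cycle are independent of the reference edge, no calibration of $\reverse{e}^\star$ can turn $-\ell_C(e)+c_C$ into $+\ell_C(e)$ unless all labels on $C$ coincide. So your $\flip{\Gamma}$ maps decreasing cycles to increasing ones, which is the opposite of what the lemma asserts and of what is needed in its application in Lemma~\ref{lem:outermost_decreasing_cycle}.

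There is a second, independent gap in the label-transfer step. The defining path $P$ for $\ell_C(e)$ lies in the exterior of $C$, i.e.\ on the side of the outer face $f_o$; after flipping, that side becomes the interior of $\reverse{C}$, so $\reverse{P}$ is not an admissible elementary path in $\flip{\Gamma}$, and its reversal moreover runs from $\reverse{e}$ to the \emph{old} reference edge rather than from the new one to $\reverse{e}$. Your ``constant offset eliminated by calibration'' does not bridge this: one genuinely needs a second path $Q$ from $e$ to the new reference edge $e^{\star\star}$ through the old interior of $C$, together with the identity $\rot(e^\star+P+e+Q+e^{\star\star})=\ell_{f_c}(e^{\star\star})=0$ to convert $\rot(e^\star+P+e)$ into $\rot(\reverse{e^{\star\star}}+\reverse{Q}+\reverse{e})$, and one must also handle the case that $P+e+Q$ is not simple (e.g.\ by cutting the graph along $C$). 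The one part of your plan that is correct and matches what the lemma needs is the existence of an edge with $\ell_{f_c}=0$ on the non-monotone central cycle, via the observation that consecutive labels change by the vertex rotation and hence increase by at most~$1$ per step.
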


\begin{proof}
  We define $\flip{\Gamma}$ as follows.  The central face of $\Gamma$
  becomes the outer face of $\flip{\Gamma}$ and the outer face of
  $\Gamma$ becomes the central face of $\flip{\Gamma}$.  Further, we
  choose an arbitrary edge $e^{\star\star}$ on the central face~$f_c$ of
  $\Gamma$ with $\ell_{f_c}(e^{\star\star}) = 0$ (such an edge exists
  since the cycle bounding the central face is not monotone), and
  choose $\reverse{e^{\star\star}}$ as the reference edge of
  $\flip{\Gamma}$.  All other information of $\Gamma$ is transferred
  to $\flip{\Gamma}$ without modification. As the local structure is unchanged, 
  $\flip{\Gamma}$ is an ortho-radial representation.
  
  The essential cycles in $\Gamma$ bijectively correspond to the essential 
  cycles in $\flip{\Gamma}$ by reversing the direction of the cycles. That 
  is, any essential cycle $C$ in $\Gamma$ corresponds to the cycle 
  $\reverse{C}$ 
  in $\flip{\Gamma}$. Note that the reversal is necessary since we always 
  consider essential cycles to be directed such that the center lies in its 
  interior, which is defined as the area locally to the right of the cycle.
  
  Consider any essential cycle $C$ in $\Gamma$. We denote the labeling
  of $C$ in $\Gamma$ by $\ell_C$ and the labeling of $\reverse{C}$ in
  $\flip{\Gamma}$ by $\reverse{\ell}_{\reverse{C}}$. We show that for
  any edge $e$ on $C$ it is
  $\ell_C(e) = \reverse{\ell}_{\reverse{C}}(\reverse{e})$, which
  particularly implies that any monotone cycle in $\Gamma$ corresponds
  to a monotone cycle in $\reverse{\Gamma}$ and vice versa. First, we pick a 
  simple
  path $P$ from $e^\star$ to $e$ such that $P$ lies in the exterior of
  $C$ in $\Gamma$ and another simple path $Q$ from $e$ to
  $e^{\star\star}$ that lies in the interior of $C$.
    
  Assume for now that $P+e+Q$ is simple. We shall see at the end how the proof 
  can be extended if this is not the case. By the choice of $e^{\star\star}$, 
  we have
  \begin{align}
  0 &= \ell_{f_c}(e^{\star\star}) = \rot(e^\star + P + e + Q + e^{\star\star}) 
  = \rot(e^\star + P + e) + \rot(e + Q + e^{\star\star})
  \end{align}
  Hence, 
  $ \rot(e^\star + P + e) = \rot(\reverse{e^{\star\star}} +
  \reverse{Q} + \reverse{e})$ and in total
   \begin{align}
  \reverse{\ell}_{\reverse{C}}(\reverse{e}) 
  = \rot(\reverse{e^{\star\star}} + \reverse{Q} + 
  \reverse{e})
  = \rot(e^\star + P + e) 
  = \ell_C(e).
  \end{align}
  Thus, any monotone cycle in $\Gamma$ corresponds to a monotone cycle in 
  $\flip{\Gamma}$ and vice versa.
  
  \begin{figure}[t]
    \centering
    \begin{subfigure}[c]{0.3\textwidth}
      \centering
      \includegraphics[]{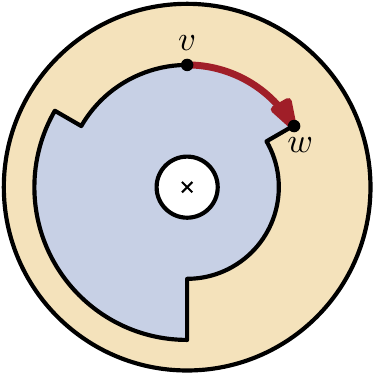}
      \subcaption{The original graph with the essential cycle~$C$.}
    \end{subfigure}
    \hspace{2ex}
    \begin{subfigure}[c]{0.3\textwidth}
      \centering
      \includegraphics[]{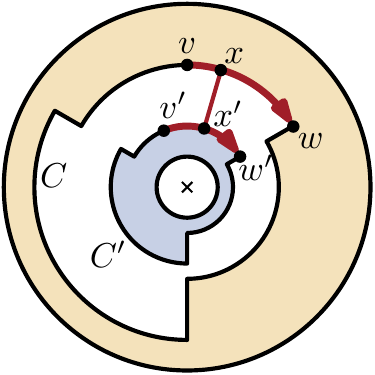}
      \subcaption{The graph after it was cut at $C$.}
    \end{subfigure}
    \caption{Cutting the graph at an essential cycle $C$.}
    \label{fig:cutting}
  \end{figure}
  
  If $P+e+Q$ is not simple, we make it simple by cutting $G$ at $C$
  such that the interior and the exterior of $C$ get their own copies
  of $C$; see Figure~\ref{fig:cutting}. We connect the two parts by an
  edge between two new vertices $x$ and $x'$ on the two copies of $e$,
  which we denote by $vw$ in the exterior part and $v'w'$ in the
  interior part.  The new edge is placed perpendicular to these
  copies.  The path $P+vxx'w'+Q$ is simple and its rotation is
  $0$. Hence, the argument above implies
  $\reverse{\ell}_{\reverse{C}}(\reverse{e}) = \ell_C(e)$.
\end{proof}

\begin{restatable}[Mirroring]{restatable-lemma}{mirrorlabel}
  \label{lem:mirroring_label}
  Let $\Gamma$ be an ortho-radial
  representation with outer face $f_o$ and central face $f_c$. There
  exists an ortho-radial representation $\mirror{\Gamma}$ such that
  \begin{compactenum}
  \item $\reverse{f}_o$ is the outer face of $\mirror{\Gamma}$ and 
  $\reverse{f}_c$ is the
    central face of $\mirror{\Gamma}$,
  \item $\mirror{\ell}_{\reverse{C}}(\reverse{e})=-\ell_{C}(e)$ for all
  essential cycles $C$ and edges $e$ on $C$, where $\mirror{\ell}$
  is the labeling in $\mirror{\Gamma}$.
  \end{compactenum}
  
  In particular, increasing and decreasing cycles of $\Gamma$ correspond to
  decreasing and increasing cycles of $\mirror{\Gamma}$, respectively. 
\end{restatable}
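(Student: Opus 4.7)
The plan is to construct $\mirror{\Gamma}$ explicitly and then verify the claimed properties. I would define $\mirror{\Gamma}$ by reversing the combinatorial embedding of $\Gamma$ (reversing the cyclic order of edges around every vertex), declaring $\reverse{f_o}$ as the outer face and $\reverse{f_c}$ as the central face of $\mirror{\Gamma}$, and choosing $\reverse{e^\star}$ as the reference edge, so that after mirroring it still points rightward. All angle assignments inside faces are inherited from $\Gamma$ unchanged.

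The key local observation is that mirroring swaps right and left turns, so for any walk $W$ its rotation in $\mirror{\Gamma}$ equals $-\rot_{\Gamma}(W)$. Using this I would verify the two conditions for an ortho-radial representation. Condition~\ref{cond:repr:sum_of_angles} is immediate because angle sums at vertices are preserved. For condition~\ref{cond:repr:rotation_faces}, each face $f$ of $\Gamma$ has its boundary reversed to $\reverse{f}$ in $\mirror{\Gamma}$ (so the face is still locally on the right); the two sign flips (from reversal and from mirroring) cancel, yielding rotation $\rot_\Gamma(f)$ in $\mirror{\Gamma}$. Consequently regular faces retain rotation~$4$ and the outer and central faces retain rotation~$0$.

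For the label identity $\mirror{\ell}_{\reverse{C}}(\reverse{e}) = -\ell_{C}(e)$, I would adapt the strategy of the proof of Lemma~\ref{lem:flip_label}. For an essential cycle $C$ in $\Gamma$, the corresponding essential cycle in $\mirror{\Gamma}$ is $\reverse{C}$, since mirroring moves the interior of $C$ (which contains the center) to the left, and reversing restores it to the right. Fix an edge $e = v_1 v_2$ on $C$ and let $s, t$ denote the source and target of $e^\star$. Choose elementary paths $P$ from $t$ to $v_1$ in $\Gamma$ and $Q$ from $s$ to $v_2$ in $\mirror{\Gamma}$, both lying in the exterior of $C$ (which is the same region of the cylinder in both representations). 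A direct expansion using the sign-flip identity gives $\rot_{\Gamma}(e^\star + P + e + \reverse{Q}) = \ell_{C}(e) + \mirror{\ell}_{\reverse{C}}(\reverse{e})$, and a closed-walk argument analogous to the one in the proof of Lemma~\ref{lem:flip_label} shows that this rotation equals $0$, yielding the identity.

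The main technical obstacle lies in correctly handling the asymmetry between the two reference edges ($e^\star$ versus $\reverse{e^\star}$), which requires careful tracking of the closed walk whose rotation relates the two labels; in particular one must verify that this walk can be chosen so that it wraps around the cylinder exactly once (rotation $0$) rather than bounding a disk in the exterior annulus of $C$ (rotation $\pm 4$). If the walk fails to be simple, a cutting operation analogous to the one depicted in Figure~\ref{fig:cutting} can be applied to reduce to the simple case. Once the label identity is established, the final claim on monotone cycles follows immediately: a decreasing cycle in $\Gamma$ (all labels non-negative, at least one positive) corresponds under $C \mapsto \reverse{C}$ to a cycle in $\mirror{\Gamma}$ with all labels non-positive and at least one negative, that is, an increasing cycle; the other direction is symmetric.
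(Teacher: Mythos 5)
Your construction of $\mirror{\Gamma}$ (reverse the rotation system at every vertex, keep the angle assignments, take $\reverse{e^\star}$ as reference edge, swap the role of the two distinguished faces up to reversal) is exactly the one used in the paper, and your verification of the two local conditions via the sign-flip $\rot_{\mirror{\Gamma}}(W) = -\rot_{\Gamma}(W)$ is correct. However, the heart of the lemma is the label identity, and here your strategy diverges from the paper's and is left with a real gap.

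You pick \emph{two} elementary paths $P$ (from the head of $e^\star$ to $v_1$) and $Q$ (from the head of $\reverse{e^\star}$ to $v_2$), both in the exterior annulus of $C$, and you want to combine the two labels into the rotation of the closed walk $e^\star + P + e + \reverse{Q}$ and then argue this rotation is $0$. That last step is precisely what you flag as the main obstacle, and it is not resolved. Your proposed justification, ``a closed-walk argument analogous to the one in the proof of Lemma~\ref{lem:flip_label},'' does not transfer: in the flipping proof the value $0$ comes from deliberately choosing a second reference edge $e^{\star\star}$ on the central face with $\ell_{f_c}(e^{\star\star}) = 0$, so that the path $e^\star + P + e + Q + e^{\star\star}$ has rotation $0$ by construction. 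There is no analogous choice to exploit here: both endpoints of your closed walk are forced (they are $e^\star$ and $\reverse{e^\star}$, which share the same underlying edge), so the value of the rotation has to be argued from the geometry of $P$ and $Q$ rather than from a free parameter. Moreover, since $P$ and $Q$ live in the \emph{same} annulus (unlike the flipping case, where $P$ lies in the exterior and $Q$ in the interior of $C$), they can intersect each other and the closed walk can wrap around the cylinder $0$ or $\ge 2$ times depending on how $Q$ is chosen relative to $P$; the claim that one can always choose $Q$ to make the closed walk a simple essential cycle, and that such a cycle has rotation $0$, is itself a nontrivial fact that you invoke but do not establish. Even if you cut along $C$ as in Figure~\ref{fig:cutting}, you still need to control the homotopy class of $Q$ relative to $P$, which is an extra argument.

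The paper's proof sidesteps this entirely by keeping a \emph{single} path $P$ and locally modifying the graph: it subdivides $e^\star$ into $r's'$ with a tiny rectangle attached inside the graph, and subdivides $e$ into $t'u'$ with a tiny rectangle attached in the exterior of $C$. The detour $P' = r'wvs's + P + tt'xyu'$ is then a \emph{single} simple path that connects $\reverse{e^\star}$ to $\reverse{e}$ correctly in $\mirror{\Gamma'}$, and the two rectangular U-turns contribute $+2$ and $-2$, which cancel. This lets the sign flip give $\mirror{\ell}_{\reverse{C}}(\reverse{e}) = -\ell_C(e)$ directly without any closed-walk or winding-number argument. If you want to salvage your two-path approach, you would need to either (i) prove that every simple essential cycle in an ortho-radial representation has rotation $0$ and then show $Q$ can always be chosen so that $e^\star + P + e + \reverse{Q}$ is a simple essential cycle, or (ii) choose $Q$ to coincide with $P$ up to the small asymmetry at the endpoints and account for the resulting difference explicitly --- which, pursued to its end, essentially reinvents the paper's U-turn construction.
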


\begin{proof}
  \begin{figure}
    \begin{subfigure}[b]{0.45\textwidth}
      \centering
      \includegraphics[]{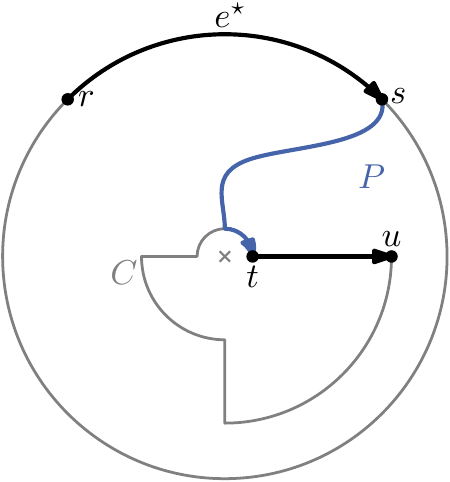}
      \subcaption{The original graph with a path $P$ from the reference edge 
        $e^\star$ to the edge $e$ on the essential cycle~$C$.}
    \end{subfigure}
    \hfill
    \begin{subfigure}[b]{0.45\textwidth}
      \centering
      \includegraphics[]{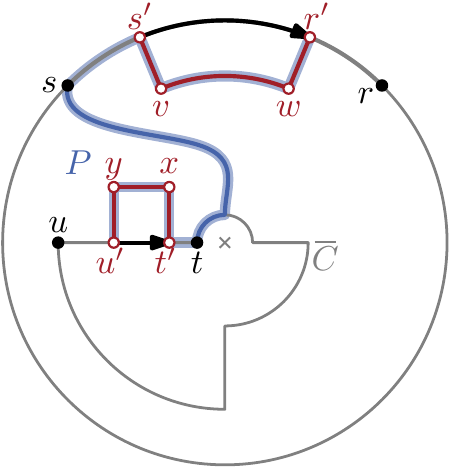}
      \subcaption{The mirrored graph including the construction to extend the 
        path $P$ to a path from $r'$ to~$u'$.}
    \end{subfigure}
    \caption{The construction that is used to adapt the path from the reference 
      edge to an edge $e$ on an essential cycle when the graph is mirrored.}
    \label{fig:mirroring_construction}
  \end{figure}

  We define $\mirror{\Gamma}$ as follows.  We reverse the direction of
  all faces and reverse the order of the edges around each vertex.
  The outer and central face are equal to those in $\Gamma$ (except
  for the directions) and the reference edge is $\reverse{e^\star}$.
  Since the reference edge is reversed, edges that point left in
  $\Gamma$ point right in $\mirror{\Gamma}$ and vice versa, but the
  edges that point up (down) in $\Gamma$ also point up (down) in
  $\mirror{\Gamma}$. Note that this construction satisfies the conditions for 
  ortho-radial representations.

  Let $e=tu$ be an edge on $C$ and $P$ a simple path from
  $e^\star=rs$ to $e=tu$ that lies in the exterior of $C$; we consider
  $P$ without $e^\star$ and $e$.  In the mirrored representation
  $\mirror{\Gamma}$ the direction of the reference edge and $C$ are
  reversed, i.e., the reference edge is $\reverse{e^\star}$, and we
  consider $\reverse{C}$.  As $P$ starts at the tail of
  $\reverse{e^\star}$ and ends at the head of $\reverse{e}$, we cannot
  simply use $P$ to compute the label
  $\mirror{\ell}_{\reverse{C}}(\reverse{e})$.
  
  Therefore, we modify $G$ and $\Gamma$ slightly by adding some
  vertices and edges as follows; see Figure~\ref{fig:mirroring_construction}. 
  The edge
  $e^\star$ is subdivided by two new vertices $r'$ and $s'$, which are
  connected by a new path $s'vwr'$ in the interior of the graph. We
  place this path such that the face formed by this path and $r's'$ is
  a rectangle.  Similarly, we add the vertices $t'$ and $u'$ on $e$
  and we connect $t'$ to $u'$ by a new path $t'xyu'$ such that the new
  face with these four vertices is a rectangle that lies in the
  exterior of $C$.  We set $r's'$ as the new reference edge and call the 
  resulting representation $\Gamma'$. Since
  $r's'$ is a part of the original reference edge $e^\star$, this
  preserves the labelings. In particular, the labels of $tt'$, $t'u'$
  and $u'u$ in $\Gamma'$ are equal to $\ell_C(e)$ in $\Gamma$.
  
  Setting $P'=r'wvs's+P+tt'xyu'$, we obtain a simple path in
  the mirrored representation of $\Gamma'$ from 
  the reference edge
  $\reverse{e^\star}$ to $\reverse{e}$. Since mirroring flips the sign of the 
  rotation of a path, we get
  \begin{align*}
  \mirror{\ell}_{\reverse{C}}(\reverse{e}) &= \rot_{\mirror{\Gamma'}}(s'r' + P' 
  + u't') 
  = \rot_{\mirror{\Gamma'}}(s'r'vws's) + \rot_{\mirror{\Gamma'}}(s's+P+tt') 
  +  \rot_{\mirror{\Gamma'}}(tt'xyu't') \\
  &= 2 - \rot_{\Gamma'}(s's+P+tt') - 2 
  = -\rot_{\Gamma}(e^\star+P+e) 
    = -\ell_C(e).    
  \end{align*}
  In particular, increasing and decreasing cycles of $\Gamma$
  correspond to decreasing and increasing cycles of $\mirror{\Gamma}$,
  respectively.
\end{proof}

We can further restrict ourselves to instances with minimum degree 2 by 
removing all degree-1 vertices in the following fashion.
Suppose $v$ is a degree-1 vertex in $G$. It is not hard to see that $G$ 
contains a
monotone cycle if and only if $G-v$ does.  It is thus tempting to
iteratively remove degree-1 vertices.  However later, when we augment
the graph and its ortho-radial representation so that all faces become
rectangular, reinserting these vertices may require non-trivial
modifications.  To avoid this we present a transformation that
produces a supergraph of a subdivision of $G$ where every vertex has
degree~2.

\begin{wrapfigure}{l}{5.1cm}
  \centering
  \includegraphics[page=2]{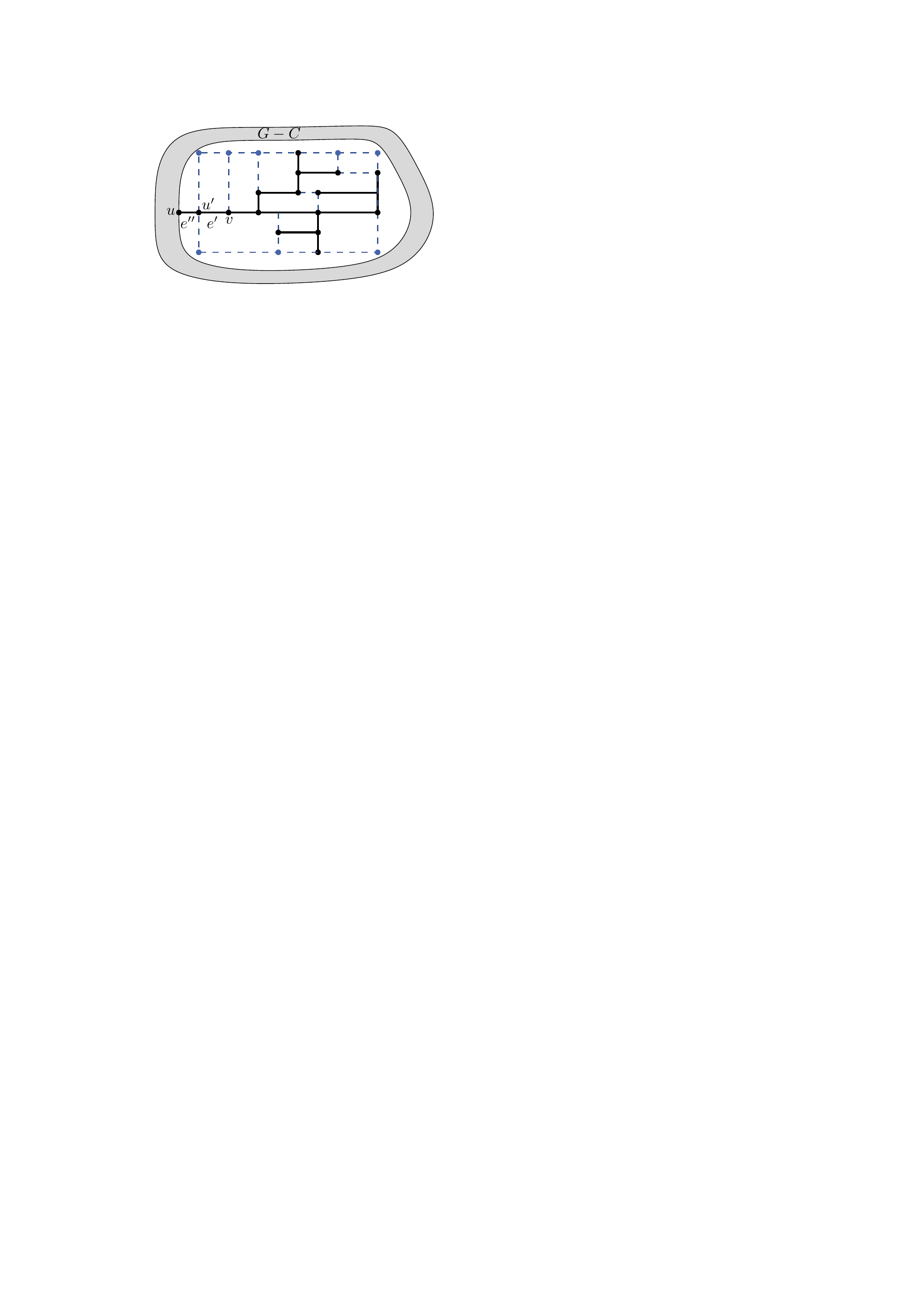}
  \caption{The degree-1 vertex $v$ is replaced by a rectangle.}
  \label{fig:normalization}
\end{wrapfigure}
Let $uv$ be the edge incident to the degree-1 vertex $v$.  We obtain a
graph $G'$ from $G$ by subdividing $uv$ with a new vertex $w$, and
adding two new vertices $x$ and $y$ along with edges $vx$, $xy$ and
$wy$.  Further, we obtain $\Gamma'$ from $\Gamma$ by setting all
angles in the inner face bounded by $v,w,x,y$ to $90\degree$; see
Fig.~\ref{fig:normalization}.  It is easy to see that $\Gamma'$ is
valid if and only if $\Gamma$ is, since a monotone cycle of $G$ is
also a monotone cycle of $G'$, and conversely, since $uw$ is a bridge,
a cycle $C$ of $G'$ is either $w,v,x,y$ or it is contained in $G-uv$.
The former cycle is non-essential by construction, and hence any
monotone cycle of $G'$ is also a monotone cycle of $G$.  It follows
that the monotone cycles of $G$ bijectively correspond to the monotone
cycles of $G'$.  Iteratively applying this construction to all degree-1 
vertices yields the
following lemma.

\begin{lemma}
  \label{lem:degree-1-removal}
  Let $G$ be a planar 4-graph on $n$ vertices with 
  ortho-radial
  representation $\Gamma$.  In $\O(n)$ time we can compute a supergraph
  $G^{*}$ with minimum degree~2 of a subdivision of $G$ with
  ortho-radial representation $\Gamma^{*}$ such that there is a
  bijective correspondence between monotone cycles in $G$ and monotone
  cycles in $G^{*}$.
\end{lemma}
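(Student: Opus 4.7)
The plan is to iterate the local gadget from Figure~\ref{fig:normalization}, which the paragraph preceding the lemma has already analysed for a single degree-1 vertex, and to verify that (a) one application yields a valid ortho-radial representation with a bijective correspondence of monotone cycles, (b) the property composes across iterations, and (c) the total work is linear.

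For step (a), I would recheck the two local conditions of Section~\ref{sec:preliminaries} on $\Gamma'$. The angle sums at the new vertices are $360\degree$ by construction: at $v$, $x$ and $y$ a $90\degree$ corner in the rectangle meets a $270\degree$ corner in the surrounding face, and at the degree-$3$ vertex $w$ the rectangle takes $90\degree$ while the remaining $270\degree$ is split by the bridge $uw$ between $w$'s two appearances on the facial walk of the surrounding face. For the face-rotation condition, the new rectangular face has rotation $4$, and the face that previously contained $v$ as a dangling vertex retains its rotation because the old contribution $\rot(u\,v\,u)=-2$ is replaced by rotations along the detour $u, w, y, x, v, w, u$ which sum to $-2$ as well. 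The bijection of monotone cycles then follows exactly by the bridge argument sketched before the lemma: any cycle of $G'$ is either the non-essential rectangle or lies in $G'-\{vx,xy,yw\}$, and in the latter case rotations, reference edge and hence labels are unchanged.

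For step (b), the iteration, I would observe that after one application the added vertices $x$ and $y$ have degree $2$, $v$ becomes a degree-$2$ vertex, and $w$ has degree $3$, so no new degree-$1$ vertex appears; hence the set of remaining degree-$1$ vertices strictly shrinks and the process terminates after at most $n$ steps. The subtlety that I expect to be the main obstacle is ensuring that labels of essential cycles established in one iteration are not disturbed by a later gadget insertion; this holds because every inserted gadget is attached to the rest of the graph through a bridge, so its edges never lie on an essential cycle and cannot alter any labeling. The per-step bijections therefore compose into the required correspondence between monotone cycles of $G$ and $G^{*}$. For step (c), each application takes constant time and adds $\O(1)$ vertices and edges, so both $|G^{*}|$ and the total running time are $\O(n)$.
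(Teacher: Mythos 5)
Your proof is correct and follows essentially the same approach as the paper: applying the gadget of Figure~\ref{fig:normalization} to each degree-1 vertex, using the bridge argument to get the bijection of monotone cycles, and iterating in linear time. Your additional verification of the angle-sum and face-rotation conditions and the observation that later gadget insertions cannot disturb labelings (since gadget edges hang off a bridge and hence never lie on an essential cycle) are correct elaborations of details the paper leaves implicit.
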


\section{Finding Monotone Cycles}\label{sec:finding_monotone_cycles}

The two conditions for ortho-radial representations are local and checking them 
can easily be done in linear time. We therefore assume in this section that we 
are given a planar 4-graph $G$ with an ortho-radial representation $\Gamma$.
The condition for validity however references all essential cycles of which 
there may be exponentially many. We
present an algorithm that checks whether $\Gamma$ contains a
monotone cycle and computes such a cycle if one exists.  The main
difficulty is that the labels on a decreasing cycle $C$
depend on an elementary path $P$ from the reference edge to $C$.
However, we know neither the path $P$ nor the cycle $C$ in advance,
and choosing a specific cycle $C$ may rule out certain paths $P$ and
vice versa.

We only describe how to search for decreasing cycles; increasing
cycles can be found by searching for decreasing cycles in the mirrored
representation by Lemma~\ref{lem:mirroring_label}.  A decreasing cycle
$C$ is \emph{outermost} if it is not contained in the interior of any
other decreasing cycle.  Clearly, if $\Gamma$ contains a decreasing
cycle, then it also has an outermost one.  We first show that in this
case this cycle is uniquely determined.

\begin{restatable}{restatable-lemma}{labeloutercycleunique}
  \label{lem:outermost_decreasing_cycle}
  If $\Gamma$ contains a decreasing cycle, there is a unique
  outermost decreasing cycle.
\end{restatable}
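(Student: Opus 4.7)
\begin{proofsketch}
The plan is to argue by contradiction. Suppose $\Gamma$ admits two distinct outermost decreasing cycles $C_1 \neq C_2$. Two disjoint essential cycles on the cylinder always lie in each other's above/below region, so $C_1$ and $C_2$ must share at least one vertex and $H = C_1 + C_2$ is connected. Let $C^*$ be the essential cycle of $H$ bounding its outer face, i.e., the face of $H$ that contains the outer face $f_o$ of $\Gamma$. The whole task reduces to showing that $C^*$ is itself a decreasing cycle: both $C_1$ and $C_2$ are subgraphs of $H$ and thus lie in the closed interior of $C^*$. If $C_1 \neq C^*$, then $C_1$ is contained in the interior of the decreasing cycle $C^*$, contradicting that $C_1$ is outermost; so $C_1 = C^*$. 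But then $C_2$ lies in the closed interior of $C^* = C_1$ with $C_2 \neq C_1$, contradicting that $C_2$ is outermost.

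To prove that $C^*$ is decreasing, my plan is to exploit the symmetry provided by Lemma~\ref{lem:flip_label}. Flipping $\Gamma$ produces an ortho-radial representation $\flip{\Gamma}$ in which the outer and central faces of $\Gamma$ are swapped, while decreasingness of essential cycles is preserved. Under the flip, $H$ is unchanged but its outer and central faces swap; thus $\reverse{C^*}$ bounds the central face of $H$ in $\flip{\Gamma}$ while $C_1, C_2$ remain decreasing there. The remark noted right after Proposition~\ref{lem:rect:two_cycles_horizontal}---that the central face of a subgraph formed by two decreasing cycles is bounded by a decreasing cycle---then yields that $\reverse{C^*}$ is decreasing in $\flip{\Gamma}$; transporting this back via Lemma~\ref{lem:flip_label} shows that $C^*$ is decreasing in $\Gamma$.

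The hard part will be a single technicality: Lemma~\ref{lem:flip_label} requires the cycle bounding the central face of $\Gamma$ to be non-monotone. The degenerate case, in which the central face of $\Gamma$ is bounded by a monotone cycle, must be handled separately by directly adapting the central-face reasoning to the outer face of $H$. Concretely, I would prove the outer-face analogue of Proposition~\ref{lem:repr:equal_labels_at_intersection} by essentially the same proof, asserting that common edges of $C_1$ and $C_2$ incident to the outer face of $H$ carry equal labels under $\ell_{C_1}$ and $\ell_{C_2}$. Combined with the decreasingness of $C_1$ and $C_2$, this forces all labels on $C^*$ to be non-negative, and Proposition~\ref{lem:rect:two_cycles_horizontal} applied to $C^*$ and $C_1$ (which share a vertex, with $C_1$ decreasing) then precludes $C^*$ from having only zero labels, so $C^*$ is decreasing and the contradiction argument above goes through unchanged.
\end{proofsketch}
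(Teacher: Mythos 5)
Your high-level plan is the same one the paper uses: form $H = C_1 + C_2$, show that the cycle $C^*$ bounding the outer face of $H$ is itself decreasing, and derive a contradiction with outermost-ness; the decreasingness of $C^*$ is obtained by flipping and transporting Proposition~\ref{lem:repr:equal_labels_at_intersection} (or its packaged form, the remark following Proposition~\ref{lem:rect:two_cycles_horizontal}) to the outer face. The one structural difference is that you flip the full representation $\Gamma$ whereas the paper flips the restriction $\Gamma_H$ to $H$; this is essentially a packaging choice and does not change the substance. What is genuinely valuable in your write-up is the explicit attention to the hypothesis of Lemma~\ref{lem:flip_label}: the paper applies that lemma to $\Gamma_H$, yet by the paper's own remark the central face of $H$ is bounded by a \emph{decreasing}, and hence monotone, cycle, so the stated hypothesis of the flip lemma is not obviously met there, and the paper's proof silently assumes it. You both flag the issue and sketch a remedy (prove the outer-face analogue of Proposition~\ref{lem:repr:equal_labels_at_intersection} directly). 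That remedy is sound, but note that once the outer-face analogue is in hand, you can apply it directly to $C^*$, $C_1$, $C_2$ in $\Gamma$ and drop the flip altogether --- so the degenerate-case workaround, if fully carried out, would subsume the main case. Either way your argument is correct and your scrutiny of the flip hypothesis is a point the paper glosses over.
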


\begin{proof}
  Assume that $\Gamma$ has two outermost decreasing cycles $C_1$ and
  $C_2$, i.e., $C_1$ does not lie in the interior of $C_2$ and vice
  versa.  Let $C$ be the cycle bounding the outer face of the subgraph
  $H=C_1+C_2$ that is formed by the two decreasing cycles. By
  construction, $C_1$ and $C_2$ lie in the interior of $C$, and we
  claim that $C$ is a decreasing cycle contradicting that $C_1$ and
  $C_2$ are outermost. To that end, we show that
  $\ell_C(e)=\ell_{C_1}(e)$ for any edge~$e$ that belongs to both $C$
  and $C_1$, and $\ell_C(e)=\ell_{C_2}(e)$ for any edge $e$ that
  belongs to both $C$ and $C_2$. Hence, all edges of $C$ have a
  non-negative label since $C_1$ and~$C_2$ are decreasing. By
  Proposition~\ref{lem:rect:two_cycles_horizontal} there is at least
  one label of $C$ that is positive, and hence $C$ is a decreasing
  cycle.

  It remains to show that $\ell_C(e)=\ell_{C_1}(e)$ for any edge~$e$
  that belongs to both $C$ and $C_1$; the case that $e$ belongs to
  both $C$ and $C_2$ can be handled analogously.  Let $\Gamma_H$ be
  the ortho-radial representation~$\Gamma$ restricted to $H$.  We flip
  the cylinder to exchange the outer face with the central face and
  vice versa. More precisely, Lemma~\ref{lem:flip_label} implies that
  the reverse edge $\reverse{e}$ of $e$ lies on the central face of
  the flipped representation $\flip{\Gamma}_H$ of $\Gamma_H$. Further,
  it proves that
  $\reverse{\ell}_{\reverse{C}}(\reverse{e})=\ell_{C}(e)$ and
  $\reverse{\ell}_{\reverse{C_1}}(\reverse{e})=\ell_{C_1}(e)$, where
  $\reverse{\ell}$ is the labeling in $\reverse{\Gamma}_H$.  Hence, by
  Proposition~\ref{lem:repr:equal_labels_at_intersection} we obtain
  $\reverse{\ell}_{\reverse{C}}(\reverse{e})=
  \reverse{\ell}_{\reverse{C_1}}(\reverse{e})$.  Flipping back the cylinder, again by
  Lemma~\ref{lem:flip_label} we obtain $\ell_C(e)=\ell_{C_1}(e)$. %
\end{proof}

The core of our algorithm is an adapted left-first DFS. Given a
directed edge~$e$ it determines the outermost decreasing cycle $C$ in
$\Gamma$ such that $C$ contains $e$ in the given direction and $e$ has
the smallest label among all edges on $C$, if such a cycle exists.  By
running this test for each directed edge of $G$ as the start edge, we
find a decreasing cycle if one exists.

\newcommand{\reference}{\mathrm{ref}}

Our algorithm is based on a DFS that visits each vertex at most once.
A left-first search maintains for each visited vertex $v$ a reference
edge $\reference(v)$, the edge of the search tree via which $v$ was
visited. Whenever it has a choice which vertex to visit next, it
picks the first outgoing edge in clockwise direction after the
reference edge that leads to an unvisited vertex.  In addition to
that, we employ a filter that ignores certain outgoing edges during
the search.  To that end, we define for all outgoing edges $e$
incident to a visited vertex $v$ a \emph{search label}~$\tilde\ell(e)$
by setting
$\tilde\ell(e) = \tilde\ell(\reference(v)) + \rot(\reference(v) + e)$
for each outgoing edge $e$ of $v$.  In our search we ignore edges with
negative search labels.  For a given directed edge $vw$ in $G$ we 
initialize the search by
setting $\reference(w) = vw$, $\tilde\ell(vw) = 0$ and then start
searching from $w$.

Let $T$ denote the directed search tree with root $w$ constructed by
the DFS in this fashion.  If $T$ contains $v$, then this determines a
\emph{candidate cycle} $C$ containing the edge $vw$. If $C$ is a
decreasing cycle, which we can easily check by determining an
elementary path from the reference edge to $C$, we report it.
Otherwise, we show that there is no outermost decreasing cycle $C$
such that $vw$ lies on $C$ and has the smallest label among
all edges on $C$.

It is necessary to check that $C$ is essential and
decreasing.  For example the cycle in
Figure~\ref{fig:nondecreasing_cycle_found} is found by the search and
though it is essential, it is non-decreasing.  This is caused by the
fact that the label of $vw$ is actually $-4$ on this cycle but the
search assumes it to be $0$.

\begin{figure}
  \centering
  \begin{minipage}[b]{.45\textwidth}
    \centering
    \includegraphics[]{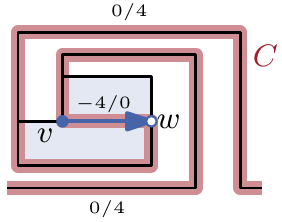}\vspace{-1ex}    
  \end{minipage}\hfill
  \begin{minipage}[b]{.45\textwidth}
    \centering
    \includegraphics{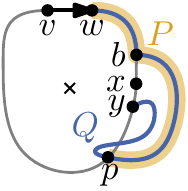}\vspace{-1ex}    
  \end{minipage}
  
  \begin{minipage}[t]{.49\textwidth}
    \caption{The search from $vw$ finds the non-decreasing cycle $C$.  Edges are labeled $\ell_{C}(e) / \tilde\ell(e)$.}
    \label{fig:nondecreasing_cycle_found}
  \end{minipage}\hfill
  \begin{minipage}[t]{.45\textwidth}
    \caption{Path $Q$ and its prefix $P$ that leaves $C$ once and ends 
      at a vertex $p$ of $C$.}
    \label{fig:prefix}
  \end{minipage}
\end{figure}

\begin{lemma}\label{lem:dfs_correctness}
  Assume $\Gamma$ contains a decreasing cycle.  Let $C$ be the
  outermost decreasing cycle of $\Gamma$ and let $vw$ be an edge on
  $C$ with the minimum label, i.e., $\ell_C(vw) \leq \ell_C(e)$ for
  all edges~$e$ of~$C$.  Then the left-first DFS from $vw$ finds $C$.
\end{lemma}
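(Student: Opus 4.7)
The plan is to show by induction along $C$ that the left-first DFS traces $C$ in its cyclic direction starting at $w$; label the vertices of $C$ as $u_0 = w, u_1, \ldots, u_k = v$. First, I would observe that if the DFS has followed $C$, then along $C$ the search labels coincide with the $C$-labels shifted by the minimum: starting from $\tilde\ell(vw) = 0$ and accumulating the $\rot$-terms along $C$ yields $\tilde\ell(u_{i-1}u_i) = \ell_C(u_{i-1}u_i) - \ell_C(vw) \geq 0$ for every $i$, because $vw$ realises the minimum of $\ell_C$ on $C$. Hence when the DFS is on $C$, no $C$-edge is excluded by the non-negativity filter. I would then examine the cyclic order of edges around an internal $C$-vertex $u_i$ arrived at via its incoming $C$-edge: walking clockwise from the reference edge one first meets the edges leaving into the exterior of $C$, then the outgoing $C$-edge $u_iu_{i+1}$, and finally the edges into the interior of $C$. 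Consequently, the left-first rule first attempts exterior detours, then $u_iu_{i+1}$, and only afterwards interior edges.

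The inductive invariant is that when the DFS processes $u_i$ for the first time, the reference edge at $u_i$ is $u_{i-1}u_i$ and the search-tree path from $w$ to $u_i$ equals $C[u_0, u_i]$. The base case $i = 0$ holds by initialisation. In the inductive step, once every exterior edge at $u_i$ has been filtered, already led to a visited vertex, or been explored and backtracked to $u_i$, the DFS takes $u_iu_{i+1}$; by the first observation this edge is unfiltered, and the obstacle addressed below ensures $u_{i+1}$ is still unvisited, so the invariant extends. Interior detours from $u_i$ are tried only after the $C$-edge, and by then every later $u_j$ is already visited via $C$, so they cannot create a spurious tree path to $v$.

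The main obstacle is to rule out that an exterior detour from some $u_i$ visits a later $C$-vertex $u_j$ with $j > i$. Assuming such a detour, let $P'$ be the DFS path in the exterior of $C$ from $u_i$ to $u_j$; by the filter every edge of $P'$ has $\tilde\ell \geq 0$. Form the replacement cycle $C' = C[u_0, u_i] + P' + C[u_j, u_0]$; since $P'$ lies in the exterior of $C$, the interior of $C'$ strictly contains the interior of $C$, so $C'$ is essential and $C$ bounds the central face of the subgraph $H = C + C'$. By Proposition~\ref{lem:repr:equal_labels_at_intersection} the labels on the common edges of $C$ and $C'$ agree with the labels of $C$, which are non-negative since $C$ is decreasing. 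For the edges of $P'$, I would relate $\tilde\ell$ to $\ell_{C'}$ by fixing an elementary path from the target of $e^\star$ to $w$ and tracking rotations at the junctions, obtaining $\ell_{C'}(e) = \tilde\ell(e) + \ell_C(vw) \geq 0$ on $P'$ in the same way the identity $\tilde\ell = \ell_C - \ell_C(vw)$ arose on $C$. Thus every label of $C'$ is non-negative. If some label is strictly positive, then $C'$ is decreasing, contradicting the outermost-ness of $C$ granted by Lemma~\ref{lem:outermost_decreasing_cycle}. Otherwise every label of $C'$ is zero, and Proposition~\ref{lem:rect:two_cycles_horizontal}, applied with $C'$ in the role of the all-zero cycle, forces $C$ to be neither increasing nor decreasing, contradicting the hypothesis. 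Making the label translation on $P'$ precise is the step I expect to demand the most care.

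With the obstacle resolved, all exterior detours at each $u_i$ eventually dead-end, the DFS follows every outgoing $C$-edge, and after $k$ steps reaches $v = u_k$ along $C$. The candidate cycle closed off by $vw$ is therefore exactly $C$, as required.
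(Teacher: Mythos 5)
The overall architecture of your argument — the identity $\tilde\ell(e) = \ell_C(e) - \ell_C(vw) \geq 0$ on $C$-edges (so the filter never removes them), the observation that the left-first rule at a $C$-vertex tries exterior edges before the $C$-edge before interior edges, and the reduction of the main difficulty to exterior detours hitting a later $C$-vertex — matches the paper's proof. However, your treatment of the exterior-detour obstacle has a genuine gap. You form the replacement cycle $C' = C[u_0,u_i] + P' + C[u_j,u_0]$ and assert that ``since $P'$ lies in the exterior of $C$, the interior of $C'$ strictly contains the interior of $C$, so $C'$ is essential.'' This is only one of two possible embeddings of the theta graph $H = C + P'$ on the cylinder. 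In the other embedding, the detour returns to $C$ from the ``wrong side'' (equivalently, $P'$ together with the short arc $C[u_i, u_j]$ — not the long arc through $v,w$ — bounds the outer face of $H$ and wraps around the cylinder). In that situation your cycle $C'$ is \emph{non-essential}: homotopically it is the long arc plus a path whose winding cancels it. Then $\ell_{C'}$ is undefined, the equality $\ell_{C'}(e) = \tilde\ell(e) + \ell_C(vw)$ has no meaning, and the whole contradiction collapses. The paper explicitly distinguishes these two possibilities for which face of $H$ is the outer face and handles the second one with a different replacement cycle (the short arc $C[u_i,u_j]$ concatenated with the \emph{reversed} detour $\reverse{P'}$, which does not contain $vw$) and a different estimate: it shows the labels on the reversed detour are at least $2$ by summing the three clockwise rotations around the degree-$3$ branching vertex $u_i$. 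Nothing in your translation identity produces that bound.

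A secondary, more structural concern is that your inductive invariant (``when the DFS processes $u_i$ for the first time, the tree path to $u_i$ is $C[u_0,u_i]$'') implicitly requires that no exterior detour from any earlier $u_{i'}$ with $i' < i$ has already discovered $u_i$, not only detours from $u_i$ itself; as written, the step quietly assumes this is supplied ``by the obstacle addressed below,'' which only discusses detours from $u_i$. The paper sidesteps this bookkeeping by arguing by contradiction about the first edge of $C[w,v]$ that is missing from the search tree rather than tracking the invariant vertex by vertex. Either formulation can be made to work, but the case distinction on the two possible outer faces of $H$ is essential and must be added for the proof to be complete.
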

\begin{proof}
  Assume %
  that the search does not find
  $C$.  Let $T$ be the tree formed by the edges visited by the search.
  Since the search does not find $C$ by assumption, a part of
  $\subpath{C}{w,v}$ does not belong to $T$.  Let $xy$ be the first
  edge on $\subpath{C}{w,v}$ that is not visited, i.e.,
  $\subpath{C}{w,x}$ is a part of $T$ but $xy\not\in T$.  There are
  two possible reasons for this.  Either $\tilde\ell(xy) < 0$ or $y$
  has already been visited before via another path $Q$ from $w$ with
  $Q\neq \subpath{C}{w,y}$.  The case $\tilde{\ell}(xy)<0$ can be
  excluded as follows.  By the construction of the labels
  $\tilde\ell$, for any path $P$ from $w$ to a vertex $z$ in $T$ and any edge
  $e'$ incident to $z$ we have $\tilde\ell(e') = \rot(vw+P+e')$.  In
  particular, $\tilde{\ell}(xy) = \rot(\subpath{C}{vw, xy}) = \ell_C(xy) -
  \ell_C(vw) \geq 0$ since the rotation can be rewritten as a label difference 
  (see~\cite[Obs.~7]{bnrw-ttsmford-17-arxiv}) and $vw$ has the smallest label 
  on $C$.
  
  Hence, $T$ contains a path $Q$ from $w$ to $x$ that was found by the
  search before and $Q$ does not completely lie on $C$. There is a
  prefix of $Q$ (possibly of length $0$) lying on $C$ followed by a
  subpath not on $C$ until the first vertex $p$ of $Q$ that again
  belongs to $C$; see Figure~\ref{fig:prefix}.  We set $P=\subpath{Q}{w, p}$  
  and denote the vertex where $P$ leaves $C$
  by $b$.  By construction the edge $vw$ lies on $\subpath{C}{p,b}$.
  The subgraph $H=P+C$ that is formed by the decreasing cycle $C$ and
  the path $P$ consists of the three internally
  vertex-disjoint paths $\subpath{P}{b,p}$, $\subpath{C}{b,p}$ and
  $\subpath{\reverse{C}}{b,p}$ between $b$ and $p$.  Since edges that
  are further left are preferred during the search, the clockwise
  order of these paths around $b$ and $p$ is fixed.  In $H$ there are
  three faces, bounded by $C$,
  $\subpath{\reverse{C}}{b,p}+\subpath{\reverse{P}}{p,b}$ and
  $\subpath{P}{b,p}+\subpath{\reverse{C}}{p,b}$, respectively.  Since
  $C$ is an essential cycle and a face in $H$, it is the central face
  and one of the two other faces is the outer face. These two
  possibilities are shown in Figure~\ref{fig:search_path}.  We denote
  the cycle bounding the outer face but in which the edges are
  directed such that the outer face lies locally to the left by
  $C'$. That is, the boundary of the outer face is $\reverse{C'}$.  We
  distinguish cases based on which of the two possible cycles
  constitutes $\reverse{C'}$.

    \begin{figure}
    \centering
    \begin{subfigure}[b]{0.43\textwidth}
      \centering
      \includegraphics[]{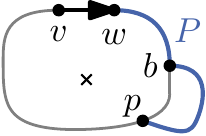}
      \subcaption{The edge $vw$ lies on the outer face of~$H$.}
      \label{fig:search_path-forward}
    \end{subfigure}
    \hspace{1em}
    \begin{subfigure}[b]{0.52\textwidth}
      \centering
      \includegraphics[]{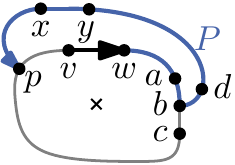}
      \subcaption{The edge $vw$ does not lie on the outer face of $H$.}
      \label{fig:search_path-backward}
    \end{subfigure} 
    \caption{The two possible embeddings of the subgraph formed by the  
    decreasing cycle $C$ and the path $P$, which was found by the  
    search.}
    \label{fig:search_path}
  \end{figure}
  
  If
  $\reverse{C'}=\subpath{\reverse{C}}{b,p}+\subpath{\reverse{P}}{p,b}$
  forms the outer face of $H$, $vw$ lies on $C'$ as illustrated in
  Figure~\ref{fig:search_path-forward} and we show that $C'$ is a
  decreasing cycle, which contradicts the assumption that $C$ is the
  outermost decreasing cycle.  Since $P$ is simple and lies in the
  exterior of $C$, the path $P$ is contained in $C'$, which
  means $\subpath{C'}{w, p}= P$. The other part of $C'$ is formed by
  $\subpath{C}{p, w}$. Since $C$ forms the central face of $H$, the
  labels of the edges on $\subpath{C}{p,w}$ are the same for $C$ and
  $C'$ by Proposition~\ref{lem:repr:equal_labels_at_intersection}. In
  particular, $\ell_C(vw) = \ell_{C'}(vw)$ and all the labels of edges
  on $\subpath{C}{p,w}$ are non-negative because $C$ is
  decreasing. The label of any edge $e$ on both $C'$ and $P$ is
  $\ell_{C'}(e) = \ell_{C'}(vw) + \rot(vw + \subpath{P}{w, e}) =
  \ell_C(vw) + \tilde{\ell}(e) \geq 0$.  Thus, the labeling of $C'$ is
  non-negative. Further, not all labels of $C'$ are $0$ since
  otherwise $C$ would not be a decreasing cycle by
  Proposition~\ref{lem:rect:two_cycles_horizontal}.  Hence, $C'$
  is decreasing and contains $C$ in its interior, a contradiction.
  
  If $\reverse{C'} = \subpath{\reverse{C}}{p,b}+\subpath{P}{b,p}$, 
  the edge $vw$ does not lie on $C'$; see 
  Figure~\ref{fig:search_path-backward}.
  We show that $C'$ is a decreasing cycle containing $C$ in 
  its interior, again contradicting the choice of $C$.
  As above, Proposition~\ref{lem:repr:equal_labels_at_intersection}
  implies that the common edges of $C$ and $C'$ have the same labels
  on both cycles.  It remains to show that all edges $xy$ on
  $\subpath{\reverse{P}}{p,b}$ have non-negative labels.  To establish
  this we use paths to the edge that follows $b$ on $C$.
  This edge $bc$ has the same label on both cycles and thus provides a handle
  on $\ell_{C'}(xy)$.  We make use of the following equations, which
  follow immediately from the definition of the (search) labels.
  \begin{align*}
    \ell_{C'}(bc) &= \ell_{C'}(xy) + \rot(\subpath{\reverse{P}}{xy,db}) + \rot(dbc) = \ell_{C'}(xy) - \rot(\subpath{P}{bd,yx}) - \rot(cbd)\\
    \ell_{C}(bc)  &= \ell_{C}(vw)  + \rot(\subpath{C}{vw,ab})           + \rot(abc)\\
    \tilde\ell(yx) &= \rot(\subpath{C}{vw,ab}) + \rot(abd) + \rot(\subpath{P}{bd,yx})
  \end{align*}
  Since $\ell_C(bc) = \ell_{C'}(bc)$ and $\rot(abd) = -\rot(dba)$, we
  thus get
  \begin{align*}
    \ell_{C'}(xy) &= \ell_{C}(vw) + \rot(\subpath{C}{vw,ab}) + \rot(abc) +\rot(\subpath{P}{bd,yx}) + \rot(cbd)\\
                 &= \ell_{C}(vw) + \tilde\ell(yx) + \rot(dba) + \rot(abc) + \rot(cbd).
  \end{align*}
  Since $\ell_C(vw) \ge 0$ and $\tilde\ell(yx) \ge 0$ (as $yx$ was not filtered 
  out), it follows that
  $\ell_{C'}(xy) \ge \rot(abc) + \rot(dba) + \rot(cbd) = 2$ as this is
  the sum of clockwise rotations around a degree-3 vertex.  Hence,
  $C'$ is decreasing and contains $C$ in its interior, a
  contradiction.
  Since both embeddings of $H$ lead to a contradiction, we obtain a
  contradiction to our initial assumption that the search fails to
  find $C$.
\end{proof}

  The left-first DFS clearly runs in $\O(n)$ time.  In order to
  guarantee that the search finds a decreasing cycle if one exists, we
  run it for each of the $O(n)$ directed edges of $G$.  Since some
  edge must have the lowest label on the outermost decreasing cycle,
  Lemma~\ref{lem:dfs_correctness} guarantees that we eventually find a
  decreasing cycle if one exists.  Increasing cycles can be found by
  finding decreasing cycles in the mirror representation
  $\mirror{\Gamma}$ (Lemma~\ref{lem:mirroring_label}).

\begin{theorem}
  \label{thm:test-valid}
  Let $G$ be a planar 4-graph on $n$ vertices and let 
  $\Gamma$ be an
  ortho-radial representation of $G$.  It can be determined in
  $\O(n^2)$ time whether $\Gamma$ is valid.
\end{theorem}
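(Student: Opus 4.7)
The plan is to combine a linear-time check of the two local validity conditions with the DFS-based search for decreasing cycles set up above, handling increasing cycles via the mirror symmetry.

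First I would verify the local validity conditions on $\Gamma$ in $\O(n)$ time: the angle sum around each vertex must equal $360\degree$, and $\rot(f)$ must have the prescribed value for each face. Summing the angles at each vertex and the rotations along each facial walk takes time linear in the total incidence count, so this stage is essentially free. If either condition fails, $\Gamma$ is not even a valid ortho-radial representation and we reject immediately. I would also apply Lemma~\ref{lem:degree-1-removal} once so that we may assume minimum degree two without losing any information about monotone cycles.

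Second, I would search for decreasing cycles. By Lemma~\ref{lem:outermost_decreasing_cycle}, if any decreasing cycle exists then a unique outermost one $C$ exists, and by Lemma~\ref{lem:dfs_correctness} the left-first DFS started from an edge of $C$ having minimum label on $C$ reconstructs $C$. Since I do not know a priori which directed edge plays that role, I run the DFS once from each of the $O(n)$ directed edges of $G$, each invocation costing $\O(n)$ time because every vertex is visited at most once. Each candidate cycle returned then has to be certified as genuinely decreasing, because (as Figure~\ref{fig:nondecreasing_cycle_found} illustrates) the search may close a candidate cycle that is non-essential or whose true labels are not all non-negative. For the certification I would find an elementary path from the target of $e^\star$ to any vertex of the candidate by BFS, compute the actual labels $\ell_C(e)$ incrementally by walking around $C$ and updating rotations, and then check essentiality, non-negativity of all labels, and strict positivity of at least one label. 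Each certification is $\O(n)$, so the total cost of the decreasing-cycle phase is $\O(n)\cdot\O(n)=\O(n^2)$.

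Third, I would repeat the entire procedure on $\mirror{\Gamma}$, which can be constructed in $\O(n)$ time; by Lemma~\ref{lem:mirroring_label} the increasing cycles of $\Gamma$ are in bijection with the decreasing cycles of $\mirror{\Gamma}$, so the same algorithm detects them. If neither phase reports a monotone cycle, we declare $\Gamma$ valid. Summing everything gives $\O(n^2)$ overall.

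The main obstacle here is already defused by Lemma~\ref{lem:dfs_correctness} and Lemma~\ref{lem:outermost_decreasing_cycle}: completeness of the search follows because some directed edge must realize the minimum label on the outermost decreasing cycle, and soundness follows because every candidate returned is explicitly re-checked against its true labels. The remaining care is only bookkeeping --- ensuring that verification really fits in $\O(n)$ per candidate and that the bound on the number of DFS invocations stays $\O(n)$ --- both of which follow from standard planar-graph data structures and the fact that each DFS visits each vertex at most once.
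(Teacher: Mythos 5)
Your proposal matches the paper's argument essentially step for step: linear-time check of the two local conditions, one left-first DFS per directed edge to generate candidate decreasing cycles (correctness guaranteed by Lemma~\ref{lem:outermost_decreasing_cycle} and Lemma~\ref{lem:dfs_correctness}), an $\O(n)$ certification of each candidate against its true labels, and a second pass on $\mirror{\Gamma}$ to catch increasing cycles, for $\O(n^2)$ overall. The only inessential addition is the invocation of Lemma~\ref{lem:degree-1-removal}, which the paper uses mainly for the rectangulation phase rather than for the validity test itself; it does no harm here.
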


\section{Rectangulation}\label{sec:rectangulation}

\begin{figure}[t]
  \centering
     \begin{minipage}[b]{0.48\textwidth}
    \begin{subfigure}[b]{\textwidth}
      \centering
      \includegraphics[]{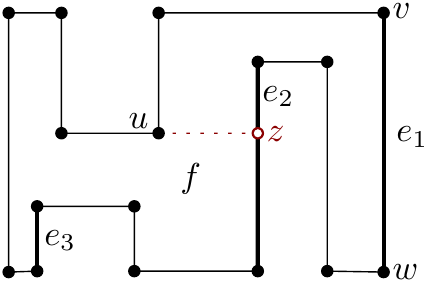}
      \subcaption{Multiple edge candidates.}
      \label{fig:hshape}
    \end{subfigure}    
  \end{minipage}
    \begin{minipage}[b]{0.25\textwidth}
    \begin{subfigure}[b]{\textwidth}
      \centering
      \includegraphics[page=2,scale=1]{figures/hshape.pdf}
      \subcaption{Vertical. }
      \label{fig:aug-vert}
    \end{subfigure}
    \begin{subfigure}[b]{\textwidth}
      \centering
      \includegraphics[page=3]{figures/hshape.pdf}
      \subcaption{Horizontal. }
      \label{fig:aug-horz-decreasing}
    \end{subfigure}
    \end{minipage}
    \begin{minipage}[b]{0.25\textwidth}
    \begin{subfigure}[b]{\textwidth}
      \centering
      \includegraphics[page=4]{figures/hshape.pdf}
      \subcaption{Horizontal. }
      \label{fig:aug-horz-valid}
    \end{subfigure}
    \begin{subfigure}[b]{\textwidth}
      \centering
      \includegraphics[page=5]{figures/hshape.pdf}
      \subcaption{Horizontal. }
      \label{fig:aug-path}
    \end{subfigure}
   \end{minipage}
  
  \caption{Examples of augmentations.
    (\protect\subref{fig:hshape})~The candidate edges of $u$ are $e_1$, $e_2$ 
    and $e_3$.
    (\protect\subref{fig:aug-vert})~Insertion of vertical edge $uz$. 
    (\protect\subref{fig:aug-horz-decreasing})~$\Gamma^{u}_{vw}$ contains a 
    decreasing cycle.
    (\protect\subref{fig:aug-horz-valid})~$\Gamma^{u}_{vw}$ is
    valid.
    (\protect\subref{fig:aug-path})~Insertion of horizontal edge 
    $uw_{i}$ because there is a horizontal path from $w_i$ to $u$.}
    \label{fig:augmentation}
\end{figure}

The core of the algorithm for drawing a valid ortho-radial
representation $\Gamma$ of a graph $G$ by Barth et
al.~\cite{bnrw-ttsmford-17} is a \emph{rectangulation procedure} that
successively augments $G$ with new vertices and edges to a graph
$G^{*}$ along with a valid ortho-radial representation $\Gamma^{*}$
where every face of $G^{*}$ is a \emph{rectangle}.  %
A regular face is a rectangle if it has exactly four turns, which are all
right turns. The outer and central faces are rectangles if they have no turns.
The ortho-radial representation~$\Gamma^{*}$ is then drawn by computing 
flows in two flow networks~\cite[Thm.\ 
18]{bnrw-ttsmford-17-arxiv}.

To facilitate the analysis, we
briefly sketch the augmentation procedure.  Here it is crucial that we
assume our instances to be normalized; in particular they do not have
degree-1 vertices.
The augmentation algorithm works by augmenting non-rectangular faces
one by one, thereby successively removing concave angles at the
vertices until all faces are rectangles.  Consider a face $f$ with a
left turn (i.e., a concave angle) at $u$ such that the following two
turns when walking along $f$ (in clockwise direction) are right turns;
see Figure~\ref{fig:augmentation}. We call $u$ a \emph{port} of $f$.
We define a set of \emph{candidate edges} that contains precisely
those edges $vw$ of $f$, for which $\rot(\subpath{f}{u, vw}) = 2$; see
Figure~\ref{fig:hshape}.  We treat this set as a sequence, where the
edges appear in the same order as in $f$, beginning with the first
candidate after $u$.  The \emph{augmentation} $\Gamma^u_{vw}$ with
respect to a candidate edge $vw$ is obtained by splitting the edge
$vw$ into the edges $vz$ and $zw$, where $z$ is a new vertex, and
adding the edge $uz$ in the interior of $f$ such that the angle formed
by $zu$ and the edge following $u$ on $f$ is $90\degree$.  The
direction of the new edge $uz$ in $\Gamma^{u}_{vw}$ is the same for
all candidate edges.  If this direction is vertical, we call $u$ a
\emph{vertical port} and otherwise a \emph{horizontal port}. We note
that any vertex with a concave angle in a face becomes a port during
the augmentation process. In particular, the incoming edge of the
vertex determines whether the port is horizontal or
vertical.  The condition for candidates guarantees that
$\Gamma^{u}_{vw}$ is an ortho-radial representation. It may, however,
not be valid.  The crucial steps in~\cite{bnrw-ttsmford-17} are
establishing the following facts.
\begin{compactenum}[{Fact} 1)]
\item Let $u$ be a vertical port. Augmenting with the first candidate
  never produces a monotone cycle~\cite[Lemma
  21]{bnrw-ttsmford-17-arxiv}.\label{fact:vertical-candidate}
\item Let $u$ be a horizontal port. Augmenting with the first
  candidate never produces an increasing cycle~\cite[Lemma
  22]{bnrw-ttsmford-17-arxiv} and augmenting with the last candidate
  never produces a decreasing cycle~\cite[Lemma
  24]{bnrw-ttsmford-17-arxiv}.\label{fact:first-last}
\item Let $u$ be a horizontal port. If two consecutive candidates
  $e_{i}=v_iw_i$ and $e_{i+1}=v_{i+1}w_{i+1}$ produce a decreasing and
  an increasing cycle, respectively, then $w_{i}$, $v_{i+1}$ and $u$
  lie on a path that starts at $w_{i}$ or
  $v_{i+1}$, and whose edges all point right; see
  Figure~\ref{fig:aug-path}. A suitable
  augmentation can be found in $O(n)$ time.
  \cite[Lemmas 25, 26]{bnrw-ttsmford-17-arxiv}.\label{fact:consecutive-done}
\end{compactenum}
It thus suffices to test for each candidate whether $\Gamma^{u}_{vw}$
is valid until either such a \emph{valid augmentation} is found or we
find two consecutive candidate edges where the first produces a decreasing cycle
and the second produces an increasing cycle. Then,
Fact~\ref{fact:consecutive-done} yields the desired valid
augmentation.
Since each valid augmentation reduces the number of concave angles, we
obtain a rectangulation after $O(n)$ valid augmentations.  Moreover,
there are $O(n)$ candidates for each augmentation, each of which can
be tested for validity (and increasing/decreasing cycles can be
detected) in $O(n^2)$ time by Theorem~\ref{thm:test-valid}. Thus, the
augmentation algorithm can be implemented to run in $O(n^{4})$ time.

In the remainder of this section we present an improvement to $O(n^{2})$ time, 
which is achieved in two steps.  First, we show that due to
the nature of augmentations the validity test can be done in $O(n)$
time (Section~\ref{sec:faster-validity-test-appendix}).  Second, for each augmentation we execute a post-processing that
reduces the number of validity tests  to
$O(n)$ in total (Section~\ref{sec:2-phase-appendix}).

\subsection{1st Improvement -- Faster Validity Test}
\label{sec:faster-validity-test-appendix}
\label{sec:faster-validity-test-main}

The general test for monotone cycles performs one left-first depth
first search per edge and runs in $\O(n^2)$ time. However, we can
exploit the special structure of the augmentation to reduce the
running time to $\O(n)$. For the proof we restrict
ourselves to the case that the inserted edge $uz$ points to the right.
The case that it points left can be handled by flipping the
representation using Lemma~\ref{lem:flip_label}.
 
The key result is that in any decreasing cycle of an augmentation the new
edge $uz$ has the minimum label.  Thus, performing only one left-first DFS 
starting at $uz$ is sufficient.  For increasing
cycles the arguments do not hold, but in a second step we show that
the test for increasing cycles can be replaced by a simple test for
horizontal paths.

Recall that the augmentations $\Gamma^u_{vw}$ that are tested during
the rectangulation are built by adding one edge $uz$ to a valid
representation~$\Gamma$. Hence, any monotone cycle in
$\Gamma^u_{vw}$ contains the edge $uz$.
 
We first show that the new edge $uz$ has label $0$ on any
decreasing cycle in the augmentation $\Gamma^u_{vw}$ if $vw$ is the first
candidate. We extend this result afterwards to augmentations to all
candidates. Since the label of edges on decreasing cycles is
non-negative, this implies in particular that the label of $uz$ is
minimum, which is sufficient for the left-first DFS to succeed (see
Lemma~\ref{lem:dfs_correctness}).

\begin{figure}[t]
    \centering
    \includegraphics{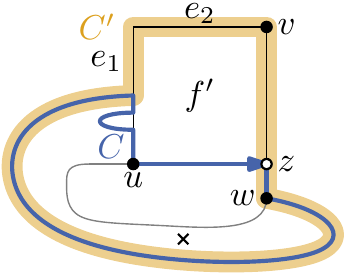}
    \caption{A decreasing cycle $C$ that uses $uz$ and an essential 
      cycle $C'$ derived from $C$.}
    \label{fig:first_candidate}
\end{figure}

\begin{lemma}
  \label{lem:label_first_candidate}
  Let $vw$ be the first candidate on $f$ after $u$. If $\Gamma^u_{vw}$ 
  contains a decreasing cycle~$C$, then $C$ contains $uz$ in this direction 
  and $\ell_C(uz) = 0$.
\end{lemma}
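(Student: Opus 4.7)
My plan is to argue by contradiction, leveraging the validity of the underlying ortho-radial representation $\Gamma$. I would first observe that since $\Gamma$ contains no decreasing cycle, any decreasing cycle $C$ in $\Gamma^u_{vw}$ must use some newly inserted edge. Since the subdivision edges $vz$ and $zw$ together play the role of the original edge $vw$ of $\Gamma$, the only genuinely new edge that a cycle of $\Gamma^u_{vw}$ can carry is $uz$. Hence $C$ passes through $z$ and contains $uz$ together with one of $zw$ or $zv$.

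Now suppose for contradiction that $C$ either traverses $uz$ in the reverse direction $z \to u$ or that $\ell_C(uz) > 0$. I would construct a cycle $C^\prime$ in the original representation $\Gamma$ by replacing the two edges of $C$ incident to $z$ with a detour along the boundary of the original face $f$ via the first candidate: if, say, $C$ uses $u \to z \to w$, replace this subpath by $\subpath{f}{u,v}$ followed by $vw$, and symmetrically in the other sub-cases; see Figure~\ref{fig:first_candidate}. Because $\subpath{f}{u,v}$ together with $vz$ and $zu$ bounds a rectangular sub-face of $\Gamma^u_{vw}$ (using the first-candidate property $\rot(\subpath{f}{u,vw})=2$), one can verify that $C^\prime$ is a simple essential cycle of $\Gamma$.

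To reach a contradiction, I would then show that $C^\prime$ is decreasing in $\Gamma$. The labels of edges common to $C$ and $C^\prime$ coincide by Proposition~\ref{lem:repr:equal_labels_at_intersection} applied to the two-cycle subgraph $C + C^\prime$; these shared edges lie on the boundary of the central face of that subgraph. For the edges on the detour $\subpath{f}{u,v}$, I would compute their labels on $C^\prime$ by cumulating rotations starting from a shared reference edge near $u$. The key input is the \emph{minimality} of $vw$ as the first candidate, which guarantees that every proper prefix of $\subpath{f}{u,v}$ has rotation strictly less than $2$. Combined with the assumption $\ell_C(uz) > 0$ (or the wrong-direction assumption, which likewise shifts the labels along the detour by a non-negative amount), this yields that all labels on $C^\prime$ are non-negative and at least one is positive, so $C^\prime$ is a decreasing cycle of $\Gamma$, contradicting validity.

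The principal obstacle is handling the sign and directional asymmetries cleanly: identifying which side of $f$ to use for the detour depending on the direction in which $C$ traverses $uz$ and the choice of exit edge at $z$, correctly identifying the central face of $C + C^\prime$ so that Proposition~\ref{lem:repr:equal_labels_at_intersection} applies on the desired edges, and keeping track of signs when relating $\rot(\subpath{f}{u,v})$ to the labels picked up on $C^\prime$. It is precisely the first-candidate minimality that prevents the detour from accumulating too much rotation; for a later candidate, some prefix of the detour along $f$ would already reach rotation $2$, and the induced labels on $C^\prime$ could become negative, breaking the argument.
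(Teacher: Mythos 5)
Your construction of $C'$ --- rerouting $C$ around the rectangular face bounded by $\subpath{f}{u,v}$, $vz$, $zu$ --- is the same cycle the paper obtains (there abstractly, via Lemma~14 of \cite{bnrw-ttsmford-17-arxiv}), and comparing labels via Proposition~\ref{lem:repr:equal_labels_at_intersection} is also the paper's plan. So the skeleton is right. However, the step you flag as the ``key input'' has the inequality running the wrong way, and one of the two directional sub-cases is not actually covered by your assumptions.

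Concretely, for the $uz$-direction case the detour labels satisfy $\ell_{C'}(e) = \ell_{C'}(ua) + \rot(\subpath{f}{u,e})$ with $\ell_{C'}(ua) = \ell_C(uz)-1$, so to get non-negativity you need a \emph{lower} bound $\rot(\subpath{f}{u,e}) \geq 0$, not the \emph{upper} bound $\rot(\subpath{f}{u,e})<2$ that you invoke. That lower bound does hold here, but the reason is not first-candidate minimality alone: a port is by definition followed by two consecutive \emph{right} turns, and those two right turns already drive the prefix rotation from $0$ up to $2$ exactly at the first candidate; hence on $\subpath{f}{u,v}$ there is no left turn, the new face is genuinely a rectangle, and the prefix rotations are non-negative. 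Your closing sentence has this reversed as well: reaching rotation $2$ early on a later candidate's detour is harmless; what breaks the argument is a \emph{left turn} somewhere on $\subpath{f}{u,v_i}$, which can push some prefix rotation (and thus some $\ell_{C'}$ value) negative. The paper phrases the needed bound as: since $f'$ is rectangular, labels on $C'\cap f'$ differ from $\ell_C(uz)$ by at most one.

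The more substantive omission is the divisibility argument. For $zu \in C$ you only know $\ell_C(zu) \geq 0$ from $C$ being decreasing, your assumption ``$\ell_C(uz)>0$'' is unavailable, and the detour labels go down to $\ell_C(zu)-1$, which could be $-1$ --- so the contradiction does not materialize. The paper closes this by observing that $zu$ points left, hence $\ell_C(zu)\equiv 2 \pmod 4$ and therefore $\ell_C(zu)\geq 2$, putting all detour labels at $\geq 1$. (It also uses divisibility by $4$ in the $uz$ case to get $\ell_C(uz)\geq 4$, although there your weaker $\geq 1$ would indeed have sufficed.) Without this observation the reversed-direction sub-case, which you dispatch with the vague phrase about a ``non-negative shift,'' is a genuine gap.
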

  
\begin{proof}
  This proof uses ideas from the proof of Lemma~22 of \cite{bnrw-ttsmford-17-arxiv}.
  We first consider the case that $C$ uses $uz$ (and not $zu$) and assume for 
  the sake of contradiction that $\ell_C(uz)\neq 0$; see 
  Figure~\ref{fig:first_candidate}. Since $uz$ points right, $\ell_C(uz)$ is 
  divisible by $4$. Together with $\ell_C(uz) \ge 0$ because $C$ is 
  decreasing, we obtain $\ell_C(uz) \geq 4$. By Lemma~14 of 
  \cite{bnrw-ttsmford-17-arxiv} there is an essential cycle $C'$ without $uz$ in the 
  subgraph $H$ that is formed by the new rectangular face $f'$ and $C$. The 
  labels of any 
  common edge $e$ of $C$ and $C'$ are equal and $\ell_{C'}(e) = \ell_C(e) \geq 
  0$. All other edges 
  of $C'$ lie on $f'$. Since $f'$ is rectangular, the labels of 
  these 
  edges differ by at most $1$ from $\ell_C(uz)$. By assumption it is 
  $\ell_C(uz) 
  \geq 4$ and therefore $\ell_{C'}(e)\geq 3$ for all edges $e\in C'\cap f'$. 
  Hence, $C'$ is a decreasing cycle in $G$ contradicting the validity of 
  $\Gamma$.
  
  If $zu\in C$, it is $\ell_C(zu)\geq 2$ and a similar argument yields a 
  decreasing cycle in~$\Gamma$.
\end{proof}

While the same statement does not generally hold for all candidates,
it does hold if the first candidate creates a decreasing cycle.

\begin{figure}[t]
  \centering
  \includegraphics{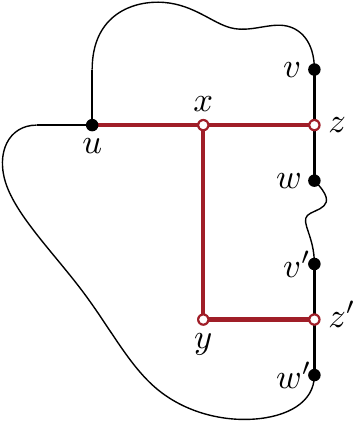}
  \caption{The structure used to simulate the simultaneous insertion of $uz$ to 
    $vw$ and $uz'$ to $v'w'$.}
  \label{fig:structure}
\end{figure}

\begin{lemma}
  \label{lem:label_any_candidate}
  Let $vw$ be the first candidate and $v'w'$ be another candidate. Denote the 
  edge inserted in $\Gamma^u_{v'w'}$ by $uz'$.
  If $\Gamma^u_{vw}$ contains a decreasing cycle, any decreasing cycle $C'$ 
  in $\Gamma^u_{v'w'}$ uses $uz'$ in this direction 
  and $\ell_{C'}(uz') = 0$.
\end{lemma}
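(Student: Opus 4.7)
The plan is to argue by contradiction, leveraging Lemma~\ref{lem:label_first_candidate} via a combined construction that realizes both augmentations simultaneously.

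\textbf{Setup and reduction.} Assume $\Gamma^u_{v'w'}$ contains a decreasing cycle $C'$. Since $\Gamma$ itself is valid, $C'$ must traverse the new edge, i.e., $uz' \in C'$ or $z'u \in C'$. Since the direction of the inserted edge is the same for all candidates and we are in the right-pointing case, $\ell_{C'}(uz')$ is a non-negative multiple of $4$ (respectively $\ell_{C'}(z'u)$ is a non-negative integer congruent to $2\bmod 4$). It therefore suffices to rule out $\ell_{C'}(uz')\ge 4$ and $\ell_{C'}(z'u)\ge 2$; the two cases are handled in the same way, so I describe only the former.

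\textbf{Combined construction.} I introduce an auxiliary ortho-radial representation $\tilde{\Gamma}$ (the structure sketched in Figure~\ref{fig:structure}) that performs both augmentations at once on $\Gamma$. I subdivide $vw$ at a new vertex $z$ and $v'w'$ at a new vertex $z'$, and replace the port $u$ by a short detour in the interior of $f$ that introduces two copies $u_{1}$ and $u_{2}$ of $u$; to $u_{1}$ I attach the edge $u_{1}z$, and to $u_{2}$ I attach $u_{2}z'$. Because both new edges point in the same direction, the detour can be laid out with only $90\degree$ angles so that $\tilde{\Gamma}$ is a well-defined ortho-radial representation, and, crucially, the face between $u_{1}z$, $u_{2}z'$, and the detour is a rectangle.

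\textbf{Transfer of cycles and labels.} Contracting the detour together with the edge $u_{2}z'$ recovers $\Gamma^u_{vw}$, while contracting the detour together with $u_{1}z$ recovers $\Gamma^u_{v'w'}$. These contractions preserve the rotation of any path that avoids the contracted edges, so labels of essential cycles transfer between $\tilde{\Gamma}$ and the two augmentations. In particular, $C'$ lifts to an essential cycle $\tilde{C}'$ in $\tilde{\Gamma}$ using $u_{2}z'$ with $\ell_{\tilde{C}'}(u_{2}z') = \ell_{C'}(uz')$.

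\textbf{Deriving the contradiction.} Suppose $\ell_{\tilde{C}'}(u_{2}z')\ge 4$. Applying the argument of Lemma~\ref{lem:label_first_candidate} to $\tilde{C}'$ and the rectangular pocket face yields, by Lemma~14 of \cite{bnrw-ttsmford-17-arxiv}, an essential cycle of $\tilde{\Gamma}$ that avoids $u_{2}z'$ but whose labels are within $\pm 1$ of those on $\tilde{C}'$, hence all $\ge 3$. Applying the same swap once more across $u_{1}z$ along the boundary of the new face produced by the first augmentation removes the remaining inserted edge and yields an essential cycle $C''$ in $\Gamma$ itself whose labels are non-negative and not all zero. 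Thus $C''$ is a decreasing cycle of $\Gamma$, contradicting the validity of $\Gamma$. Therefore $\ell_{C'}(uz')=0$ and, by the parity argument, $uz'\in C'$ in the stated direction.

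\textbf{Main obstacle.} The hardest part is the careful bookkeeping for $\tilde{\Gamma}$: verifying that the detour at $u$ can indeed be realized as a valid ortho-radial representation (which relies essentially on the fact that $uz$ and $uz'$ point in the same direction), and that the two contractions translate cycles and labels correctly in both directions. Once this is in place, the swapping step is a direct reuse of the proof technique already employed in Lemma~\ref{lem:label_first_candidate}.
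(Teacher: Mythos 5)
There is a genuine gap in your argument, and it is located at the ``rectangular pocket face.'' In the combined structure $\tilde{\Gamma}$ (whether in your two-copy version or in the paper's version from Figure~\ref{fig:structure}), the face squeezed between the two new edges $u_{1}z$ and $u_{2}z'$ (resp.\ $xz$ and $xyz'$) is bounded not only by these new edges and the detour, but also by the portion of the original face $f$ running from $z$ on $vw$ to $z'$ on $v'w'$. Between any two consecutive candidates there is at least one concave angle on $f$ (this is exactly what Figure~\ref{fig:hshape} illustrates), so this pocket face is \emph{not} a rectangle once $v'w' \neq vw$. Consequently the Lemma~14 swap does not give the ``$\pm 1$'' bound you invoke: the labels on the swapped cycle can drop well below $3$, and the claimed contradiction evaporates. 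The rectangularity that drives the proof of Lemma~\ref{lem:label_first_candidate} is specific to the \emph{first} candidate, and does not transfer to later candidates.

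A second, structural symptom of the gap is that your proof never uses the hypothesis that $\Gamma^u_{vw}$ contains a decreasing cycle. The paper's proof relies on it essentially: it lifts the decreasing cycle $C$ of $\Gamma^u_{vw}$ to a cycle in $\tilde{\Gamma}$ through $ux$ with $\ell_{C}(ux)=0$ (Lemma~\ref{lem:label_first_candidate}), forms $H = C + \tilde{C}'$, and argues about the central face $\tilde{f}$ of $H$. Showing that $ux$ must lie on $\tilde{f}$ uses the decreasing cycle $C$ to rule out the alternative (an edge of $C$ leaving $\tilde{f}$ would get label $-1$, and an edge of $\tilde f$ with a negative label would lie on $C$), and then $\ell_{\tilde{C}'}(ux)=\ell_C(ux)=0$ follows from Proposition~\ref{lem:repr:equal_labels_at_intersection}. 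Your double-swap, if it worked, would establish the conclusion without this hypothesis, i.e.\ a strictly stronger statement than the lemma; the paper deliberately does not claim that strengthening, which is a strong signal that the swap-only approach is not salvageable here. To fix your argument you would need to replace the swap across the non-rectangular pocket by the sandwich argument comparing $C$ and $\tilde C'$ through the central face of their union, exactly as in Lemma~\ref{lem:label_any_candidate}'s stated proof.
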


\begin{proof}
  In order to simulate the insertion of 
  two new edges to both $vw$ and $v'w'$ we use the structure from
  the proof of Lemma~25 of~\cite{bnrw-ttsmford-17-arxiv}; see 
  Figure~\ref{fig:structure}. We denote the resulting 
  augmented representation by $\tilde\Gamma$.
  There is a one-to-one correspondence between decreasing cycles in 
  $\Gamma^u_{vw}$ and decreasing cycles in $\tilde{\Gamma}$ containing $uxz$. 
  Let $C$ be a decreasing cycle in $\tilde{\Gamma}$ containing $uxz$. By 
  Lemma~\ref{lem:label_first_candidate} the cycle $C$ contains $uxz$ in this 
  direction, and 
  we have $\ell_{C}(ux) = 0$.
  
  Similarly, for any decreasing cycle in $\Gamma^u_{v'w'}$ there is a 
  decreasing 
  cycle in $\tilde{\Gamma}$ where $uz'$ ($z'u$) is replaced by the path $uxyz'$ 
  ($z'yxu$). Let $\tilde{C'}$ be the decreasing cycle in $\tilde{\Gamma}$ that 
  corresponds to the decreasing cycle $C'$ in $\Gamma^u_{v'w'}$. We have 
  $\ell_{\tilde{C'}}(ux) = \ell_{C'}(uz')$.
  
  Suppose for now that $C'$ uses $uz'$ in this direction, which means that  
  $\tilde{C'}$ uses $ux$. Let $\tilde{f}$ be the central face of 
  $H=C+\tilde{C'}$. Either $ux$ lies on the boundary of $\tilde{f}$ or not. 
  Assume for the sake of contradiction that $ux$ does not lie on $\tilde{f}$. 
  Then, $\tilde{f}$ includes neither $xz$, $xy$ nor $yz'$. Hence, 
  $\tilde{f}$ is formed exclusively by edges that are present in $G$. Since 
  $\Gamma$ is valid, either all labels of $\tilde{f}$ are $0$ or there is an 
  edge 
  on $\tilde{f}$ with a negative label.
  
  In the first case there is an edge $e$ of $C$ leaving $\tilde{f}$, i.e., $e$ 
  starts at a vertex of $\tilde{f}$ and ends at a vertex in the exterior of 
  $\tilde{f}$.
  If no such edge existed, $\tilde{f}$ would be formed exclusively by edges of 
  $C$ or exclusively by edges of $\tilde{C'}$. This would imply that one of 
  these 
  cycles is not simple.
  But for that edge~$e$ it is $\ell_C(e) = -1$ contradicting the assumption 
  that 
  $C$ is decreasing.
  
  In the second case there is an edge~$e$ on $\tilde{f}$ with
  $\ell_{\tilde{f}}(e)<0$. This edge belongs to at least one of the
  cycles $C$ and $\tilde{C'}$, say $C$. But then it is
  $\ell_C(e) = \ell_{\tilde{f}}(e)<0$ by Proposition~\ref{lem:repr:equal_labels_at_intersection},
  contradicting again that $C$ is decreasing.
  Thus, $ux$ lies on $\tilde{f}$, and therefore we obtain from
  Proposition~\ref{lem:repr:equal_labels_at_intersection} that
  $\ell_{\tilde{C'}}(ux)=\ell_{C}(ux)=0$, where the last equality
  follows from Lemma~\ref{lem:label_first_candidate}.
  
  Above we assumed that $\tilde{C'}$ uses $ux$ in this direction. This is in 
  fact 
  the only possibility. Assume for the sake of contradiction that 
  $xu\in\tilde{C'}$. If $ux$ does not lie on the central face $\tilde{f}$ (in 
  any 
  direction), we obtain a contradiction as above. Since $C$ is essential and 
  includes $ux$, the central face lies locally to the right of $ux$. Similarly, 
  $\tilde{C'}$ is essential and contains $xu$. Therefore, the central face lies 
  to the right of $xu$ and thus to the left of $ux$. As $\tilde{f}$ cannot be 
  both to the left and the right of $ux$, we have a contradiction.
\end{proof}

\begin{figure}[t]
    \centering
    \includegraphics{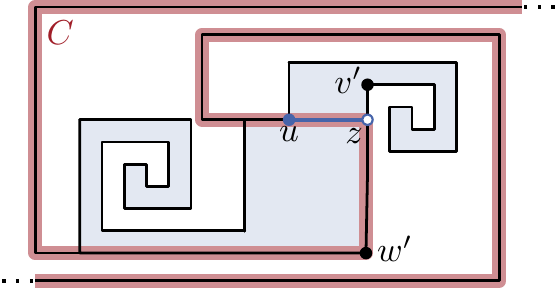}
    \caption{Here, the insertion of the edge~$uz$ to the last 
      candidate~$v'w'$ introduces an increasing cycle~$C$ with $\ell_C(uz) = 
      -4$.}
    \label{fig:increasing_cycle_negative}
\end{figure}

Altogether, we can efficiently test which of the candidates~$e_1,\dots,e_k$ 
produce decreasing cycles as follows. By Lemma~\ref{lem:label_first_candidate}, 
if the first candidate is not valid, then $\Gamma^{u}_{e_{1}}$ has
a decreasing cycle that contains the new edge $uz$ with label~$0$,
which is hence the minimum label for all edges on the cycle.  This can
be tested in $O(n)$ time by Lemma~\ref{lem:dfs_correctness}.
Fact~\ref{fact:first-last} guarantees that we either find a valid
augmentation or a decreasing cycle.  In the former case we are done,
in the second case Lemma~\ref{lem:label_any_candidate} allows us to
similarly restrict the labels of $uz$ to $0$ for the remaining
candidate edges, thus allowing us to detect decreasing cycles in
$\Gamma^u_{e_{i}}$ in $O(n)$ time for $i=2,\dots,k$.

It is tempting to use the mirror symmetry
(Lemma~\ref{lem:mirroring_label},
Appendix~\ref{sec:symmetries-normalization-appendix}) to exchange
increasing and decreasing cycles to deal with increasing cycles in an
analogous fashion.  However, this fails as mirroring invalidates the
property that $u$ is followed by two right turns in clockwise
direction. For example, in Figure~$\ref{fig:increasing_cycle_negative}$ 
inserting the edge to the last candidate introduces an increasing cycle~$C$ 
with $\ell_{C}(uz)=-4$. We therefore give a direct algorithm for
detecting increasing cycles in this case.  %

Let $e_{i}=v_iw_i$ and $e_{i+1}={v_{i+1}w_{i+1}}$ be two consecutive
candidates for $u$ such that $\Gamma^u_{e_{i}}$ contains a
decreasing cycle but $\Gamma^u_{e_{i+1}}$ does not. If
$\Gamma^u_{e_{i+1}}$ contains an increasing cycle, then by
Fact~\ref{fact:consecutive-done} the vertices $w_{i}$, $v_{i+1}$ and
$u$ lie on a path that starts at $w_{i}$ or $v_{i+1}$, and whose edges
all point right. The presence of such a horizontal path $P$ can
clearly be checked in linear time, thus allowing us to also detect
increasing cycles provided that the previous candidate produced a
decreasing cycle. If $P$ exists, we insert the edge $uw_i$ or
$uv_{i+1}$ depending on whether~$P$ starts at $w_i$ or $v_{i+1}$,
respectively; see Figure~\ref{fig:aug-path} for the first case. By
Proposition~\ref{lem:rect:two_cycles_horizontal} this does not produce
monotone cycles. Otherwise, if $P$ does not exist, the augmentation
$\Gamma^u_{e_{i+1}}$ is valid. In both cases we have resolved the
horizontal port~$u$ successfully. %

Summarizing, the overall algorithm for augmenting from a horizontal
port $u$ now works as follows.  By exploiting
Lemmas~\ref{lem:label_first_candidate}
and~\ref{lem:label_any_candidate}, we test the candidates in the order
as they appear on $f$ until we find the first candidate $e$ for which
$\Gamma^u_{e}$ does not contain a decreasing cycle.  Using
Fact~\ref{fact:consecutive-done} we either find that $\Gamma^{u}_{e}$
is valid, or we find a horizontal path as described above. In both
cases this allows us to determine an edge whose insertion does not
introduce a monotone cycle.  Since in each test for a decreasing
cycle the edge $uz$ can be restricted to have label~$0$, each of the
tests takes linear time.  This improves the running time of the
rectangulation algorithm to $O(n^{3})$.

Instead of linearly searching for a suitable candidate for $u$ we can
employ a binary search on the candidates, which reduces the number of
validity tests for $u$ from linear to logarithmic. To do this
efficiently we first compute the list of all candidates
$e_1,\dots,e_k$ for $u$ in time linear to the size of $f$. Next, we
test if the augmentation $\Gamma^u_{e_1}$ is valid. If it is, we are
done.

Otherwise, we start the binary search on the list $e_1,\dots,e_k$, 
where $k$ is the number of candidates for $u$.
The search maintains a sublist $e_i,\dots,e_j$ of consecutive candidates such 
that $\Gamma^u_{e_i}$ contains a decreasing cycle and $\Gamma^u_{e_j}$ does 
not. Note that this invariant holds in the beginning because we explicitly test 
for a decreasing cycle in $\Gamma^u_{e_1}$ and there is no decreasing cycle in 
$\Gamma^u_{e_{k}}$ by Fact~\ref{fact:first-last}.
If the list consists of only two consecutive candidates, i.e., $j=i+1$, we stop.
Otherwise, we set $m=\lfloor (i+j)/2\rfloor$ and test if $\Gamma^u_{e_m}$ 
contains a decreasing cycle. If it does, we recurse on $e_m,\dots,e_j$ and 
otherwise on $e_i,\dots,e_m$.
As the invariant is preserved we end up with two consecutive candidates $e_i$ 
and $e_{i+1}$ such that $\Gamma^u_{e_i}$ contains a decreasing cycle and 
$\Gamma^u_{e_{i+1}}$ does not.
In this situation Fact~\ref{fact:consecutive-done} guarantees that 
we find a valid augmentation.

Note that this augmentation may be different from the one we obtain if 
we test all candidates sequentially since there might be a candidate with a 
valid augmentation between two candidates whose augmentations contain 
decreasing cycles.

\begin{lemma}\label{lem:binary_search}
  Using binary search we find a valid augmentation for $u$ in $O(n\log n)$ 
  time.
\end{lemma}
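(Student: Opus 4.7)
The plan is to formalize the binary search procedure already sketched in the surrounding text and carefully tally the running time of each step. First, I would compute the list of candidate edges $e_1,\dots,e_k$ for the port $u$ by walking once around the face $f$ and recording those edges $vw$ with $\rot(f[u,vw])=2$; this takes time linear in the size of $f$, and summed over all ports is $O(n)$.

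Next I would verify that a single test whether $\Gamma^u_{e_m}$ contains a decreasing cycle can be performed in $O(n)$ time for every candidate $e_m$ in the list. For $m=1$ this follows directly from Lemma~\ref{lem:label_first_candidate}, which shows that any decreasing cycle in $\Gamma^u_{e_1}$ contains the new edge $uz$ with label $0$. Hence a single left-first DFS from $uz$ suffices by Lemma~\ref{lem:dfs_correctness}. Once we have established that $\Gamma^u_{e_1}$ does contain a decreasing cycle (otherwise, by Fact~\ref{fact:first-last}, $\Gamma^u_{e_1}$ is already valid and we return it), Lemma~\ref{lem:label_any_candidate} guarantees that the same reduction applies to every later candidate: any decreasing cycle in $\Gamma^u_{e_m}$ uses the inserted edge with label $0$. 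Consequently each test during the binary search also runs in $O(n)$ time via a single DFS from the new edge.

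Now I would analyse the binary search loop itself. Initially the invariant ``$\Gamma^u_{e_i}$ has a decreasing cycle and $\Gamma^u_{e_j}$ does not'' holds for $i=1,j=k$, using Fact~\ref{fact:first-last} to rule out a decreasing cycle in $\Gamma^u_{e_k}$. Each iteration picks the midpoint $m=\lfloor(i+j)/2\rfloor$, performs one $O(n)$-time test, and halves the interval while preserving the invariant. After $O(\log k)=O(\log n)$ iterations we arrive at two consecutive candidates $e_i,e_{i+1}$ with the property that $\Gamma^u_{e_i}$ contains a decreasing cycle while $\Gamma^u_{e_{i+1}}$ does not. If $\Gamma^u_{e_{i+1}}$ also contains no increasing cycle, it is valid and we return it; this can be checked in $O(n)$ time by searching for the horizontal path described in the main text. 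Otherwise, Fact~\ref{fact:consecutive-done} delivers a valid augmentation in $O(n)$ time.

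Adding up the costs, the algorithm performs $O(\log n)$ decreasing-cycle tests of $O(n)$ time each, plus a constant number of additional $O(n)$-time operations (computing the candidate list, the initial validity test, and the final step invoking Fact~\ref{fact:consecutive-done}). The total is $O(n\log n)$. The main subtlety to address carefully in the write-up is the justification that the $O(n)$-time decreasing-cycle test is applicable at every step of the binary search, which hinges precisely on Lemma~\ref{lem:label_any_candidate} being available as soon as the initial test reveals a decreasing cycle in $\Gamma^u_{e_1}$.
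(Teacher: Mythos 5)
Your proposal is correct and follows essentially the same route as the paper: initial $O(n)$-time test on the first candidate via Lemma~\ref{lem:label_first_candidate}, then $O(\log n)$ further $O(n)$-time tests justified by Lemma~\ref{lem:label_any_candidate}, with the binary-search invariant anchored by Fact~\ref{fact:first-last} and the final two-consecutive-candidates situation resolved through Fact~\ref{fact:consecutive-done} and the linear-time horizontal-path check. Your write-up is somewhat more explicit about the loop invariant and candidate-list computation, but there is no substantive difference from the paper's argument.
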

\begin{proof}
 If the augmentation to the first candidate does not contain a decreasing 
 cycle, it is valid by Fact~\ref{fact:first-last}, and we are done. Otherwise, 
 the invariant that the augmentation for the first candidate in the list 
 contains a decreasing cycle and the augmentation for the last candidate does 
 not guarantee that we end up in a situation where we find a valid 
 augmentation 
 by Fact~\ref{fact:consecutive-done}. This establishes the correctness of the 
 augmentation algorithm based on binary search.
  
 Applying Lemma~\ref{lem:label_first_candidate} testing the first candidate 
 requires $O(n)$ time. If this augmentation contains a decreasing cycle, 
 Lemma~\ref{lem:label_any_candidate} guarantees that all other tests for 
 decreasing cycles can be implemented in $\O(n)$ time as well. The final test 
 for an increasing cycle can be replaced by a test for a horizontal path by 
 Fact~\ref{fact:consecutive-done}. In total, there are $\O(\log n)$ 
 tests with a total running time of $\O(n\log n)$. 
\end{proof}

Since there are at most $n$ ports to remove, we obtain that any
$4$-planar graph with valid ortho-radial representation can be
rectangulated in $O(n^2\log n)$ time.  Using Corollary 19
from~\cite{bnrw-ttsmford-17-arxiv} this further implies that a
corresponding ortho-radial drawing can be computed in $O(n^2\log n)$
time.

\begin{restatable}{restatable-theorem}{augmentationbinarysearch}
  \label{thm:augmentation-binary-search}
  Given a valid ortho-radial representation $\Gamma$ of a graph $G$, a
  corresponding rectangulation can be computed in $O(n^2 \log n)$
  time.
\end{restatable}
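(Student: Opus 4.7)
The plan is to combine the per-port bound established in Lemma~\ref{lem:binary_search} with a global count of how many augmentations the procedure performs. I would first argue that the total number of augmentation steps is $O(n)$. Every valid augmentation strictly decreases the number of concave angles (left turns at vertices inside faces), because the port $u$ being resolved turns from a left turn into a right turn and the newly inserted vertex $z$ receives two right angles of $90\degree$ plus one straight angle on the subdivided edge. Since a 4-planar graph on $n$ vertices together with the augmentation vertices inserted so far has only $O(n)$ concave angles to begin with (each vertex contributes at most $O(1)$ concave corners and each valid augmentation adds only $O(1)$ new vertices), the rectangulation finishes after $O(n)$ ports have been processed. A standard amortization shows the cumulative size of the graph stays in $O(n)$.

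Next I would handle the two kinds of ports separately. For a vertical port, Fact~\ref{fact:vertical-candidate} says that augmenting with the first candidate never produces a monotone cycle, so no validity test is needed at all and the port is resolved in $O(n)$ time (time to locate the first candidate on $f$). For a horizontal port, I would invoke Lemma~\ref{lem:binary_search} verbatim: computing the candidate list costs $O(|f|)$, the binary search performs $O(\log n)$ decreasing-cycle tests, and by Lemmas~\ref{lem:label_first_candidate} and~\ref{lem:label_any_candidate} each such test is a single left-first DFS starting at the new edge $uz$ with prescribed minimum label $0$, which runs in $O(n)$ time by Lemma~\ref{lem:dfs_correctness}. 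The final step of Fact~\ref{fact:consecutive-done} (searching for a rightward horizontal path from $w_i$ or $v_{i+1}$ to $u$) also takes $O(n)$ time. Hence each horizontal port costs $O(n \log n)$.

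Multiplying the $O(n)$ ports by the $O(n \log n)$ per-port cost yields $O(n^2 \log n)$ total time for producing the rectangulated graph $G^{*}$ together with its valid ortho-radial representation $\Gamma^{*}$. The only subtlety I anticipate is bookkeeping: I must argue that after each augmentation the candidate structure and face information needed by Lemma~\ref{lem:binary_search} can be recomputed locally in $O(n)$ time, so that the per-port bound really applies uniformly throughout the procedure. This follows because each augmentation touches only the face $f$ being processed and splits it into two faces whose combined size equals $|f|+O(1)$, so the relevant data structures can be updated without affecting the asymptotic bound. Combining all of the above gives the claimed $O(n^2 \log n)$ running time.
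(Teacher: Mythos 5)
Your proposal is correct and takes essentially the same approach as the paper: the paper likewise notes that each valid augmentation resolves one concave angle while adding no new ones, bounds the number of ports by $O(n)$, and then applies Lemma~\ref{lem:binary_search} to get $O(n\log n)$ per port. The extra care you take in separating vertical from horizontal ports and in arguing that the augmented graph stays of size $O(n)$ is implicit in the paper but makes the argument more explicit, without changing its substance.
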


\subsection{2nd Improvement -- 2-Phase Augmentation Step}
\label{sec:2-phase-appendix}

In this section we describe an improvement of our algorithm that
reduces the total number of validity tests to $O(n)$. Hence, with this
improvement the running time of our algorithm is $O(n^2)$. Since the
construction is rather technical, we first present a high-level
overview in Section~\ref{sec:2-phase-appendix-intuition}. Afterwards we
present all technical details and formal proofs in
Section~\ref{sec:2-phase-appendix-details}.

\subsubsection{High-Level Overview of 2nd
  Improvement}\label{sec:2-phase-appendix-intuition}

In order to reduce the total number of validity tests to
$O(n)$, we add a second phase to our augmentation step that
post-processes the resulting augmentation after each step. More
precisely, the first phase of the augmentation step inserts a new edge
$uz$ in a given ortho-radial representation $\Gamma$ for a port $u$ as
before; we denote the resulting valid ortho-radial representation by
$\Gamma'$.

Afterwards, if $u$ is a horizontal port, we apply the second phase on
$u$. Let $e_1,\dots,e_k$ be the candidates of $u$, where $e_k$ is the
candidate for the first validity test that does not fail in the first
phase.  We call $e_1$, $e_{k-1}$ and $e_k$ \emph{boundary candidates}
and the others \emph{intermediate candidates}.
  The second phase
augments $\Gamma'$ such that afterwards each intermediate candidate belongs to a
rectangle in the resulting ortho-radial
representation~$\Gamma''$. Further, $\Gamma''$ has fewer vertices with
concave angles becoming horizontal ports and at most two more
vertices with concave angles becoming vertical ports during the remaining augmentation process. Since the second
phase is skipped for vertical ports, $O(n)$ augmentation steps are
executed overall.  Moreover, each edge can be an intermediate
candidate for at most one vertex, which yields that there are $O(n)$
intermediate candidates over all augmentation steps.  Finally, for
each port there are at most three boundary candidates, which yields
$O(n)$ boundary candidates over all augmentation steps. Assigning the
validity tests to their candidates, we conclude that the algorithm
executes $O(n)$ validity tests overall. Altogether, we obtain $O(n^2)$
running time for our algorithm.

We briefly sketch the concepts of the second phase. We make use of the
following lemma, which follows from Lemma~13
in~\cite{bnrw-ttsmford-17-arxiv}.

\begin{restatable}{restatable-lemma}{increasingdecreasingdisjoint}
  \label{lem:increasing_decreasing_disjoint}
  A decreasing and an increasing cycle do not have any common vertex.
\end{restatable}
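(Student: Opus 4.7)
\begin{proofsketch}
The plan is a contradiction argument in the same spirit as the proof of Lemma~\ref{lem:outermost_decreasing_cycle}. Suppose that a decreasing cycle $C_1$ and an increasing cycle $C_2$ share a vertex $v$, and consider the subgraph $H = C_1 + C_2$. Because both cycles are essential and share $v$, the subgraph $H$ has a well-defined central face, bounded by an essential cycle $C$ whose edges all come from $C_1$ or from $C_2$ (or both).

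First, I would invoke Proposition~\ref{lem:repr:equal_labels_at_intersection} to transport labels onto $C$: every edge $e$ of $C$ inherited from $C_1$ satisfies $\ell_C(e) = \ell_{C_1}(e) \ge 0$, and every edge inherited from $C_2$ satisfies $\ell_C(e) = \ell_{C_2}(e) \le 0$. Thus along $C$ the labels split into a non-negative part coming from $C_1$ and a non-positive part coming from $C_2$.

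Second, I would analyse $C$ itself. If every edge of $C$ has label $0$, then $C$ has a common vertex with $C_1$ (for instance $v$, or any vertex where the boundary switches from $C_1$ to $C_2$), and Proposition~\ref{lem:rect:two_cycles_horizontal} forces $C_1$ to be non-monotone, contradicting that $C_1$ is decreasing. If $C$ has a strictly positive label, then $C$ is a decreasing cycle sharing a vertex with $C_2$; another application of Proposition~\ref{lem:rect:two_cycles_horizontal} (possibly after flipping and mirroring via Lemmas~\ref{lem:flip_label} and~\ref{lem:mirroring_label} to normalise signs) contradicts that $C_2$ is increasing. The symmetric case where $C$ has a strictly negative label is handled analogously, yielding a contradiction to $C_1$ being decreasing.

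The main obstacle is ruling out the mixed case where $C$ simultaneously carries edges with strictly positive labels (from $C_1$) and edges with strictly negative labels (from $C_2$). Here the cited Lemma~13 of~\cite{bnrw-ttsmford-17-arxiv} enters: it constrains how labels can transition across a shared vertex, and it should imply that at the boundary between a $C_1$-segment and a $C_2$-segment on $C$ the labels cannot jump from strictly positive to strictly negative, so $C$ must in fact be monotone (or entirely $0$), reducing the mixed case to one of the two cases already handled.
\end{proofsketch}
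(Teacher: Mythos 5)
There is a genuine gap, and I think your approach cannot be repaired without importing essentially the same tool the paper uses.

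Your first sub-case (all labels on the central face $C$ equal $0$) is fine: Proposition~\ref{lem:rect:two_cycles_horizontal} applies directly and contradicts that $C_1$ is decreasing. But the second sub-case is circular. If $C$ carries a strictly positive label and you want to declare $C$ a decreasing cycle, you first need \emph{all} labels on $C$ to be non-negative, which is exactly what fails in the mixed case. And even granting that $C$ is decreasing, you then try to conclude ``$C$ decreasing, shares a vertex with $C_2$, hence $C_2$ is not increasing.'' Proposition~\ref{lem:rect:two_cycles_horizontal} cannot give you this: it requires the shared cycle to have \emph{all} labels equal to $0$. Concluding that a decreasing cycle and an increasing cycle cannot share a vertex is precisely the statement you are trying to prove, so this step begs the question. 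Flipping/mirroring does not help, since neither transformation turns a decreasing cycle into an all-zero one.

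The mixed case, which you correctly identify as the main obstacle, is where the content lives, and the gesture toward Lemma~13 of~\cite{bnrw-ttsmford-17-arxiv} does not close it. That lemma is not a constraint on how labels may transition along the central face; it is a statement about on which side (interior vs.\ exterior) of one cycle the incident edges of the other cycle must lie at a shared vertex. The paper's actual proof uses it as a pure crossing argument, bypassing the central-face labeling entirely: take a maximal common path $P$ of $C_1$ and $C_2$ on the central face of $H$, with endpoints $v$ and $w$. Lemma~13 forces the $C_2$-edge entering $v$ to lie strictly exterior to $C_1$ and the $C_2$-edge leaving $w$ to lie strictly interior to $C_1$, so $\subpath{C_2}{w,v}$ must cross $C_1$; at the first such crossing $x$ the $C_2$-edge entering $x$ lies strictly interior to $C_1$, contradicting Lemma~13 again. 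If you want to keep a label-theoretic route you would need a replacement for this topological fact, and I do not see one available from the propositions you cite.
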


\begin{proof}
Let $C_1$ be an increasing and $C_2$ a decreasing cycle. Assume that they have 
a common vertex. But then there also is a common vertex on the central face~$g$ 
of the subgraph $C_1+C_2$.
Consider any maximal common path~$P$ of $C_1$ and $C_2$ on $g$. We denote the 
start vertex of $P$ by $v$ and the end vertex by $w$. Note that $v$ may equal 
$w$. By Lemma~13 in~\cite{bnrw-ttsmford-17-arxiv}, the edge to $v$ on the 
decreasing cycle $C_2$ lies strictly in the exterior of $C_1$. Similarly, the 
edge from $w$ on $C_2$ lies strictly in the interior of $C_1$. Hence, 
$\subpath{C_2}{w,v}$ crosses $C_1$. Let $x$ be the first intersection. But the 
edge to $x$ on $C_2$ lies strictly in the interior of $C_1$ contradicting 
Lemma~13 in~\cite{bnrw-ttsmford-17-arxiv}.
\end{proof}

Hence, if an edge of an increasing cycle~$C$ lies in the interior of a
decreasing cycle~$C'$ and in the exterior of another decreasing
cycle~$C''$, then all edges of $C$ lie in the interior of $C'$ and in
the exterior of $C''$. We say that $C'$ and $C''$ \emph{wedge}
$C$. We use that observation as follows.

Let $\Gamma$ and $\Gamma_0$ be the ortho-radial representations before
and after the first phase of the augmentation step,
respectively. Further, as defined above let $e_1,\dots,e_k$ be the
candidates of the horizontal port $u$ considered in the augmentation
step.  We first note that there are decreasing cycles $C_1$ and $C_2$
in $\Gamma^u_{e_{1}}$ and $\Gamma^u_{e_{k-1}}$, respectively. We
simulate these cycles in $\Gamma_0$ as follows. We replace the edge
$uz$ inserted in the first phase by a structure $K$ that consists of
the paths $R$, $T$ and $B$ as illustrated in
Figure~\ref{fig:second-phase:step1}. The exact definition of $K$
relies on whether $uz$ belongs to a \emph{horizontal cycle} or not; we
call an essential cycle \emph{horizontal} if it only consists of
horizontal edges.

Yet,
in both cases $T$ contains a vertical edge $e_\T$ and is connected to
a subdivision vertex $t$ on $e_1$. Analogously, $B$ contains a
vertical edge $e_\B$ and is connected to a subdivision vertex $b$ on
$e_{k-1}$.  Let $\Gamma_1$ be the resulting ortho-radial
representation.  We show that it is valid. Afterwards, we use $K$ to 
simulate $C_1$ and $C_2$. More precisely, there is an essential
cycle~$C_\T$ in $\Gamma_{1}$ that contains $C_1[t,u]$, a part of $R$,
and $T$. Furthermore, there is an essential cycle $C_\B$ in
$\Gamma_{1}$ that contains $C_2[b,u]$, a part of $R$ and the path
$B$. We show that $C_\T$ has a negative label on $e_\T$ and
non-negative labels on all other edges. Similarly, we show that $C_\B$
has a negative label on $e_\B$ and non-negative labels on all other
edges.  Hence, apart from $e_\T$ and $e_\B$, both $C_\T$ and $C_\B$
behave as if they were decreasing cycles.

Consider the face $f_1$ that locally lies to the left of $B$ and to
the right of $T$. All intermediate candidates of $u$ lie on $f_1$. We
rectangulate $f_1$ as follows. We connect each vertical port to its
first candidate, which yields a valid ortho-radial representation by
Fact~\ref{fact:vertical-candidate}. Further, we connect each horizontal port to its
last candidate. By Fact~\ref{fact:first-last} this may produce
increasing cycles, but no decreasing cycles. We argue that any such
increasing cycle~$C$ is wedged by $C_\T$ and $C_\B$ and neither shares
vertices with $C_\T$ nor with $C_\B$. Since $C_\T$ and $C_\B$ share
vertices and the newly introduced edge of $C$ lies between $C_\T$ and
$C_\B$, the cycle $C$ cannot exist. Thus, rectangulating $f_1$ yields a
valid ortho-radial representation~$\Gamma_2$. In particular, since all
intermediate candidates of $u$ lie on $f_1$, they lie on rectangles
afterwards. Finally, we rectangulate the face $f_2$ that locally lies
to the left of $T$. Since $f_2$ has constant size, we can do this in
$O(n)$ time using the original augmentation step without the
second phase. By doing this, we resolve all horizontal ports that we
have introduced by inserting $K$. In
Section~\ref{sec:2-phase-appendix} we argue that the new augmentation step needs $O(n)$
time amortized over all augmentation steps. Altogether, we obtain our
main result.

\begin{restatable}{restatable-theorem}{nsquaredaugmentation}
  \label{thm:fast-augmentation}
  Given a valid ortho-radial representation $\Gamma$ of a graph $G$, a
  corresponding rectangulation can be computed in $O(n^2)$
  time.
\end{restatable}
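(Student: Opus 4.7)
The plan is to refine the augmentation algorithm from Theorem~\ref{thm:augmentation-binary-search} so that the total number of validity tests across all port eliminations is $O(n)$ rather than $O(\log n)$ per port. Combined with the $O(n)$-time validity test from Section~\ref{sec:faster-validity-test-main}, this yields the claimed $O(n^{2})$ bound. The central idea is a two-phase augmentation step: the first phase performs one usual augmentation at a horizontal port $u$ (producing a valid representation $\Gamma'$), and the second phase post-processes the result so that every intermediate candidate $e_{2},\dots,e_{k-2}$ of $u$ becomes bounded by a rectangle in a single sweep, without any further validity tests on those candidates. Charging validity tests to intermediate candidates, and observing that each edge of $G$ can serve as an intermediate candidate for at most one port, gives an $O(n)$ global budget for intermediate candidates; each port separately contributes only the three boundary candidates $e_{1}, e_{k-1}, e_{k}$, for another $O(n)$ tests in total.

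The heart of the construction is a replacement structure $K = R + T + B$ that takes the place of the single edge $uz$ inserted in the first phase. The paths $T$ and $B$ will contain distinguished vertical edges $e_{\T}$ and $e_{\B}$ and are anchored at new subdivision vertices $t$ on $e_{1}$ and $b$ on $e_{k-1}$. I would first define $K$ carefully, splitting into cases depending on whether $uz$ lies on a horizontal cycle, so that the result $\Gamma_{1}$ is an ortho-radial representation whose validity can be read off directly from that of $\Gamma_{0}$. Then I would prove the main structural lemma: there exist essential cycles $C_{\T}$ (containing $C_{1}[t,u]$, a portion of $R$, and all of $T$) and $C_{\B}$ (containing $C_{2}[b,u]$, a portion of $R$, and all of $B$) whose labels are non-negative everywhere except at $e_{\T}$ and $e_{\B}$, respectively. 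This simulates the decreasing cycles $C_{1}$ and $C_{2}$ witnessed during the first phase, and is the main technical obstacle: the labelings must be transferred from $\Gamma'$ to $\Gamma_{1}$ via Propositions~\ref{lem:repr:equal_labels_at_intersection} and~\ref{lem:rect:two_cycles_horizontal}, with the horizontal-cycle case being especially delicate because the natural reference path may not be elementary.

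With $C_{\T}$ and $C_{\B}$ in hand I would rectangulate the face $f_{1}$ between $T$ and $B$, which contains every intermediate candidate of $u$, by a single deterministic pass: connect each vertical port to its first candidate and each horizontal port to its last candidate. Fact~\ref{fact:vertical-candidate} and Fact~\ref{fact:first-last} guarantee that no decreasing cycle is ever created. The only remaining worry is an increasing cycle~$C$; here I would argue that $C$ would have to be wedged between $C_{\T}$ and $C_{\B}$ because away from $e_{\T}$ and $e_{\B}$ these two cycles behave like decreasing cycles and so Lemma~\ref{lem:increasing_decreasing_disjoint} forbids $C$ from sharing a vertex with either. But $C_{\T}$ and $C_{\B}$ themselves share vertices on $R$, so the wedging region collapses and $C$ cannot exist. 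Finally the narrow face $f_{2}$ to the left of $T$ has constant combinatorial complexity and can be rectangulated in $O(n)$ time with the original algorithm from Theorem~\ref{thm:augmentation-binary-search}, cleaning up the few extra horizontal ports introduced by $K$.

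For the running time I would amortize as follows. One augmentation step costs $O(n)$ for its validity tests plus $O(n)$ for the sweep on $f_{1}$ and the cleanup on $f_{2}$; the sweep cost is charged uniformly to the intermediate candidates it consumes. Over all steps, there are $O(n)$ ports, hence $O(n)$ boundary-candidate tests and $O(n)$ cleanup work, and the disjointness of intermediate-candidate roles yields $O(n)$ intermediate-candidate work as well. The second phase is skipped for vertical ports, and it introduces at most a constant number of new vertical ports per step, so the total number of steps remains $O(n)$. Multiplying by the $O(n)$ cost per validity test gives $O(n^{2})$ overall. The single hardest step is clearly the construction of $K$ together with the label analysis for $C_{\T}$ and $C_{\B}$: everything else is a careful but routine bookkeeping argument on top of this structural core.
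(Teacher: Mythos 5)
Your high-level plan coincides almost exactly with the paper's: a two-phase augmentation step, the replacement structure $K=R+T+B$ with anchors on $e_1$ and $e_{k-1}$, classifying $e_1,e_{k-1},e_k$ as boundary candidates and the rest as intermediate, a deterministic one-pass rectangulation of $f_1$ (vertical ports to first candidate, horizontal ports to last), cleanup of $f_2$ by the unmodified algorithm, and the amortization argument charging validity tests to candidates. All of this is what the paper does.

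There is, however, a genuine gap in the step that rules out increasing cycles during the rectangulation of $f_1$. You invoke Lemma~\ref{lem:increasing_decreasing_disjoint} to argue that the increasing cycle $C$ cannot share a vertex with $C_\T$ or $C_\B$, and then conclude that the wedged region ``collapses.'' But Lemma~\ref{lem:increasing_decreasing_disjoint} is a statement about \emph{decreasing} cycles, and $C_\T$, $C_\B$ are not decreasing: they each carry a $-1$ label on $e_\T$ and $e_\B$, respectively. The paper handles this by introducing a new notion of \emph{cascading cycle} (a non-monotone essential cycle whose labels are $-1$ on a distinguished negative path and non-negative elsewhere, with a geometric condition on the edges incident to internal vertices of the negative path) and proving a dedicated nesting lemma (Lemma~\ref{lem:cascading_and_increasing_cycles}) stating that a cascading cycle and an increasing cycle must be nested. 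Crucially, this nesting lemma does \emph{not} assert vertex disjointness; indeed, in the final contradiction the increasing cycle $C$ is shown to \emph{must} pass through $R[r_1,r_4]$ (since removing that path from the pinched annular region between $C_\T$ and $C_\B$ leaves no essential cycle), and the contradiction then comes from analyzing labels on $R[r_1,r_4]$ (either a $+1$ label if $r_2r_3$ points downwards, or, in the horizontal-cycle case, an all-zero essential cycle through $R[r_1,r_4]$ together with Proposition~\ref{lem:rect:two_cycles_horizontal}). So your intuition that the region between $C_\T$ and $C_\B$ is pinched along $R$ is correct, but the mechanism you give for ruling out $C$ (vertex disjointness from $C_\T$, $C_\B$) is not available, and without the cascading-cycle machinery this step does not go through as written.
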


\begin{figure}[t]
    \centering
    \begin{subfigure}[b]{0.49\textwidth}
      \centering
      \includegraphics[page=4,width=\textwidth]{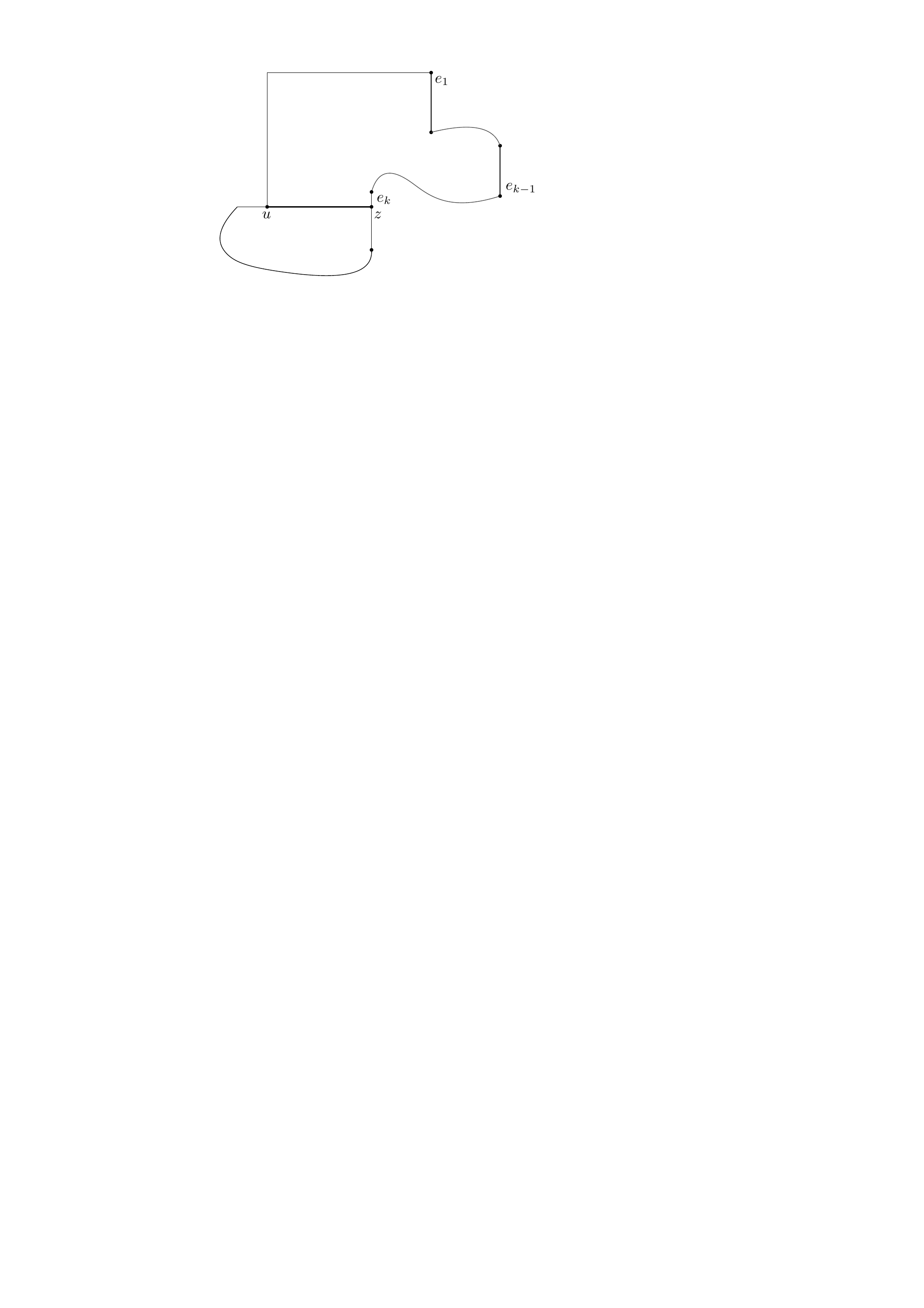}
      \subcaption{$\Gamma_1$: $r_2r_3$ points right.}
      \label{fig:second-phase:step1:right}
    \end{subfigure}
    \begin{subfigure}[b]{0.49\textwidth}
      \centering
      \includegraphics[page=2,width=\textwidth]{figures/second_phase.pdf}
      \subcaption{$\Gamma_1$: $r_2r_3$ points downwards.}
      \label{fig:second-phase:step1:downwards}
    \end{subfigure} 
    \caption{Illustration of Step 1, which inserts $R$, $T$ and $B$
      into $\Gamma_0$. Depending on whether $uz$ lies on a
      horizontal cycle~$C$ in $\Gamma_0$, the edge $r_2r_3$ points
      (\subref{fig:second-phase:step1:right}) to the right or
      (\subref{fig:second-phase:step1:downwards}) downwards.}
    \label{fig:second-phase:step1}
\end{figure}

\subsubsection{Details of 2nd Improvement}\label{sec:2-phase-appendix-details}
We now describe the second phase of the augmentation step in greater
detail.  Let $u$ be the port that is currently considered by the
augmentation step. In case that $u$ is a vertical port, we skip the
second phase.  So assume that $u$ is a horizontal port and that the
first phase of the augmentation step inserts an edge $uz$ that points
to the right; the case that $uz$ points to the left can be handled
analogously by flipping the cylinder. We denote the resulting valid
ortho-radial representation by $\Gamma_0$. Let $f$ be the face that
locally lies to the left of $uz$. Further, let $e_1,\dots,e_k$ be the
candidate edges of $u$ that were considered before inserting $uz$,
that is, the validity test failed for $e_1,\dots,e_{k-1}$ and
succeeded for $e_k$.  If $k < 4$ there are no intermediate candidates,
and we skip the second phase. So assume that $k\geq 4$. We apply the
following three steps to obtain the ortho-radial representations
$\Gamma_i$ with $1\leq i\leq 3$. Later on, we show that each
representation~$\Gamma_i$ is valid, all intermediate candidates lie on
rectangles in $\Gamma_3$, and there are only two more vertices
becoming vertical ports and no more vertices becoming horizontal ports
in $\Gamma_3$ than in $\Gamma_0$.

\textit{Step 1}. We replace $uz$ by a structure consisting of three
paths $R$, $T$, and $B$ as follows; see
Figure~\ref{fig:second-phase:step1} for an illustration.  The path $R$
connects $u$ with $z$. It consists of six vertices $r_i$ with
$1\leq i\leq 6$ such that $r_1=u$ and $r_6=z$. Apart from $r_2r_3$ all
edges point to the right. For the direction of $r_2r_3$ we distinguish
two cases.  If $uz$ lies on a cycle whose labels are all $0$, the edge
$r_2r_3$ points to the right and otherwise $r_2r_3$ points downwards;
see
Figure~\ref{fig:second-phase:step1}(\subref{fig:second-phase:step1:right})
and (\subref{fig:second-phase:step1:downwards}), respectively.

The path $T$ consists of five vertices $t_i$ with $1\leq i \leq 5$
such that $t_1=r_4$ and $t_5$ subdivides~$e_1$. The edge $t_1t_2$
points upwards, the edge $t_3t_4$ points downwards, and the other two
edges point to the right. Similarly, the path $B$ consists of five
vertices $b_i$ with $1\leq i \leq 5$ such that~$b_1=r_5$ and $b_5$
subdivides the edge $e_{k-1}$. Further, the edge $b_1b_2$ points
upwards, the edge~$b_3b_4$ points downwards, and the other two edges
point to the right.

We denote the resulting ortho-radial representation by
$\Gamma_1$. Further, let $f_1$ be the face that locally lies to the
right of $T$, and let $f_2$ be the face that locally lies to the left of $T$.

\textit{Step 2.}  We iteratively resolve the ports in $f_1$ until the
face is rectangulated. To that end, let $\Pi$ be the ortho-radial
representation of the previous iteration; we start with
$\Pi=\Gamma_1$. Further, let $u'$ be the currently considered port of
$f_1$ and let $e'_1,\dots,e'_l$ be its candidates. If $u'$ is a
vertical port, we take $\Pi^{u'}_{e'_1}$ and otherwise
$\Pi^{u'}_{e'_l}$ as result of the current iteration. The procedure
stops when $f_1$ is completely rectangulated. We denote the resulting
ortho-radial representation by~$\Gamma_2$.

\textit{Step 3.} Starting with $\Gamma_2$, we rectangulate the face
$f_2$ by iteratively applying the augmentation step without
Phase 2 until there are no ports left in $f_2$. We denote
the resulting ortho-radial representation by $\Gamma_3$.

\paragraph*{Correctness.}
We now prove that the second phase yields a valid ortho-radial
representation~$\Gamma_3$ by showing that each step yields a valid
ortho-radial representation. We use the same notation as above.

\textit{Step 1.} In order to show the correctness, we successively add
the paths $R$, $T$ and $B$ to $\Gamma_0$ and prove the validity
of each created ortho-radial representation. To that end, let
$\Gamma_R=\Gamma_0-uz+R$,
 $\Gamma_T=\Gamma_R+T$
and $\Gamma_B=\Gamma_T+B=\Gamma_1$.

\begin{lemma}\label{lem:second-phase:no-monotone-cycle-on-R}
  The ortho-radial representation $\Gamma_R$ is valid.
\end{lemma}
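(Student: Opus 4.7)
The plan is to verify both that $\Gamma_R$ is a well-defined ortho-radial representation (local conditions) and that it contains no monotone cycles (global condition). The local half is fairly direct: the internal vertices $r_2, r_3, r_4, r_5$ of $R$ each have degree~$2$ with angle sum $360\degree$, while at $u=r_1$ and $z=r_6$ the first and last edge of $R$ point right, exactly like $uz$, so the vertex angle sums at $u$ and $z$ are inherited from $\Gamma_0$. Only the two faces $f$ and $f'$ formerly incident to $uz$ have modified boundaries, and a direct rotation count at the new internal vertices along $R$ gives $0,+1,-1,0,0$ in Case~$2$ (and all zeros in Case~$1$), which sums to $0$ along $R$ as well as along $\reverse{R}$; hence the face rotations of $f$ and $f'$ match those in $\Gamma_0$.

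For the global condition, I would exploit that the internal vertices of $R$ have degree~$2$ in $\Gamma_R$: every essential cycle of $\Gamma_R$ either avoids $R$ entirely or traverses $R$ from end to end in one direction, yielding a bijection with essential cycles of $\Gamma_0$ by contracting $R$ back to $uz$ (or $zu$). Since the total rotation along $R$ equals $0$, any elementary path can be translated between the two representations without changing its rotation, so labels of edges outside $R$ are preserved under this bijection. On $R$ itself, the edges $r_1r_2, r_3r_4, r_4r_5, r_5r_6$ inherit the label $\ell_{C'}(uz)$ of the corresponding cycle $C'$ in $\Gamma_0$, while $r_2r_3$ inherits $\ell_{C'}(uz)$ in Case~$1$ and $\ell_{C'}(uz)+1$ in Case~$2$.

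Now suppose for contradiction that $\Gamma_R$ contains a decreasing cycle $C$. If $C$ avoids $R$, it is already a decreasing cycle of $\Gamma_0$. If $C$ uses $R$ and we are in Case~$1$, the labels of $C$ and $C'$ coincide, so $C'$ is decreasing. If we are in Case~$2$, non-negativity of the labels of $C$ gives $\ell_{C'}(uz)=\ell_C(r_1r_2)\ge 0$ together with non-negativity of all labels of $C'$ outside $R$; since in Case~$2$ the edge $uz$ does not lie on any cycle with all labels zero, $C'$ must have at least one strictly positive label and is thus decreasing. In every case we contradict the validity of $\Gamma_0$, and an analogous argument with $\ell_{C'}(uz)\le -1$ handles increasing cycles.

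The main obstacle is precisely the Case~$2$ analysis: the $+1$ bump on $r_2r_3$ could a priori promote a horizontal cycle of $\Gamma_0$ through $uz$ into a decreasing cycle of $\Gamma_R$, and the case distinction in the definition of $R$ based on whether $uz$ lies on a cycle with all labels zero is exactly what rules out this pathology. Everything else reduces to routine rotation bookkeeping.
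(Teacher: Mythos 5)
Your proof is correct and follows the same overall strategy as the paper: any monotone cycle $C$ of $\Gamma_R$ must traverse $R$ (since $\Gamma_0$ is valid), and contracting $R$ back to $uz$ yields a cycle $C'$ in $\Gamma_0$ whose labels are transferred from $C$, contradicting the validity of $\Gamma_0$. The one genuine divergence is in Case~2, when showing that $C'$ is not the all-zero cycle. You appeal directly to the defining hypothesis of Case~2, namely that $uz$ does not lie on any essential cycle with all labels zero. The paper instead observes that $z=r_6$ subdivides the vertical candidate edge $e_k$, so the edge of $C$ following $r_6$ on $C[r_6,r_1]$ is vertical and hence has a nonzero label, which is common to $C$ and $C'$. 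Your route is slightly tighter and, as you note at the end, makes explicit why the case distinction in the definition of $R$ exists; the paper's route instead pulls a structural fact from the geometry of the port augmentation. Both are sound, and you additionally verify the two local conditions for $\Gamma_R$, which the paper leaves implicit.
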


\begin{proof}
  Assume that $\Gamma_R$ contains a monotone cycle~$C$. Since
  $\Gamma_0$ is valid, this cycle uses $R$.  In case that the edge
  $r_2r_3$ of $R$ points to the right, we can interpret $R$ as a
  single edge on $C$, because the labels of $C$ on $R$ are
  identical. Hence, $C$ corresponds to a cycle~$C'=C[z,u]+uz$ in
  $\Gamma_0$, where $uz$ is subdivided by some additional vertices on
  $C$. Thus, $C$ and $C'$ have the same labels, which contradicts
  that~$\Gamma_0$ is valid.

  So assume that $r_2r_3$ points downwards. Without loss of
  generality, we assume that $C$ uses $uz$; the case that $C$ uses
  $zu$ can be handled identically. By construction the vertex $z$ is a
  newly introduced vertex subdividing $e_k$. Since $e_k$ is vertical,
  this implies that apart from $r_2r_3$ the cycle $C$ contains another
  vertical edge on $C[r_6,r_1]$. Further, $C[r_6,r_1]$ is also
  contained in $\Gamma_0$. Hence, $C[r_6,r_1]+uz$ forms an essential
  cycle~$C'$ in $\Gamma_0$ with at least one vertical edge.

  We show that for any common edge of $C$ and $C'$ the labels of $C$
  and $C'$ are identical. Since all vertical edges of $C'$ also belong
  to $C$, and $C'$ has at least one vertical edge, this shows that $C'$
  is also monotone, which contradicts the validity of $\Gamma_0$.

  Let $P$ be an elementary path to $C'$ and let $v$ be the end vertex
  of $P$ on $C'$. Since $uz$ belongs to $C'$, the path does not
  contain this edge. Hence, $P$ is also contained in $\Gamma_R$ and it is
  an elementary path for $C$.  Now consider an edge $e$ that belongs
  to both $C$ and $C'$. Let $Q$ be the path on $C$ from $v$ to the target
  of $e$ and, analogously, let $Q'$ be the path on $C'$ from $v$ to
  the target of $e$. The path $Q$ contains $R$ if and only if $Q'$
  contains $uz$. Hence, if $Q$ does not contain $R$, both paths are
  identical, and we obtain
  \[
    \ell_C(e)=\rot(e^\star+P+Q)=\rot(e^\star+P+Q')=\ell_{C'}(e).\]
  So assume that $Q$ contains $R$. By construction we have
  $\rot(au+uz+zb)=\rot(au+R+zb)$, where $a$ is the vertex on $P+Q$
  before $u$ and $b$ is the vertex on $Q$ after $z$. It holds
  \begin{align*}    
  \ell_C(e)&=\rot(e^\star+P+Q[v,u])+\rot(au+R+zb)+\rot(Q[zb,e])\\
  &=\rot(e^\star+P+Q'[v,u])+\rot(au+uz+zb)+\rot(Q'[zb,e])=\ell_{C'}(e).
  \end{align*}
  Hence, for common edges the labels of $C$ and $C'$ are identical,
  which contradicts that $\Gamma_0$ is valid.
\end{proof}

Next, we prove that $\Gamma_T$ is valid. To
that end, we introduce the following definition. A \emph{cascading
  cycle} is a non-monotone essential cycle that can be partitioned
into two paths $P$ and $Q$ such that the labels on $P$ are $-1$
and the labels on $Q$ are non-negative.  We further require that the edges 
incident to the internal vertices of $P$ either all lie in the interior of $C$ 
or they all lie in the exterior of $C$. In the first case we call $C$ an 
\emph{outer} cascading cycle and in the second case an \emph{inner} cascading 
cycle. The path $P$ is the \emph{negative path} of the cycle.

To show that $\Gamma_T$ is valid, we construct a cascading cycle $C_\T$ in 
$\Gamma_T$ as follows. Let $C_1$ be the
outermost decreasing cycle in $\Gamma^{u}_{e_{1}}$ and let $ut_5$ be the newly 
inserted edge in
$\Gamma^{u}_{e_{1}}$. We replace $ut_5$ by $R[u,r_4]+T$ obtaining the
cycle $C_\T$, which is well-defined because $C_1$ uses $ut_5$ in that
direction by Lemma~\ref{lem:label_first_candidate}.

\begin{lemma}\label{lem:second-phase:cascading-cycles-CT}
  $C_\T$ is a cascading cycle no matter whether $r_2r_3$ points to the
  right or downwards. In particular, $t_1t_2$ is the negative path of
  $C_\T$.
\end{lemma}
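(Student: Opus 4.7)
The plan is to verify the four defining properties of a cascading cycle for $C_\T$: essentiality, non-monotonicity, a partition into a negative path $P$ (on which every label equals $-1$) and a complementary path $Q$ (on which every label is non-negative), and the local condition on edges incident to internal vertices of $P$. Essentiality of $C_\T$ is immediate, since $C_\T$ is obtained from the essential cycle $C_1$ by replacing a single edge $ut_5$ with a path between the same endpoints, so the winding around the center is preserved. With $P=t_1t_2$, which consists of a single edge and thus has no internal vertices, the local condition is vacuously satisfied.

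I would next establish that on every edge of $C_\T$ belonging to $C_1[t_5,u]$ the labels in $C_\T$ and $C_1$ coincide. To do this I pick an elementary path from the reference edge to a vertex of $C_1[t_5,u]$ that avoids the newly inserted structure; such a path can be taken in the exterior of $C_1$, which is untouched by the replacement. Then the subpaths of $C_\T$ and $C_1$ from this anchor to any edge of $C_1[t_5,u]$ agree, and the label-difference identity (Observation~7 of \cite{bnrw-ttsmford-17-arxiv}) yields equal labels. Since $C_1$ is decreasing, these labels are all non-negative and contribute to $Q$.

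For the edges on the replacement path $u,r_2,r_3,r_4=t_1,t_2,t_3,t_4,t_5$, the computation is seeded by the identity $\ell_{C_\T}(ur_2)=\ell_{C_1}(ut_5)=0$, where the second equality is Lemma~\ref{lem:label_first_candidate} and the first follows because $ur_2$ and $ut_5$ both point right and share the same incoming edge at $u$, so the rotation at $u$ entering the label-difference identity is identical in both cycles. Propagating forward along the replacement path, the contributions at $r_2$ and $r_3$ cancel to $0$ in both case~(a) and case~(b) for the direction of $r_2r_3$, and the fixed turns at $t_1,t_2,t_3,t_4$ (left, right, right, left) then yield labels $-1,0,1,0$ on $t_1t_2, t_2t_3, t_3t_4, t_4t_5$ respectively. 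Hence $t_1t_2$ is the unique edge of $C_\T$ with label $-1$, all other edges carry non-negative labels, and the strictly positive label on $t_3t_4$ witnesses non-monotonicity, so $C_\T$ is a cascading cycle with negative path $t_1t_2$. The main point requiring care is the transfer of labels between $\Gamma^u_{e_1}$ and $\Gamma_T$, two distinct ortho-radial representations; this is resolved by choosing the elementary path outside the locally modified region and by noting that the incoming configuration at $u$ is preserved by the replacement, so the anchor label $0$ propagates to $ur_2$ as required.
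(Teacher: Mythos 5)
Your argument is correct and follows essentially the same route as the paper: anchor the label at $ur_2$ via $\ell_{C_1}(ut_5)=0$ from Lemma~\ref{lem:label_first_candidate} using an elementary path in the exterior of $C_1$ that avoids the modified region, propagate along $R[u,r_4]+T$ in both cases for the direction of $r_2r_3$, and transfer the non-negative labels on $C_1[t_5,u]$ to $C_\T$ by the label-difference identity. The paper's proof packages the label transfer slightly differently (it observes $ut_5$ and $t_4t_5$ share a label to conclude common edges of $C_1$ and $C_\T$ agree), but the content is the same.
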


\begin{proof}
  Let $C_1$ be the outermost decreasing cycle that in $\Gamma^{u}_{e_{1}}$ in 
  the first phase. There is an
  elementary path $P$ from $e^\star$ that ends at a vertex $v$ on
  $C$. Since $P$ does not contain $ut_5$ in either
  direction, it is also an elementary path for $C_\T$ in
  $\Gamma_T$. Let further $Q$ be the path from $v$ to $u$. Since
  $C_1$ uses $ut_5$ in that direction, the path $Q$ does not use
  $ut_5$. This implies that $Q$ also exists on $C_\T$. Thus, $ut_5$
  and $ur_2$ have the same label on $C_1$ and $C_T$. By
  Lemma~\ref{lem:label_first_candidate} the edge $ut_5$ has label $0$
  on $C_1$. If $r_2r_3$ points to the right the sequence of the labels on
  $R[u,r_4]+T$ is therefore $0$, $0$, $0$, $-1$, $0$, $1$, $0$. If $r_2r_3$
  points downwards the sequence is $0$, $1$, $0$, $-1$, $0$, $1$,
  $0$. In both cases $C_\T$ is not monotone.

  We now show that $t_1t_2$ is the only edge on $C_\T$ with negative
  label, which shows that $C_\T$ is a cascading cycle. In particular, the 
  negative path consists of only one edge and therefore it has no internal 
  vertices. We
  observe that $ut_5$ and $t_4t_5$ have the same label. Hence, for any
  common edge~$e$ of $C_1$ and $C_\T$ there are paths~$Q$ and $Q'$
  from $v$ to $e$, respectively, such that
  $\rot(e^\star+P+Q+e)=\rot(e^\star+P+Q'+e)$. This implies that for
  any common edge of $C_1$ and $C_\T$, the labels of both cycles are
  identical. Since $C_1$ is a decreasing cycle, all common edges have
  non-negative labels. Altogether, $t_1t_2$ is the only edge of $C_\T$
  with a negative label.
\end{proof}

Using this lemma we prove
that there is no decreasing cycle in $\Gamma_T$.

\begin{lemma}\label{lem:second-phase:no-decreasing-cycle-on-T}
  There is no decreasing cycle in $\Gamma_T$.
\end{lemma}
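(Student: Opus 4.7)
The plan is to argue by contradiction: suppose that $\Gamma_T$ contains a decreasing cycle~$C$. Since $\Gamma_R$ is valid by Lemma~\ref{lem:second-phase:no-monotone-cycle-on-R}, and $\Gamma_T$ is obtained from $\Gamma_R$ by adding only the path~$T$, the cycle $C$ must contain at least one edge of~$T$. The internal vertices $t_2,t_3,t_4$ of~$T$ are freshly introduced and have degree exactly two in~$\Gamma_T$, so once $C$ enters $T$ it is forced to traverse $T$ as a contiguous subpath and can leave it only at the attachment vertices $t_1 = r_4$ and $t_5$ (the latter subdividing $e_1$). Thus $C$ uses $T$ either in the forward direction from $t_1$ to $t_5$ or in the reverse direction.

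The main leverage will be the cascading cycle $C_\T$ provided by Lemma~\ref{lem:second-phase:cascading-cycles-CT}, whose labels are non-negative everywhere except on~$t_1t_2$, which carries label~$-1$. Setting $H = C + C_\T$ and applying Proposition~\ref{lem:repr:equal_labels_at_intersection}, any edge of $H$ that is common to $C$ and $C_\T$ and lies on the boundary of the central face of~$H$ must have the same label in both cycles. Since $C$ is decreasing its labels are everywhere non-negative, so the edge~$t_1t_2$ cannot be simultaneously shared by both cycles and lie on the central face of~$H$.

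I would then split into two cases according to the direction in which $C$ traverses~$T$. In the forward case, $C$ uses $t_1t_2$ in the same direction as $C_\T$, so $t_1t_2$ is common to both cycles; the local embedding around $t_1 = r_4$, together with the fact that both cycles are essential and hence wrap around the grid centre on the same side of~$t_1t_2$, forces~$t_1t_2$ to lie on the boundary of the central face of~$H$, contradicting the observation above. In the reverse case, $C$ uses $\reverse{T}$; here I would reroute the piece $\reverse{T}$ of~$C$ through the segment of the original edge~$e_1$ (which is only subdivided by~$t_5$ in $\Gamma_T$ and is thus still present in~$\Gamma_R$) together with the subpath of~$R$ between~$t_1$ and~$u$, obtaining an essential cycle~$C'$ of~$\Gamma_R$. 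A direct rotation computation at the two attachment points $t_1$ and $t_5$ shows that the rerouting preserves the labels of unchanged edges, so every label of~$C'$ is non-negative and at least one is strictly positive, contradicting the validity of~$\Gamma_R$.

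The main obstacle is the second case: one must check that the rerouting produces a bona fide essential simple cycle rather than a closed walk, and the local rotation bookkeeping around~$t_1$ and~$t_5$ has to be carried out separately for the two configurations of~$R$ (whether $r_2r_3$ points right or downwards), since the label carried by~$ur_2$ in~$C_\T$ differs between these two configurations and enters the label computation for~$C'$.
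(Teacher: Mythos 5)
The decisive step in your forward case does not hold. You claim that when $C$ traverses $T$ in the same direction as $C_\T$, the edge $t_1t_2$ is forced onto the boundary of the central face $g$ of $H = C + C_\T$. This is not justified. Both cycles being essential only tells you that the face $g_R$ of $H$ immediately to the right of $T$ lies in the interior of both $C$ and $C_\T$; it does not follow that $g_R$ contains the grid center. If $C$ and $C_\T$ interleave elsewhere, other edges of $H$ can separate $g_R$ from the center, so $g$ is a different face and $T$ is not on its boundary. Indeed, the paper's own proof explicitly treats this as a separate case: when $T$ is not part of $g$, the paper observes that $g$ is bounded entirely by edges of $C$ and $C_\T$ that lie in $\Gamma_R$, all carrying non-negative labels by Proposition~\ref{lem:repr:equal_labels_at_intersection} together with Lemma~\ref{lem:second-phase:cascading-cycles-CT}; since $C$ shares a vertex with $g$, Proposition~\ref{lem:rect:two_cycles_horizontal} then forces some positive label, so the cycle bounding $g$ is a decreasing cycle already present in $\Gamma_R$, contradicting Lemma~\ref{lem:second-phase:no-monotone-cycle-on-R}. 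This entire branch is missing from your argument, and without it the forward case is incomplete.

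Your reverse case is also weaker than it needs to be. The paper's treatment is very short: if $T$ lies on $g$ and $C$, $C_\T$ traverse $T$ in opposite directions, then $g$ (being to the right of each) would have to lie on both sides of $T$, making $g$ non-simple and contradicting biconnectivity of $H$; if $T$ does not lie on $g$, Case~2 above applies regardless of direction. Your proposed rerouting of $\reverse{T}$ through a piece of $e_1$ and $R[t_1,u]$ is not carried out, and as you note it requires verifying simplicity and doing separate rotation bookkeeping depending on the orientation of $r_2r_3$; it is also not clear that the rerouted path is disjoint from the remainder of $C$. Even if it could be made to work, it does not repair the gap in the forward case. The missing idea is the case split on whether $T$ lies on the central face of $H$, with the non-incident case resolved via Propositions~\ref{lem:repr:equal_labels_at_intersection} and~\ref{lem:rect:two_cycles_horizontal}.
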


\begin{proof}
  Assume that $T$ is contained in a decreasing cycle $C$ (in either
  direction).  Let $H=C+C_\T$ be the common sub-graph of $C$ and
  $C_\T$, and let $g$ be the central face of $H$. We distinguish the
  following two cases.

  \textit{Case 1, $T$ is part of $g$.} 
  First assume that $C_\T$ and
  $C$ use $T$ in opposite directions. Since the central face locally
  lies to the right of any essential cycle, this implies that the
  central face lies to the left and right of $T$. Consequently, the
  central face is not simple, which contradicts that $H$ is biconnected. So 
  assume that $C$ and $C_\T$ use $T$ in the
  same direction. By
  Proposition~\ref{lem:repr:equal_labels_at_intersection} it holds
  $\ell_{C_\T}(t_1t_2)=\ell_C(t_1t_2)$. Since the cycle $C_\T$ is a
  cascading cycle with negative path $t_1t_2$ by
  Lemma~\ref{lem:second-phase:cascading-cycles-CT}, it is
  $\ell_{C_\T}(t_1t_2)=-1$. Thus, $C$ is not a decreasing cycle.

  \textit{Case 2, $T$ is not part of $g$.} Let $C'$ be the essential
  cycle formed by $g$. Since $C'$ consists of edges of $C$ and $C_\T$,
  the corresponding labels of $C$ and $C_\T$ also apply on $C'$ by
  Proposition~\ref{lem:repr:equal_labels_at_intersection}. Further,
  since $C$ is a decreasing cycle and $T$ is the only part of $C_\T$
  that has a negative label on $C_\T$ by
  Lemma~\ref{lem:second-phase:cascading-cycles-CT}, the cycle $C'$
  only has non-negative labels. Since $T$ does not lie on $C'$ but on $C_\T$,  
  $C$ has at least one vertex with $C'$ in
  common. This implies that $C'$ has at least one positive label,
  because otherwise $C$ could not be a decreasing cycle by
  Proposition~\ref{lem:rect:two_cycles_horizontal}. Altogether, $C'$
  is a decreasing cycle that also exists in $\Gamma_R$, which
  contradicts its validity.
\end{proof}

To show that $\Gamma_T$ contains no increasing cycle, we introduce a general 
lemma about the interaction of cascading and increasing cycles.

\begin{lemma}\label{lem:cascading_and_increasing_cycles}
  Let $C$ be a cascading cycle and $C'$ an increasing cycle. Either $C$ lies in 
  the interior of $C'$ or vice versa.
\end{lemma}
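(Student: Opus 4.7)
The plan is to argue by contradiction: suppose $C$ and $C'$ are not nested, meaning $C'$ has edges both strictly in the interior and strictly in the exterior of $C$. Then $C$ and $C'$ must share at least one vertex, because otherwise planarity together with the fact that both are essential cycles (hence both enclose the center) would force one to lie inside the other. I would then form the subgraph $H = C + C'$, let $g$ denote its central face, and analyze the maximal common subpaths of $C$ and $C'$ that appear on the boundary $\partial g$, exactly as in the proof of Lemma~\ref{lem:increasing_decreasing_disjoint}.

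For any such maximal common subpath $\pi$ and every edge $e$ on $\pi$, Proposition~\ref{lem:repr:equal_labels_at_intersection} gives $\ell_C(e) = \ell_{C'}(e)$. Since $C'$ is increasing, $\ell_{C'}(e) \le 0$, whereas $C$ is cascading with labels $\ge 0$ on $Q$ and exactly $-1$ on the negative path $P$. Hence the common label must be $0$ (and $e$ lies on $Q$) or $-1$ (and $e$ lies on $P$). This restricts the structure of the common subpaths dramatically and is the analogue, for cascading cycles, of the property exploited in Lemma~\ref{lem:increasing_decreasing_disjoint} for decreasing cycles.

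At a divergence endpoint $v$ of such a common subpath, the proof of Lemma~\ref{lem:increasing_decreasing_disjoint} (which invokes Lemma~13 of~\cite{bnrw-ttsmford-17-arxiv}) uses a local rotation argument to force the edge of the decreasing cycle leaving $v$ strictly into the exterior of the increasing cycle, and dually, the edge entering the next common subpath at some vertex $w'$ strictly into its interior. Following $C$ between these two vertices then produces a crossing with $C'$, contradicting planarity. I would transfer this argument with $C$ in place of the decreasing cycle, which is legitimate whenever the diverging edge of $C$ at $v$ has non-negative label, i.e., whenever $v$ lies on the non-negative path $Q$.

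The main obstacle is therefore the case analysis at divergence points lying on $P$. If $v$ is an \emph{internal} vertex of $P$, the cascading hypothesis forces every edge at $v$ that is not on $C$ to lie on one fixed side of $C$; in particular $C'$ cannot both enter and leave $v$ on different sides of $C$, so a divergence at $v$ is impossible. If $v$ is an \emph{endpoint} of $P$, then $v$ is simultaneously an endpoint of $Q$, so one incident edge of $C$ at $v$ still has non-negative $C$-label, and the rotation argument of Lemma~13 applies using that edge. This exhausts all cases and yields the contradiction, establishing that $C$ and $C'$ must be nested.
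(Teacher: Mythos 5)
Your plan is a genuinely different route from the paper's. The paper's argument stays local: it picks a single transition vertex $w$ on the boundary of the central face $g$ of $C+C'$ at which a $C$-only edge $vw$ is immediately followed by a $C'$-edge $wx$, observes that the $C'$-edge $uw$ entering $w$ must lie strictly in the exterior of $g$ so that $vwx$ cannot make a left turn, deduces from this and Proposition~\ref{lem:repr:equal_labels_at_intersection} the chain $-1 \le \ell_C(vw) = \ell_g(vw) \le \ell_g(wx) = \ell_{C'}(wx) \le 0$, and then dispatches the three resulting sign patterns $(-1,-1)$, $(0,0)$ and $(-1,0)$, calling on the cascading one-sidedness exactly where the label bounds alone do not close the case. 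You instead attempt to export the global crossing argument from the proof of Lemma~\ref{lem:increasing_decreasing_disjoint} wholesale, handling the negative path $P$ by a case split on where the divergence point sits.

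That transplant has gaps. Your claim that a divergence at an internal vertex $v$ of $P$ is impossible is a non sequitur: the cascading hypothesis says only that all non-$C$ edges at $v$ lie on one side of $C$, which forbids a crossing at $v$, not the termination of a maximal common subpath there. Such a subpath can perfectly well end at an internal vertex of $P$, and then the diverging $C$-edge there has label $-1$, outside the range where a Lemma~13-type argument is available. Your handling of an endpoint $v$ of $P$ is likewise too coarse: saying that ``one incident edge of $C$ at $v$ still has non-negative label'' does not help, because the crossing argument needs the label of the specific $C$-edge entering (or leaving) the common subpath, and depending on orientation that edge may be the $P$-edge with label $-1$. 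Finally, you never address the first crossing point $x$ that the proof of Lemma~\ref{lem:increasing_decreasing_disjoint} uses to close the contradiction; the analogous step here again invokes Lemma~13 at $x$, and nothing in your plan rules out $x$ lying on $P$. The paper avoids all three obstructions by reasoning locally from the single transition vertex $w$ on the central face, which makes the cascading one-sidedness enter at precisely one controlled spot.
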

\begin{proof}
  We assume without loss of generality that $C$ is an outer cascading cycle. 
  The 
  case that it is an inner cascading cycle can be handled by flipping the 
  cylinder, which exchanges the exterior and interior of essential cycles but 
  keeps the labels.
  
  \begin{figure}
    \centering
    \begin{minipage}[b]{0.39\textwidth}
      \begin{subfigure}[b]{0.9\textwidth}
        \centering
        \includegraphics[]{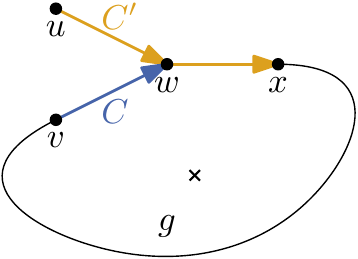}
        \subcaption{The central face $g$.}
        \label{fig:cascading_increasing-g}
      \end{subfigure}    
    \end{minipage}
    \begin{minipage}[b]{0.60\textwidth}
      \begin{subfigure}[b]{0.49\textwidth}
        \centering
        \includegraphics[page=1]{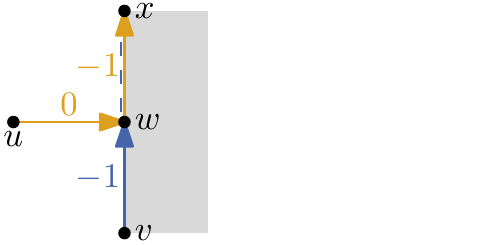}
        \subcaption{The labels are $-1$. }
        \label{fig:cascading_increasing-vertical}
      \end{subfigure}    
      \begin{subfigure}[b]{0.49\textwidth}
        \centering
        \includegraphics[page=2]{figures/cascading_increasing_cases.pdf}
        \subcaption{The labels are $0$. }
        \label{fig:cascading_increasing-horizontal}
      \end{subfigure}
      
      \begin{subfigure}[b]{\textwidth}
        \centering
        \includegraphics[page=3]{figures/cascading_increasing_cases.pdf}
        \subcaption{$\ell_C(vw) = -1$ and $\ell_{C'}(wx)=0$. }
        \label{fig:cascading_increasing-turn}
      \end{subfigure}
    \end{minipage}
    \caption{A common vertex $w$ on the central face $g$ of the subgraph formed 
      by the cascading cycle $C$ and the increasing cycle $C'$ and possible 
      labels 
      of the edges incident to $w$.}
  \end{figure}
  
  Let $g$ be the central face of the subgraph formed by the cycles $C$ and 
  $C'$. 
  If $g$ is neither $C$ nor $C'$, there are edges $vw$ and $wx$ on $g$ such 
  that 
  $vw$ lies on $C$ but not $C'$, and $wx$ lies on $C'$; see 
  Figure~\ref{fig:cascading_increasing-g}. Let 
  $uw$ be the edge on $C'$ entering $w$. By construction, $uw$ lies strictly in 
  the exterior of $g$. Hence, $vwx$ cannot make a left turn at $w$ and therefore
  $\ell_g(vw) \le \ell_g(wx)$. Combining this with 
  Proposition~\ref{lem:repr:equal_labels_at_intersection} and the bounds for the
  labels on $C$ and $C'$ we get
  \begin{equation*}
  -1 \le \ell_C(vw) = \ell_g(vw) \le \ell_g(wx) = \ell_{C'}(wx) \le 0.
  \end{equation*}
  There are three cases for the labels $\ell_C(vw)$ and $\ell_{C'}(wx)$:
  Either both are $-1$, both are $0$, or $\ell_C(vw)=-1$ and $\ell_{C'}(wx)=0$.
  
  If both labels are $-1$, the edge on $C$ after $w$ is $wx$; see 
  Figure~\ref{fig:cascading_increasing-vertical}. It cannot be $wu$ 
  because then the label of $wu$ would be $-2$. Hence, $w$ is an internal 
  vertex 
  of the negative path of $C$. But $uw$ lies in the exterior of $C$ contradicting that 
  $C$ is an outer cascading cycle.
  
  If both labels are $0$, the edge $uw$ must point down and therefore 
  $\ell_{C'}(uw)=1$, which contradicts that $C'$ is increasing; see 
  Figure~\ref{fig:cascading_increasing-horizontal}.
  
  Hence, $\ell_C(vw)=-1$ and $\ell_{C'}(wx) = 0$; see 
  Figure~\ref{fig:cascading_increasing-turn}. As before $uw$ cannot point 
  down, which implies that it points right. The edge after $w$ on $C$ does not 
  point left because it would have label $-2$. It does not point up since then 
  $w$ would be an internal vertex of the negative path, and we get a 
  contradiction as in the first case. Thus, $wx$ lies on $C$ and $w$ is the 
  endpoint of the negative path $P$ of $C$. Therefore, there is a common path 
  of 
  $C$ and $C'$ starting at $w$ and ending at a vertex $y$. Since $C$ is not 
  monotone, it has an edge with a positive label and hence $y$ does not lie on 
  $P$. Therefore, the edge on 
  $C$ after $y$ has a non-negative label and the edge~$yz$ on $C'$ after $y$ 
  has 
  a non-positive label. By Lemma~13 in Reference~\cite{bnrw-ttsmford-17-arxiv}, 
  the edge $yz$ lies in the exterior of $C$.
  In total, this shows that no part of $C'$ lies strictly in the interior of 
  $C$ and 
  therefore $g=C$.
\end{proof}

Applying this lemma to the situation of $C_\T$ we prove that $\Gamma_T$ does 
not contain any increasing cycles. Together with 
Lemma~\ref{lem:second-phase:no-decreasing-cycle-on-T} this yields that 
$\Gamma_T$ is valid.

\begin{lemma}\label{lem:second-phase:no-increasing-cycle-on-T}
  There is no increasing cycle in $\Gamma_T$.
\end{lemma}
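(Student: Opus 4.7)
My plan is to derive a contradiction from the assumption that $\Gamma_T$ contains an increasing cycle $C'$. By Lemma~\ref{lem:second-phase:no-monotone-cycle-on-R} the representation $\Gamma_R$ is valid, so $C'$ must contain at least one edge of the newly inserted path $T$. The key structural observation I will exploit is that the internal vertices $t_2, t_3, t_4$ of $T$ have degree exactly~$2$ in $\Gamma_T$, since they are new vertices incident only to edges of $T$. Consequently, any cycle using an edge of $T$ is forced through these degree-$2$ vertices and must therefore traverse all of $T$ consecutively; hence $T \subseteq C'$ as a subpath.

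Next I would argue that $C'$ and $C_\T$ traverse $T$ in the same direction, namely $t_1 \to t_5$. Both are essential cycles, oriented clockwise so that the center lies locally to their right; since the center is a global reference point and lies on exactly one side of $T$ locally, both interiors must sit on that same side, which forces the directions to agree. I would then consider the subgraph $H = C' + C_\T$ together with its central face~$g$. Because $t_3$ still has degree~$2$ in $H$, it is incident to exactly two faces of $H$---the two local sides of $T$ at $t_3$---and the one containing the center is $g$. In particular $t_3$ lies on the boundary of~$g$, so Proposition~\ref{lem:repr:equal_labels_at_intersection} applied to the common directed edge $t_3t_4$ gives $\ell_{C'}(t_3t_4) = \ell_{C_\T}(t_3t_4)$.

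The proof of Lemma~\ref{lem:second-phase:cascading-cycles-CT} already computes the label sequence along $R[u, r_4] + T$ on $C_\T$, yielding $\ell_{C_\T}(t_3t_4) = 1$ regardless of whether $r_2r_3$ points right or downwards. Since $C'$ is increasing all its labels must be non-positive, contradicting $\ell_{C'}(t_3t_4) = 1$. The main obstacle I anticipate is making the orientation-matching step fully rigorous: confirming that two clockwise essential cycles that both contain the center in their interior and share the path $T$ must traverse $T$ in the same direction. This should follow cleanly from the convention that the interior of a cycle is the locally-right side of its clockwise traversal together with the uniqueness of the center's side relative to $T$, but it is the one place in the argument that depends on the ortho-radial (rather than purely combinatorial) picture and therefore warrants care.
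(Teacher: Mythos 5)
The overall skeleton of your argument matches the paper's — you want to conclude that both cycles carry the same label on $t_3t_4$, and then use $\ell_{C_\T}(t_3t_4)=1$ to refute that $C'$ is increasing. The paper's proof does exactly this. But there is a genuine gap in the step where you claim that $t_3$ lies on the central face $g$ of $H=C'+C_\T$ (and the same issue infects the ``same direction'' step you flagged yourself). You argue that because $t_3$ has degree $2$ in $H$ it is incident to only two faces, and you then assert that the one containing the center is $g$. That assertion is not justified: having only two incident faces at $t_3$ does not force the center to lie in either of them. The center lies in $\operatorname{int}(C')\cap\operatorname{int}(C_\T)$, but that intersection may be subdivided into several faces of $H$ if the two cycles interweave away from $T$; the face of $H$ that locally borders $T$ at $t_3$ may be a shallow lens between the two cycles, with the central face lying strictly further in and not touching $T$ at all. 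This is precisely the scenario that has to be excluded before Proposition~\ref{lem:repr:equal_labels_at_intersection} can be applied to $t_3t_4$.

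The paper closes this gap with Lemma~\ref{lem:cascading_and_increasing_cycles}: because $C_\T$ is a cascading cycle (Lemma~\ref{lem:second-phase:cascading-cycles-CT}) and $C'$ is assumed increasing, one of the two cycles lies entirely in the closed interior of the other. Hence the central face $g$ is bounded by $C'$ alone or by $C_\T$ alone, so $T$ (or $\reverse{T}$) lies on $g$. Only then does the orientation argument go through cleanly (if the two cycles used $T$ in opposite directions, $g$ would lie in the exterior of one of them, contradicting essentiality), and only then is $t_3$ on $g$ so that the label-equality proposition applies. Your degree-$2$ observation correctly shows that $T$ is traversed as a contiguous subpath by any cycle through it — that part is fine and matches the paper implicitly — but it cannot replace the nesting statement of Lemma~\ref{lem:cascading_and_increasing_cycles}, which is the load-bearing structural ingredient here. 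Without invoking that lemma (or reproving its content), the proof does not go through.
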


\begin{proof}
  Assume that $\Gamma_T$ contains an increasing cycle $C$, which uses $T$ in 
  any direction. Lemma~\ref{lem:cascading_and_increasing_cycles} implies that 
  the central face $g$ of the subgraph formed by the two essential cycles 
  $C$ and $C_\T$ is either $C$ or $C_\T$.
  In particular, $T$ or $\reverse{T}$ lies on $g$. Hence, both $C$ and $C_\T$ 
  use $T$ in the same direction as otherwise $g$ would lie in the exterior of 
  one of these cycles. But this would contradict that they are essential.
  Hence, they both contain $T$ in this direction and $T$ also lies on $g$. 
  By Proposition~\ref{lem:repr:equal_labels_at_intersection} both
  cycles have the same labels on $T$. Since $\ell_{C_\T}(t_1t_2)=-1$ by 
  Lemma~\ref{lem:second-phase:cascading-cycles-CT},
  we obtain $\ell_C(t_3t_4)=\ell_{C_\T}(t_3t_4)=1$. 
  Consequently,
  $C$ is not an increasing cycle.  
\end{proof}

Hence, $\Gamma_T$ is valid. We analogously prove the validity of
$\Gamma_B$ as for $\Gamma_T$.
Let $C_2$ be the outermost decreasing cycle that
in $\Gamma^{u}_{e_{k-1}}$ and let $ub_5$
be the newly inserted edge in $\Gamma^{u}_{e_{k-1}}$. We replace
$ub_5$ by $R[u,r_5]+B$ obtaining the cycle $C_\B$, which is
well-defined because $C_2$ uses $ub_5$ in that direction by
Lemma~\ref{lem:label_first_candidate}.

\begin{lemma}\label{lem:second-phase:cascading-cycles-CB}
  $C_\B$ is a cascading cycles no matter whether $r_2r_3$ points to
  the right or downwards. In particular, $b_1b_2$ is the negative
  path of $C_\B$.
\end{lemma}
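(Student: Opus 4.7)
My plan is to mirror the proof of Lemma~\ref{lem:second-phase:cascading-cycles-CT} almost verbatim, with $C_2$, $e_{k-1}$, and the inserted edge $ub_5$ playing the roles of $C_1$, $e_1$, and $ut_5$, respectively, and with the path $R[u,r_5]+B$ replacing $R[u,r_4]+T$. The only substantive change is that the label $\ell_{C_2}(ub_5)=0$ will be supplied by Lemma~\ref{lem:label_any_candidate} rather than by Lemma~\ref{lem:label_first_candidate}. This substitution is legitimate because the first-phase algorithm has already certified that $\Gamma^u_{e_1}$ contains a decreasing cycle (the validity test failed for $e_1$, since $k\ge 4$). With $e_1$ playing the role of the first candidate and $e_{k-1}$ the other candidate, Lemma~\ref{lem:label_any_candidate} gives that the outermost decreasing cycle $C_2$ of $\Gamma^u_{e_{k-1}}$ traverses $ub_5$ in the prescribed direction and assigns it the label $0$.

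Having fixed $\ell_{C_2}(ub_5)=0$, I would pick an elementary path $P$ from $e^\star$ to a vertex $v$ of $C_2$. Since $C_2$ itself uses $ub_5$, such a $P$ avoids $ub_5$ and therefore is also present and elementary for $C_\B$ in $\Gamma_B$; the only structural change between the two cycles is that the single edge $ub_5$ has been replaced by the path $R[u,r_5]+B$. Because $r_1r_2$ and $ub_5$ both leave $u$ rightwards and the cycle prefixes from $v$ up to $u$ coincide, one gets $\ell_{C_\B}(r_1r_2)=\ell_{C_2}(ub_5)=0$.

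The heart of the argument is then to accumulate rotations vertex by vertex along $R[u,r_5]+B$. A direct case distinction on the direction of $r_2r_3$ yields the label sequences $0,0,0,0,-1,0,1,0$ when $r_2r_3$ points right and $0,1,0,0,-1,0,1,0$ when $r_2r_3$ points down; in either case $b_1b_2$ is the unique edge of $R[u,r_5]+B$ with a negative label, while $b_3b_4$ carries the positive label $1$, showing that $C_\B$ is not monotone. Furthermore, for every edge $e$ common to $C_2$ and $C_\B$ the total rotation along $R[u,r_5]+B$ equals that along $ub_5$ (both sum to $0$), so the subpaths from $v$ to $e$ along the two cycles have equal rotation, whence $\ell_{C_\B}(e)=\ell_{C_2}(e)\ge 0$ since $C_2$ is decreasing. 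Therefore $b_1b_2$ is the only edge of $C_\B$ with a negative label; its negative path consists of a single edge and has no internal vertices, so the inner/outer condition is vacuous. This establishes that $C_\B$ is a cascading cycle with negative path $b_1b_2$. The only delicate point is the rotation bookkeeping on $R[u,r_5]+B$ in both sub-cases; beyond that, I do not foresee a real obstacle.
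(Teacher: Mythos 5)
Your proposal is correct and matches the argument the paper intends: the paper explicitly omits this proof as being the same adaptation of Lemma~\ref{lem:second-phase:cascading-cycles-CT} that you carry out, and your label sequences $0,0,0,0,-1,0,1,0$ and $0,1,0,0,-1,0,1,0$ on $R[u,r_5]+B$ are the right bookkeeping. Your one deviation is an improvement: invoking Lemma~\ref{lem:label_any_candidate} (with $e_1$ as first candidate and $e_{k-1}$ as the other) to get $\ell_{C_2}(ub_5)=0$ is indeed the correct justification, since $e_{k-1}$ is generally not the first candidate and the paper's citation of Lemma~\ref{lem:label_first_candidate} at that point is imprecise.
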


We omit the proof since it uses the same arguments as the proof
of~Lemma~\ref{lem:second-phase:cascading-cycles-CT}. Using similar arguments as 
in the proofs of Lemmas~\ref{lem:second-phase:no-decreasing-cycle-on-T} 
and~\ref{lem:second-phase:no-increasing-cycle-on-T}, we obtain that $\Gamma_B$ 
is valid.

\begin{lemma}\label{lem:second-phase:no-monotone-cycle-on-B}
 The ortho-radial representation $\Gamma_B=\Gamma_1$ is valid.
\end{lemma}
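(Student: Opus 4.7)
The plan is to mirror the two-step argument that established the validity of $\Gamma_T$, substituting $B$, $C_\B$, and $b_1b_2$ for $T$, $C_\T$, and $t_1t_2$. Since $\Gamma_T$ is already known to be valid and $\Gamma_B=\Gamma_T+B$ differs from it only by the path $B$, any monotone cycle in $\Gamma_B$ must use the path $B$, and the cascading cycle $C_\B$ supplied by Lemma~\ref{lem:second-phase:cascading-cycles-CB} with negative edge $b_1b_2$ will play the role that $C_\T$ played for $\Gamma_T$.

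First I would rule out decreasing cycles exactly as in the proof of Lemma~\ref{lem:second-phase:no-decreasing-cycle-on-T}. Suppose for contradiction that $\Gamma_B$ contains a decreasing cycle $C$; it must traverse $B$ in some direction. Consider the central face $g$ of $H=C+C_\B$. If $B$ lies on $g$ and $C,C_\B$ traverse it in opposite directions, then the central face would lie on both sides of $B$, contradicting that $H$ is biconnected; if they traverse $B$ in the same direction then Proposition~\ref{lem:repr:equal_labels_at_intersection} yields $\ell_C(b_1b_2)=\ell_{C_\B}(b_1b_2)=-1$, contradicting that $C$ is decreasing. If instead $B$ does not lie on $g$, then the cycle $C'$ bounding $g$ avoids $B$ entirely, so it consists of edges of $C$ together with edges of $C_\B\setminus B$; by Proposition~\ref{lem:repr:equal_labels_at_intersection} the labels on $C'$ coincide with those on $C$ and $C_\B$, and since $B$ carried the only negative label of $C_\B$, all labels on $C'$ are non-negative. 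Because $C$ shares a vertex with $C'$, Proposition~\ref{lem:rect:two_cycles_horizontal} forces some label to be strictly positive, so $C'$ is decreasing. But $C'$ does not use $B$, hence exists already in $\Gamma_T$, contradicting Lemma~\ref{lem:second-phase:no-decreasing-cycle-on-T}.

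Next I would rule out increasing cycles as in the proof of Lemma~\ref{lem:second-phase:no-increasing-cycle-on-T}. Assume $C$ is an increasing cycle in $\Gamma_B$; necessarily it uses $B$. Applying Lemma~\ref{lem:cascading_and_increasing_cycles} to the outer cascading cycle $C_\B$ and the increasing cycle $C$, either $C_\B$ lies in the interior of $C$ or vice versa, so the central face $g$ of $C+C_\B$ coincides with one of the two cycles. In either case $B$ lies on $g$, and $C$ and $C_\B$ must traverse $B$ in the same direction (otherwise $g$ would fall outside one of them, contradicting essentiality). Applying Proposition~\ref{lem:repr:equal_labels_at_intersection} on $B$ and using $\ell_{C_\B}(b_1b_2)=-1$ from Lemma~\ref{lem:second-phase:cascading-cycles-CB}, the standard angle-sum along $B$ forces $\ell_C(b_3b_4)=\ell_{C_\B}(b_3b_4)=1$, contradicting that $C$ is increasing.

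Combining the two steps gives the lemma. The main obstacle is not really novel: it is the same as before, namely verifying that the case split on whether $B$ lies on the central face $g$ of $C+C_\B$ truly covers all configurations and that the label transfer via Proposition~\ref{lem:repr:equal_labels_at_intersection} applies on $g$. Since the geometric role of $B$ inside $C_\B$ is identical to that of $T$ inside $C_\T$ (a single negative vertical edge sandwiched between rightward edges), the arguments carry over verbatim, which is presumably why the authors omit the proof.
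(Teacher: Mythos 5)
Your proof is correct and matches the paper's intended approach: the paper explicitly omits this proof, stating only that it uses the same arguments as Lemmas~\ref{lem:second-phase:no-decreasing-cycle-on-T} and~\ref{lem:second-phase:no-increasing-cycle-on-T} with $B$, $C_\B$, and $b_1b_2$ replacing $T$, $C_\T$, and $t_1t_2$, which is precisely what you carry out. One cosmetic slip: you call $C_\B$ an \emph{outer} cascading cycle, whereas the paper later classifies it as \emph{inner}; since at this stage the negative path of $C_\B$ consists of the single edge $b_1b_2$ with no internal vertices the distinction is vacuous, and Lemma~\ref{lem:cascading_and_increasing_cycles} applies to cascading cycles of either kind, so nothing breaks.
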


\textit{Step 2.} By
Lemma~\ref{lem:second-phase:no-monotone-cycle-on-B} the ortho-radial
representation $\Gamma_1$ of \textit{Step 1} is valid. We now prove
that $\Gamma_2$ is a valid ortho-radial representation. We use the
same notation as in the description of the algorithm.

Starting with the valid ortho-radial representation~$\Gamma_1$, the
procedure iteratively resolves ports in the face $f_1$, which locally
lies to the right of $T$. In case that we resolve a vertical port $u'$
in a representation $\Pi$, the resulting ortho-radial
representation~$\Pi^{u'}_{e'_1}$ is valid by
Fact~\ref{fact:vertical-candidate}, where $e'_1$ is the first
candidate of $u'$.  So assume that $u'$ is a horizontal port. In that
case we take $\Pi^{u'}_{e'_l}$ for the next iteration, where $e'_l$ is
the last candidate of $u'$. We observe that the augmentation of $f_1$
may subdivide edges on the negative paths of $C_\T$ and $C_\B$, but
the added edges lie in the interior of $C_\T$ and the exterior of
$C_\B$. Hence, $C_\T$ remains an outer cascading cycle and $C_\B$ an
inner cascading cycle.

\begin{lemma}
 The ortho-radial representation $\Pi^{u'}_{e'_l}$ is valid. 
\end{lemma}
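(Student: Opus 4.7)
The plan is to verify both validity conditions for $\Pi^{u'}_{e'_l}$. The absence of decreasing cycles is immediate from Fact~\ref{fact:first-last}, since $u'$ is a horizontal port and the augmentation uses its last candidate $e'_l$. The entire work therefore goes into ruling out increasing cycles.

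I would argue by contradiction. Suppose $\Pi^{u'}_{e'_l}$ contains an increasing cycle $C$. Since $\Pi$ is valid by the induction hypothesis on the previous iterations of Step~2, the cycle $C$ must contain the newly inserted edge $u'z'$, which lies strictly in the interior of $f_1$. Crucially, the cascading cycles $C_\T$ and $C_\B$ constructed in Step~1 remain cascading throughout the iterative augmentation of $f_1$: every edge inserted in $f_1$ lies strictly in the open region $\mathrm{int}(C_\T)\cap\mathrm{ext}(C_\B)$, so the labels along $C_\T$ and $C_\B$ are unchanged. Hence $u'z'$ itself lies strictly inside $C_\T$ and strictly outside $C_\B$.

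Next, I would apply Lemma~\ref{lem:cascading_and_increasing_cycles} to the pairs $(C_\T, C)$ and $(C_\B, C)$, obtaining that each pair is nested. Since $u'z'$ is an edge of $C$ sitting in the open interior of $C_\T$, the case ``$C_\T$ inside $C$'' would force $u'z'$ into the open interior of $C$, contradicting $u'z'\in C$; hence $C\subseteq\overline{\mathrm{int}(C_\T)}$. Dually, since $u'z'$ lies in the open exterior of $C_\B$, I get $C_\B\subseteq\overline{\mathrm{int}(C)}$. In other words, $C$ is wedged between $C_\B$ (inside $C$) and $C_\T$ (containing $C$).

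The final contradiction exploits the fact that $C_\T$ and $C_\B$ share the subpath $R[u,r_4]$, and in particular the vertex~$u$. Because this shared subpath lies simultaneously on both boundaries of the wedge in which $C$ lives, any essential cycle staying strictly inside the wedge is confined to a simply connected strip and cannot itself be essential. I expect the hard part to be formalizing why $C$ cannot exploit the shared vertices of $C_\T$ and $C_\B$: concretely, I would show that if $C$ passed through $u$ (or through another shared vertex along $R[u,r_4]$), then a case analysis of the rotations at that vertex---reusing the style of the argument in the proof of Lemma~\ref{lem:second-phase:no-increasing-cycle-on-T} together with Lemma~\ref{lem:increasing_decreasing_disjoint} applied to suitably derived cycles---produces a positively-labeled edge on $C$, contradicting its being an increasing cycle.
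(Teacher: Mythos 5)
Your high-level plan matches the paper's: rule out decreasing cycles by Fact~\ref{fact:first-last}, observe that any increasing cycle $C$ must use the new edge, argue that $C$ is wedged between the cascading cycles $C_\T$ and $C_\B$ via Lemma~\ref{lem:cascading_and_increasing_cycles}, and then derive a contradiction from how $C$ interacts with the structure shared by $C_\T$ and $C_\B$. Steps 1--3 of your argument are essentially the paper's.

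Where your sketch diverges is in the final contradiction, and this is precisely where the real work lies. Two concrete issues. First, the claim that ``any essential cycle staying strictly inside the wedge is confined to a simply connected strip and cannot itself be essential'' is imprecise: the region between $C_\T$ and $C_\B$ is still an annular region of the cylinder, and essential cycles certainly can live there --- the right observation is that because $\subpath{R}{r_1,r_4}$ lies on both $C_\T$ and $C_\B$, it is incident to both the outer and the central face of the wedge subgraph $H$, so removing it destroys every essential cycle; since $r_2,r_3$ have degree~2 in $H$, this forces $C$ to include the \emph{entire} subpath $\subpath{R}{r_1,r_4}$, not merely to pass through some shared vertex $u$. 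Second, your proposed mechanism for the contradiction --- a local rotation case analysis at a shared vertex, invoking Lemma~\ref{lem:increasing_decreasing_disjoint} ``applied to suitably derived cycles'' --- does not obviously apply: $C_\T$ and $C_\B$ are cascading cycles, not decreasing cycles, so Lemma~\ref{lem:increasing_decreasing_disjoint} is not available without substantial additional work, and the analogue Lemma~\ref{lem:cascading_and_increasing_cycles} only gives nesting, not disjointness. The paper instead uses Proposition~\ref{lem:repr:equal_labels_at_intersection} to transfer the label of $r_2r_3$ from $C_\T$ to $C$: if $r_2r_3$ points downward its label is $+1$, directly contradicting that $C$ is increasing; if $r_2r_3$ points right it lies on an all-zero essential cycle, and Proposition~\ref{lem:rect:two_cycles_horizontal} then forbids $C$ from being increasing. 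That second subcase is an ingredient your sketch does not anticipate, and without it the argument does not close.
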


\begin{proof}
  Assume that $\Pi^{u'}_{e'_l}$ is not valid. Hence, there is a
  monotone cycle $C$ that uses $e=u'z'$, where $z'$ is the vertex
  subdividing $e'_l$. Since $e'_l$ is the last candidate of $u'$, the
  cycle $C$ is increasing by Fact~\ref{fact:first-last}.
  By construction $e$ strictly lies in the interior of $C_\T$ and the exterior 
  of $C_\B$. This implies that $C$ lies in the interior of $C_\T$ and the 
  exterior of $C_\B$ by Lemma~\ref{lem:cascading_and_increasing_cycles}.
  In other words, $C$ is contained in the subgraph $H$ formed by the 
  intersection of the interior of $C_\T$ and the exterior of $C_\B$. As 
  $\subpath{R}{r_1,r_4}$ belongs to both $C_\T$ and $C_\B$, it is incident to 
  the outer and the central face of $H$. Hence, removing $\subpath{R}{r_1,r_4}$ 
  leaves a subgraph without essential cycles. Thus, the essential cycle $C$ 
  includes $\subpath{R}{r_1,r_4}$.
  
  By Proposition~\ref{lem:repr:equal_labels_at_intersection} the labels of $C$ 
  and $C_\T$ are the same on $\subpath{R}{r_1,r_4}$. If $r_2r_3$ points 
  downwards, its label is $1$, which contradicts that $C$ is increasing.
  If otherwise $r_2r_3$ points right, it lies on an essential cycle, where all 
  labels are $0$. But then $C$ is not increasing by 
  Proposition~\ref{lem:rect:two_cycles_horizontal}.
\end{proof}

Altogether, applying the lemma inductively on the inserted edges, we
obtain that $\Gamma_2$ is valid.

\textit{Step 3.}  As we only apply the first phase of the augmentation
step on $f_2$, the resulting ortho-radial
representation~$\Gamma_3$ is also valid due to the correctness of the
first phase. This concludes the correctness proof of the second phase.
\begin{lemma}\label{lem:second-phase:correctness}
  The second phase produces a valid ortho-radial representation
  $\Gamma_3$ such that all intermediate candidates of $u$ lie on
  rectangles in $\Gamma_3$, and there are only two more vertices
  becoming vertical ports and no more vertices becoming horizontal
  ports in $\Gamma_3$ than in $\Gamma_0$.
\end{lemma}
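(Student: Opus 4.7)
The plan is to establish the three assertions of the lemma in turn by assembling the step-wise results already proved in the preceding subsection, and then to handle the port-counting claim by a careful angle-by-angle inspection of the newly inserted vertices.

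For validity, I would chain the step-wise results step by step. Lemma~\ref{lem:second-phase:no-monotone-cycle-on-B} already gives that $\Gamma_1 = \Gamma_B$ is valid after Step~1. For Step~2, I would argue by induction on the number of iterations: each iteration either resolves a vertical port in $f_1$ (which preserves validity by Fact~\ref{fact:vertical-candidate}) or a horizontal port by augmenting with the last candidate (which preserves validity by the preceding lemma showing $\Pi^{u'}_{e'_l}$ is valid). Hence $\Gamma_2$ is valid. Step~3 only applies Phase~1 of the augmentation step, and Phase~1 alone is known to preserve validity (Section~\ref{sec:faster-validity-test-main}, via Lemmas~\ref{lem:label_first_candidate} and~\ref{lem:label_any_candidate} together with Fact~\ref{fact:consecutive-done}); starting from the valid $\Gamma_2$ it therefore yields a valid $\Gamma_3$.

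For the claim that all intermediate candidates of $u$ lie on rectangles, I would observe that Step~1 is set up precisely so that the face $f_1$ locally to the right of $T$ and to the left of $B$ contains the entire interval of intermediate candidates $e_2,\dots,e_{k-2}$ on its boundary in $\Gamma_1$. Step~2 iteratively resolves every port in $f_1$ until it is fully rectangulated, so in $\Gamma_2$ each such edge is incident to a rectangular face. Step~3 only modifies $f_2$ and the faces arising from its rectangulation, leaving the rectangles around the intermediate candidates untouched in $\Gamma_3$.

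For the port count, I would enumerate the vertices introduced by $R$, $T$, and $B$ and determine for each face which of them carry a concave angle. The vertex $u$ itself loses its concave angle since both of its new incident angles after inserting $R$ are $90\degree$. For the remaining new vertices, a case distinction on whether $r_2r_3$ points right or downwards shows that every concave angle at an internal vertex of $R$, $T$, or $B$ lies in either $f_1$ or $f_2$, except possibly at the subdivision vertices $t_5 \in e_1$ and $b_5 \in e_{k-1}$, which face a neighboring face on the opposite side. Every concave angle inside $f_1$ is removed by Step~2 and every concave angle inside $f_2$ is removed by Step~3, because both faces are fully rectangulated by the end of the second phase. The leftover concave angles are the (at most two) angles at $t_5$ and $b_5$ on the side away from $f_1$; by the geometry of the subdivided edges these are necessarily vertical, not horizontal, ports. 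The main obstacle is precisely this last case analysis: one must distinguish the two directions of $r_2r_3$ and verify, junction by junction at $r_1,r_4,r_5,r_6,t_1,\dots,t_5,b_1,\dots,b_5$, that no horizontal concave angle escapes Steps~2 and~3 unresolved, and that at most two new vertical ports survive into $\Gamma_3$.
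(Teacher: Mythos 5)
Your handling of the validity claim (chaining Lemma~\ref{lem:second-phase:no-monotone-cycle-on-B} for Step~1, Fact~\ref{fact:vertical-candidate} and the $\Pi^{u'}_{e'_l}$-lemma for Step~2, and Phase-1 correctness for Step~3) and of the intermediate-candidate claim (all of $e_2,\dots,e_{k-2}$ lie on the boundary of $f_1$, which Step~2 rectangulates and Step~3 leaves untouched) matches the paper's argument. However, the port-counting part contains a genuine error. You claim the surviving concave angles sit at the subdivision vertices $t_5\in e_1$ and $b_5\in e_{k-1}$, ``on the side away from $f_1$.'' But $t_5$ and $b_5$ are degree-3 vertices obtained by subdividing a vertical candidate edge and attaching a single horizontal edge, so their three incident angles are $90\degree$, $90\degree$, $180\degree$ --- there is no $270\degree$ angle at either of them, hence neither can ever become a port of any kind. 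The actual surviving concave angles, as the paper's analysis shows, are at $r_3$ (when $r_2r_3$ points downwards, so that $R$ makes a left turn from a vertical edge to a horizontal one) and at $b_4$ (where $B$ turns left from the downward edge $b_3b_4$ to the rightward edge $b_4b_5$), and these concave angles open into faces that are not fully rectangulated by Steps~2 and~3. Since at both vertices the incoming edge of the concave angle is vertical, they become vertical ports. The concave angle at the analogous vertex $t_4$ opens into $f_2$ and is therefore removed by Step~3, which is the asymmetry your case analysis overlooks. Your overall strategy (enumerate new vertices, show most concave angles fall in $f_1$ or $f_2$, bound the leftovers) is the right one, but the crucial identification of which vertices survive --- the whole point of the count --- is wrong as stated.
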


\paragraph*{Running Time.}
We now prove that the rectangulation algorithm has $O(n^2)$ running
time in total. We first prove that the resulting ortho-radial
representation has $O(n)$ vertices and edges, which implies that
$O(n)$ augmentation steps are executed. Afterwards we show that the
algorithm spends $O(n^2)$ time in total for executing all augmentation
steps.

Consider a single augmentation step that resolves a horizontal
port~$u$ of a face $f$. Let $K=B+T+R$ be the construction that is
inserted during the second phase, and, furthermore, let $f_1$ and
$f_2$ be the faces as defined above.  After the second phase, only
$r_3$ and $b_4$ of the newly inserted vertices have concave angles;
all other concave angles of newly inserted vertices are resolved in
the second phase by rectangulating $f_1$ and $f_2$. By construction
both $r_3$ and $b_4$ can become vertical but not horizontal ports
during the remaining procedure. Hence, we insert the construction $K$
only for vertices that already have existed in the input
instance. Moreover, the rectangulation algorithm considers $O(n)$
vertical ports in total. Hence, the algorithm yields an ortho-radial
representation with $O(n)$ vertices and edges.  This also implies that
$O(n)$ augmentation steps are executed.

In the remainder we show that the algorithm invests $O(n^2)$ running
time in total for the execution of all augmentation steps. In
particular, we argue that the algorithm needs $O(n^2)$ time for the
computation of the candidates of all considered ports and all applied validity
tests. Since the first and second phase of the augmentation step needs
$O(1)$ time without considering the time necessary for the validity
tests and the computation of the candidates, we finally obtain that
the algorithm runs in $O(n^2)$ time.

Since the rectangulation algorithm yields an ortho-radial
representation with $O(n)$ vertices and edges, $O(n)$ ports are
resolved and $O(n)$ different candidate edges are considered. Since
the algorithm computes for each port its candidates only once (namely
when the port is resolved), the algorithm spends $O(n^2)$ time in
total to compute all candidate edges of the ports.

We now bound the number of applied validity tests. Recall that we
only apply validity tests in the first phase of the augmentation step
and when rectangulating the face $f_2$ in the second phase. By
Lemma~\ref{lem:second-phase:correctness} each edge can be an
intermediate candidate for at most one vertex, which yields that there
are $O(n)$ intermediate candidates over all augmentation steps.
Finally, for each vertex there are at most three boundary candidates,
which yields $O(n)$ boundary candidates over all augmentation
steps. Assigning the validity tests to their candidates, we conclude
that the algorithm executes $O(n)$ validity tests overall.
Altogether, we obtain $O(n^2)$ running time for the rectangulation
algorithm.

In particular, using Corollary~19 from~\cite{bnrw-ttsmford-17-arxiv},
given a graph $G$ with valid ortho-radial representation $\Gamma$, a
corresponding ortho-radial drawing $\Delta$ can be computed in
$O(n^{2})$ time.

\section{Conclusion}
\label{sec:conclusion}

In this paper, we have described an algorithm that checks the validity
of an ortho-radial representation in $O(n^{2})$ time. In the positive
case, we can also produce a corresponding drawing in the same running
time, whereas in the negative case we find a monotone cycle.  This
answers an open question of Barth et al.~\cite{bnrw-ttsmford-17} and
allows for a purely combinatorial treatment of the bend minimization
problem for ortho-radial drawings.
It is an interesting open question whether the running time can be
improved to near-linear.  However, our main open question is how to
find valid ortho-radial representations with few bends.

\bibliography{manuscript}

\begin{thebibliography}{10}

\bibitem{Alam2017}
Md.~Jawaherul Alam, Stephen~G. Kobourov, and Debajyoti Mondal.
\newblock Orthogonal layout with optimal face complexity.
\newblock {\em Computational Geometry}, 63:40--52, 2017.

\bibitem{bnrw-ttsmford-17}
Lukas Barth, Benjamin Niedermann, Ignaz Rutter, and Matthias Wolf.
\newblock {Towards a Topology-Shape-Metrics Framework for Ortho-Radial
  Drawings}.
\newblock In Boris Aronov and Matthew~J. Katz, editors, {\em Computational
  Geometry (SoCG'17)}, volume~77 of {\em Leibniz International Proceedings in
  Informatics (LIPIcs)}, pages 14:1--14:16. Schloss Dagstuhl--Leibniz-Zentrum
  fuer Informatik, 2017.

\bibitem{bnrw-ttsmford-17-arxiv}
Lukas Barth, Benjamin Niedermann, Ignaz Rutter, and Matthias Wolf.
\newblock Towards a topology-shape-metrics framework for ortho-radial drawings.
\newblock {\em CoRR}, arXiv:1703.06040, 2017.

\bibitem{Batini1986}
Carlo Batini, Enrico Nardelli, and Roberto Tamassia.
\newblock A layout algorithm for data flow diagrams.
\newblock {\em IEEE Transactions on Software Engineering}, SE-12(4):538--546,
  1986.

\bibitem{Bertolazzi2000}
P.~Bertolazzi, G.~Di Battista, and W.~Didimo.
\newblock Computing orthogonal drawings with the minimum number of bends.
\newblock {\em IEEE Transactions on Computers}, 49(8):826--840, 2000.

\bibitem{Bhatt1984}
Sandeep~N. Bhatt and Frank~Thomson Leighton.
\newblock A framework for solving {VLSI} graph layout problems.
\newblock {\em Journal of Computer and System Sciences}, 28(2):300--343, 1984.

\bibitem{Biedl1996}
Therese Biedl.
\newblock New lower bounds for orthogonal graph drawings.
\newblock In Franz~J. Brandenburg, editor, {\em Graph Drawing (GD'96)}, Lecture
  Notes of Computer Science, pages 28--39. Springer Berlin Heidelberg, 1996.

\bibitem{Biedl1998}
Therese Biedl and Goos Kant.
\newblock A better heuristic for orthogonal graph drawings.
\newblock {\em Computational Geometry}, 9(3):159 -- 180, 1998.

\bibitem{Biedl1997}
Therese~C. Biedl, Brendan~P. Madden, and Ioannis~G. Tollis.
\newblock The three-phase method: A unified approach to orthogonal graph
  drawing.
\newblock In Giuseppe DiBattista, editor, {\em Graph Drawing (GD'97)}, Lecture
  Notes in Computer Science, pages 391--402. Springer Berlin Heidelberg, 1997.

\bibitem{Blasius2016}
Thomas Bl{\"a}sius, Ignaz Rutter, and Dorothea Wagner.
\newblock Optimal orthogonal graph drawing with convex bend costs.
\newblock {\em ACM Transactions on Algorithms}, 12(3):33, 2016.

\bibitem{Blasius2016b}
Thomas Bläsius, Sebastian Lehmann, and Ignaz Rutter.
\newblock Orthogonal graph drawing with inflexible edges.
\newblock {\em Computational Geometry}, 55:26 -- 40, 2016.

\bibitem{Chang2017}
Yi-Jun Chang and Hsu-Chun Yen.
\newblock On bend-minimized orthogonal drawings of planar 3-graphs.
\newblock In Boris Aronov and Matthew~J. Katz, editors, {\em Computational
  Geometry (SoCG'17)}, volume~77 of {\em Leibniz International Proceedings in
  Informatics (LIPIcs)}. Schloss Dagstuhl-Leibniz-Zentrum fuer Informatik,
  2017.

\bibitem{Cornelsen2012}
Sabine Cornelsen and Andreas Karrenbauer.
\newblock Accelerated bend minimization.
\newblock In Marc van Kreveld and Bettina Speckmann, editors, {\em Graph
  Drawing (GD'12)}, Lecture Notes of Computer Science, pages 111--122. Springer
  Berlin Heidelberg, 2012.

\bibitem{Eiglsperger2004}
Markus Eiglsperger, Carsten Gutwenger, Michael Kaufmann, Joachim Kupke, Michael
  J\"{u}nger, Sebastian Leipert, Karsten Klein, Petra Mutzel, and Martin
  Siebenhaller.
\newblock Automatic layout of uml class diagrams in orthogonal style.
\newblock {\em Information Visualization}, 3(3):189--208, 2004.

\bibitem{Eiglsperger2003}
Markus Eiglsperger, Michael Kaufmann, and Martin Siebenhaller.
\newblock A topology-shape-metrics approach for the automatic layout of uml
  class diagrams.
\newblock In {\em Software Visualization (SoftVis'03)}, pages 189--ff. ACM,
  2003.

\bibitem{Felsner2014}
Stefan Felsner, Michael Kaufmann, and Pavel Valtr.
\newblock Bend-optimal orthogonal graph drawing in the general position model.
\newblock {\em Computational Geometry}, 47(3, Part B):460--468, 2014.
\newblock Special Issue on the 28th European Workshop on Computational Geometry
  (EuroCG 2012).

\bibitem{Fink2013}
Martin Fink, Herman Haverkort, Martin N\"{o}llenburg, Maxwell Roberts, Julian
  Schuhmann, and Alexander Wolff.
\newblock Drawing metro maps using b\'{e}zier curves.
\newblock In W~Didimo and M~Patrignani, editors, {\em Graph Drawing (GD'13)},
  Lecture Notes in Computer Science, pages 463--474. Springer International
  Publishing, 2013.

\bibitem{Fossmeier1996}
Ulrich F{\"o}{\ss}meier and Michael Kaufmann.
\newblock Drawing high degree graphs with low bend numbers.
\newblock In Franz~J. Brandenburg, editor, {\em Graph Drawing (GD'96)}, Lecture
  Notes in Computer Science, pages 254--266. Springer Berlin Heidelberg, 1996.

\bibitem{Gutwenger2003}
Carsten Gutwenger, Michael J\"{u}nger, Karsten Klein, Joachim Kupke, Sebastian
  Leipert, and Petra Mutzel.
\newblock A new approach for visualizing {UML} class diagrams.
\newblock In {\em Symposium on Software Visualization (SoftVis'03)}, pages
  179--188, New York, NY, USA, 2003. ACM.

\bibitem{hhmt-rrdcp-10}
Madieh Hasheminezhad, S.~Mehdi Hashemi, Brendan~D. McKay, and Maryam Tahmasbi.
\newblock Rectangular-radial drawings of cubic plane graphs.
\newblock {\em Computational Geometry: Theory and Applications}, 43:767--780,
  2010.

\bibitem{hht-orthoradial-09}
Madieh Hasheminezhad, S.~Mehdi Hashemi, and Maryam Tahmasbi.
\newblock Ortho-radial drawings of graphs.
\newblock {\em Australasian Journal of Combinatorics}, 44:171--182, 2009.

\bibitem{Hong2006}
Seok-Hee Hong, Damian Merrick, and Hugo A.~D. do~Nascimento.
\newblock Automatic visualisation of metro maps.
\newblock {\em Journal of Visual Languages and Computing}, 17(3):203--224,
  2006.

\bibitem{Kieffer2016}
S.~Kieffer, T.~Dwyer, K.~Marriott, and M.~Wybrow.
\newblock Hola: Human-like orthogonal network layout.
\newblock {\em IEEE Transactions on Visualization and Computer Graphics},
  22(1):349--358, 2016.

\bibitem{Noellenburg2011}
Martin N\"{o}llenburg and Alexander Wolff.
\newblock Drawing and labeling high-quality metro maps by mixed-integer
  programming.
\newblock {\em Transactions on Visualization and Computer Graphics},
  17(5):626--641, 2011.

\bibitem{Papakostas1998}
Achilleas Papakostas and Ioannis~G. Tollis.
\newblock Algorithms for area-efficient orthogonal drawings.
\newblock {\em Computational Geometry}, 9(1):83--110, 1998.

\bibitem{Ruegg2014}
Ulf R{\"u}egg, Steve Kieffer, Tim Dwyer, Kim Marriott, and Michael Wybrow.
\newblock Stress-minimizing orthogonal layout of data flow diagrams with ports.
\newblock In Christian Duncan and Antonios Symvonis, editors, {\em Graph
  Drawing (GD'14)}, Lecture Notes in Computer Science, pages 319--330. Springer
  Berlin Heidelberg, 2014.

\bibitem{t-emn-87}
R.~Tamassia.
\newblock On embedding a graph in the grid with the minimum number of bends.
\newblock {\em Journal on Computing}, 16(3):421--444, 1987.

\bibitem{Tamassia1988}
Roberto Tamassia, Giuseppe {Di Battista}, and Carlo Batini.
\newblock Automatic graph drawing and readability of diagrams.
\newblock {\em {IEEE} Transactions on Systems, Man, and Cybernetics},
  18(1):61--79, 1988.

\bibitem{Tamassia1991}
Roberto Tamassia, Ioannis~G. Tollis, and Jeffrey~Scott Vitter.
\newblock Lower bounds for planar orthogonal drawings of graphs.
\newblock {\em Information Processing Letters}, 39(1):35 -- 40, 1991.

\bibitem{Valiant1981}
L.~G. Valiant.
\newblock Universality considerations in vlsi circuits.
\newblock {\em IEEE Transactions on Computers}, 30(02):135--140, 1981.

\bibitem{Wang2011}
Yu-Shuen Wang and Ming-Te Chi.
\newblock Focus+context metro maps.
\newblock {\em Transactions on Visualization and Computer Graphics},
  17(12):2528--2535, 2011.

\bibitem{Wybrow2010}
Michael Wybrow, Kim Marriott, and Peter~J. Stuckey.
\newblock Orthogonal connector routing.
\newblock In David Eppstein and Emden~R. Gansner, editors, {\em Graph Drawing
  (GD'10)}, Lecture Notes in Computer Science, pages 219--231. Springer Berlin
  Heidelberg, 2010.

\end{thebibliography}

\appendix

\newpage

\end{document}